\documentclass[11pt,a4paper]{article}
\usepackage{amssymb}
\usepackage{amsmath}
\usepackage{amsfonts}
\usepackage{dsfont}
\usepackage{amsthm}
\usepackage{mathrsfs}
\usepackage{hyperref}
\usepackage{color}
\usepackage[margin=2.41cm]{geometry}
\usepackage[all,cmtip]{xy}
\usepackage[utf8]{inputenc}
\usepackage{graphicx}
\usepackage{varwidth}
\usepackage{comment}

\usepackage{upgreek}
\usepackage{rotating}

\usepackage{tikz}
\usetikzlibrary{shapes.geometric}

\usepackage[shortlabels]{enumitem}



\definecolor{darkred}{rgb}{0.8,0.1,0.1}
\hypersetup{
     colorlinks=false,         
     linkcolor=darkred,
     citecolor=blue,
}

\theoremstyle{plain}
\newtheorem{theo}{Theorem}[section]

\newtheorem{propo}[theo]{Proposition}
\newtheorem{cor}[theo]{Corollary}

\theoremstyle{definition}
\newtheorem{defi}[theo]{Definition}

\newenvironment{ex}
  {\pushQED{\qed}\exx}
  {\popQED\endexx}

\newenvironment{rem}
  {\pushQED{\qed}\remm}
  {\popQED\endremm}

\numberwithin{equation}{section}

\def\nn{\nonumber}

\def\bbK{\mathbb{K}}
\def\bbR{\mathbb{R}}

\def\bbZ{\mathbb{Z}}

\def\Hom{\mathrm{Hom}}
\def\hom{\underline{\mathrm{hom}}}
\def\End{\mathrm{End}}

\def\Der{\mathrm{Der}}
\def\Sym{\mathrm{Sym}}

\def\id{\mathrm{id}}

\def\dd{\mathrm{d}}

\def\dim{\mathrm{dim}}
\def\1{\mathbf{1}}
\def\oone{\mathds{1}}
\def\op{\mathrm{op}}

\def\2AQFT{\mathbf{2AQFT}}

\def\Alg{\mathbf{Alg}}
\def\CAlg{\mathbf{CAlg}}

\def\DGVec{\mathbf{DGVec}}
\def\DGdgVec{\mathbf{DGdgVec}}

\def\dgCAlg{\mathbf{dgCAlg}}
\def\DGdgCAlg{\mathbf{DGdgCAlg}}
\def\DGCat{\mathbf{DGCat}}

\def\dR{\mathrm{dR}}

\def\DD{\mathbf{D}}

\def\Mod{\mathbf{Mod}}

\def\Tot{\mathrm{Tot}}

\def\AAA{\mathfrak{A}}

\def\AAAA{\boldsymbol{\mathfrak{A}}}

\def\O{\mathcal{O}}
\def\P{\mathcal{P}}

\def\T{\mathsf{T}}

\def\g{\mathfrak{g}}

\def\spec{\mathrm{Spec}\,}

\def\holim{\mathrm{holim}}
\def\colim{\mathrm{colim}}

\def\CE{\mathrm{CE}}

\def\s{\mathsf{s}}
\def\t{\mathsf{t}}
\def\E{\mathsf{E}}
\def\V{\mathsf{V}}

\def\Con{\mathrm{Con}}
\def\Gau{\mathcal{G}}

\newcommand\und[1]{\underline{#1}}
\newcommand\ovr[1]{\overline{#1}}
\newcommand\mycom[2]{\genfrac{}{}{0pt}{}{#1}{#2}}
\newcommand\an[1]{\langle #1\rangle}

\DeclareMathOperator*{\Motimes}{\text{\raisebox{0.25ex}{\scalebox{0.8}{$\bigotimes$}}}}
\DeclareMathOperator*{\Moplus}{\text{\raisebox{0.25ex}{\scalebox{0.8}{$\bigoplus$}}}}

\def\sk{\vspace{1mm}}

\makeatletter
\let\@fnsymbol\@alph
\makeatother

%


\title{%
Quantization of derived cotangent stacks\\
and gauge theory on directed graphs
}

\author{%
Marco Benini$^{1,2,a}$,  Jonathan P.~Pridham$^{3,b}$\ and\ 
Alexander Schenkel$^{4,c}$\vspace{4mm}\\
{\small ${}^1$ Dipartimento di Matematica, Universit\`a di Genova,}\\
{\small Via Dodecaneso 35, 16146 Genova, Italy.}\vspace{2mm}\\
{\small ${}^2$ INFN, Sezione di Genova,}\\
{\small Via Dodecaneso 33, 16146 Genova, Italy.}\vspace{2mm}\\
{\small ${}^3$ School of Mathematics,  University of Edinburgh, }\\
{\small Edinburgh EH9 3FD,  United Kingdom.}\vspace{2mm}\\
{\small ${}^4$ School of Mathematical Sciences, University of Nottingham,}\\
{\small University Park, Nottingham NG7 2RD, United Kingdom.}\vspace{4mm}\\
{\small \begin{tabular}{ll}
Email: & ${}^a$~\texttt{marco.benini@unige.it}\\
& ${}^b$~\texttt{j.pridham@ed.ac.uk}\\
& ${}^c$~\texttt{alexander.schenkel@nottingham.ac.uk}\vspace{2mm}
\end{tabular}
}
}

\date{May 2023}


\begin{document}

\maketitle

\vspace{-7mm}

\begin{abstract}
\noindent We study the quantization of the canonical unshifted Poisson structure on the derived cotangent stack $T^\ast[X/G]$ of a quotient stack, where $X$ is a smooth affine scheme with an action of a (reductive) smooth affine group scheme $G$. This is achieved through an {\'e}tale resolution of $T^\ast[X/G]$ by stacky CDGAs that allows for an explicit description of the canonical Poisson structure on $T^\ast[X/G]$ and of the dg-category of modules quantizing it. These techniques are applied to construct a dg-category-valued prefactorization algebra that quantizes a gauge theory on directed graphs.
\end{abstract}

\vspace{-3mm}

\paragraph*{Keywords:} derived algebraic geometry, derived cotangent stack, quotient stack, deformation quantization, lattice gauge theory
\vspace{-3mm}

\paragraph*{MSC 2020:} 14A30, 14D23, 53D55, 81Txx 
\vspace{-2mm}

\renewcommand{\baselinestretch}{0.75}\normalsize
\tableofcontents
\renewcommand{\baselinestretch}{1.0}\normalsize



\section{\label{sec:intro}Introduction and summary}
Derived algebraic geometry is a modern and powerful
geometric framework which plays an increasingly important
role both in the foundations of algebraic geometry and 
in mathematical physics. It introduces a refined concept 
of `space', the so-called {\em derived stacks}, that
is capable to describe correctly geometric situations
that are problematic in traditional approaches, such as 
non-transversal intersections and quotients by non-free group actions.
We refer the reader to \cite{Toen} for an excellent survey of this subject.
\sk

In the context of mathematical physics, derived algebraic
geometry provides the foundation and a natural home for 
the various cohomological methods that have been
developed over the past decades in the context of gauge field theory, 
such as the BRST, BV and BFV formalisms. The famous ghost
fields from these approaches can naturally be identified with the 
{\em stacky} structure of a derived stack, and the anti-fields 
give rise to a {\em derived} structure. Furthermore,
the anti-bracket admits a natural geometric interpretation
in terms of a general concept of shifted symplectic or Poisson 
structure on a derived stack, see \cite{DAG,DAG2,PridhamPoisson}.
Hence, it does not come as a surprise that ideas and concepts from
derived algebraic geometry play an essential role in modern mathematical 
formulations of quantum field theory, such as in the factorization algebra
program of Costello and Gwilliam \cite{CostelloGwilliam,CostelloGwilliam2} and in 
the homotopical algebraic quantum field theory program
\cite{BSWhomotopy,BSreview,LinearYM} that is being developed by two of us.
\sk

Most of these recent applications of derived algebraic geometry to 
mathematical physics however {\em do not} use the full power of 
this framework yet as they consider only formal (or, in a physical language, perturbative)
aspects of derived stacks. For instance, in the context of gauge theories,
they consider only the Lie algebra $\g$ of infinitesimal gauge transformations instead 
of the whole gauge group $G$, which often encodes non-trivial global features that are 
not visible at the level of Lie algebras. The framework of derived algebraic geometry is, 
in principle, fully capable to capture such non-perturbative and global features, 
which however requires mathematical techniques beyond 
the quite standard homological tools that are sufficient for perturbative studies.
This implies that one of the main challenges for, say, a mathematical physicist who 
would like to apply derived geometry to their work is to specialize and translate
these highly abstract concepts and techniques into a more concrete, and computationally accessible, 
form. This has already been achieved for a (rather limited) selection of concepts and constructions:
For example, the theory of shifted symplectic and Poisson structures on derived quotient stacks 
$[Y/G]$ has been made explicit in \cite{Yeung} and a global geometric generalization 
of the BV formalism (i.e.\ derived critical loci) to functions $f : [X/G]\to \bbR$
on quotient stacks has been worked out quite explicitly in \cite{BVgroup,AnelCalaque}.
\sk

The main purpose of this paper is to expand this rather short list of explicit constructions
in derived algebraic geometry by studying the global quantization of simple examples
of derived stacks. More concretely, we shall study the derived cotangent stack
$T^\ast[X/G]$ over a quotient stack, where $X$ is a smooth affine scheme with an action
of a (reductive) smooth affine group scheme $G$, and provide an explicit description of its
quantization along the canonical unshifted Poisson structure. 
This class of examples is not only motivated by its mathematical simplicity, but it
is also physically relevant as it contains the canonical phase spaces of mechanical systems with gauge symmetries.
(In physical language, $X$ is the configuration space, $G$ is the gauge group and the contangent bundle
$T^\ast$ introduces the canonical momenta.) 
Because such derived stacks are in general not affine, i.e.\ they cannot be described by a single differential
graded (dg) algebra, we have to construct a quantization in the sense of 
\cite{ToenQuant,PridhamUnshifted} of the dg-category of perfect modules
over $T^\ast[X/G]$. (Loosely speaking, one should think of this dg-category
as a kind of category of dg-vector bundles over $T^\ast[X/G]$.) 
Existence of such kinds of unshifted quantizations has been proven
via deformation theoretic techniques by one of us in \cite[Proposition 1.25]{PridhamUnshifted}.
The aim of the present paper is complementary to this abstract existence result 
in the sense that we will develop an explicit and computationally accessible model 
for the quantized dg-category associated with the derived stack $T^\ast[X/G]$ 
and its unshifted Poisson structure. The key technical tools that we use to compute
an explicit model for this quantized dg-category are the resolution techniques of 
derived quotient stacks by so-called {\em stacky CDGAs} that have 
been developed in \cite{PridhamPoisson,PridhamUnshifted}, and in \cite{DAG2}
under the name graded mixed CDGAs.
These tools allow us to phrase the quite abstract geometric 
concepts of Poisson structures and perfect modules on derived quotient stacks 
in terms of families of much simpler algebraic concepts on stacky CDGAs.
This will eventually allow us to break the global quantization problem in two steps: 
First, a family of local quantization problems at the level of stacky CDGAs is solved by 
constructing appropriate (noncommutative) dg-algebras of differential operators. 
Second, passing to the associated local quantized dg-categories 
and computing their homotopy limit yields the global quantized dg-category 
and hence solves the global quantization problem. 
\sk

The outline of the remainder of this paper is as follows: 
Section \ref{sec:prelim} provides a self-contained and (hopefully)
accessible introduction to the theory of stacky CDGAs and their relevance 
for computing {\'e}tale resolutions of derived quotient stacks.
We shall continuously emphasize the computational aspects of
this subject and provide fully explicit algebraic formulas wherever possible.
In Section \ref{sec:cotangent}, we specialize these techniques 
to a derived cotangent stack $T^\ast[X/G]$ and study its quantization along
the canonical unshifted Poisson structure. In more detail, we show in
Subsection \ref{subsec:sympred} that $T^\ast[X/G] \simeq [T^\ast X/\!\!/ G] = [\mu^{-1}(0)/G]$ 
can be computed in terms of a derived symplectic reduction (indicated by the symbol $/\!\!/$) 
and we spell out the resulting derived quotient stack $[\mu^{-1}(0)/G]$ in full detail. In Subsection 
\ref{subsec:cotangentresolution}, we specialize the 
resolution techniques from Subsection \ref{subsec:resolution}
to our example $T^\ast[X/G] \simeq [\mu^{-1}(0)/G]$ 
and  use this to describe the canonical Poisson structure on $T^\ast[X/G]$ 
in a fully explicit way. An explicit quantization in terms of differential operators
and $D$-modules of the individual stacky CDGAs that enter this resolution is 
described in Subsection \ref{subsec:localquantization}. In Subsection \ref{subsec:globalquantization},
we determine the resulting global quantization of $T^\ast[X/G]$ by computing explicitly
a certain homotopy limit of dg-categories over our resolution. The main result
is Proposition \ref{prop:quantdgCatquotientstack}, which provides a very explicit
and hands-on description of the quantization of the dg-category of perfect modules
over $T^\ast[X/G]$ along its canonical unshifted Poisson structure.
The aim of Section \ref{sec:lattice} is to apply our constructions and 
results to a problem arising from physics, namely the quantization of a 
certain gauge theory on directed graphs. The quantization of this theory
via traditional cohomological methods (that
treat only infinitesimal gauge symmetries) has been studied previously in \cite{Pflaum}, 
and the new aspect of our work is to provide a non-perturbative
and global (albeit still formal in $\hbar$) quantization 
of the relevant phase spaces. We further arrange the resulting quantized dg-categories associated with all directed graphs 
into a prefactorization algebra over a suitable category of directed graphs.
This provides an interesting toy-model for a quantum gauge theory in which the gauge symmetries 
are treated non-perturbatively.


\section{\label{sec:prelim}Preliminaries}
We recall some relevant definitions and constructions
from the theory of stacky CDGAs that are needed for this paper.
We refer the reader to \cite{PridhamPoisson} and \cite{PridhamUnshifted}
for the details, and also to \cite{PridhamOverview} for an overview. 
Throughout the paper, we shall fix
a field $\bbK$ of characteristic $0$. 

\subsection{\label{subsec:stackyCDGA}Stacky CDGAs}
Stacky CDGAs are bigraded objects carrying both derived and stacky structures.
Following the conventions in \cite{PridhamPoisson,PridhamUnshifted,PridhamOverview}, we 
shall use chain complexes (i.e.\ homological degree conventions) 
for the derived structures and cochain complexes (i.e.\ cohomological degree conventions)
for the stacky structures.
\begin{defi}\label{def:bicomplex}
A {\em chain-cochain bicomplex} $V^\bullet_\bullet$ is a
family $\{V^i_j\}^{}_{i,j\in\bbZ}$ of $\bbK$-vector spaces
together with a chain differential $\partial : V^i_j\to V^i_{j-1}$
and a cochain differential $\delta : V^{i}_j\to V^{i+1}_j$ satisfying
\begin{flalign}
\partial^2 = 0~~,\quad\delta^2 =0~~,\quad \partial\delta + \delta\partial=0\quad.
\end{flalign}
A morphism $f:V^\bullet_\bullet \to W^\bullet_\bullet$ 
of chain-cochain bicomplexes is a family 
$\{f^i_j : V^i_j\to W^i_j\}^{}_{i,j\in\bbZ}$ of $\bbK$-linear maps
that commutes with both differentials, i.e.\ $\partial\, f = f\,\partial$ and $\delta\,f = f\,\delta$.
We denote the category of chain-cochain bicomplexes by $\DGdgVec$.
\end{defi}
\begin{rem}\label{rem:DGdgVecSMcat}
We work with {\em anti}-commuting differentials
in order to simplify the description of the closed symmetric monoidal structure on $\DGdgVec$.
The tensor product of $V^\bullet_\bullet,W^\bullet_\bullet\in \DGdgVec$ is then given by
\begin{subequations}
\begin{flalign}
\big(V^\bullet_\bullet\otimes W^\bullet_\bullet\big)^k_l \,=\, \bigoplus_{i,j\in\bbZ} V^i_j\otimes W^{k-i}_{l-j}
\end{flalign}
together with the differentials
\begin{flalign}
\partial(v\otimes w)\,&=\,\partial v\otimes w + (-1)^{\vert v\vert} \,v\otimes\partial w\quad,\\
\delta(v\otimes w)\,&=\,\delta v\otimes w + (-1)^{\vert v\vert} \,v\otimes\delta w\quad,
\end{flalign}
\end{subequations}
where $\vert v\vert := i-j$ denotes the total degree of $v\in V^i_j$.
(Note that we count cochain degrees positively and chain degrees negatively.)
The monoidal unit is $\bbK$ concentrated in bidegree $\mycom{0}{0}$
and the symmetric braiding is given by the Koszul sign-rule with respect
to the total degrees, i.e.
\begin{flalign}
V^\bullet_\bullet \otimes W^{\bullet}_\bullet~\longrightarrow~
W^\bullet_\bullet \otimes V^{\bullet}_\bullet~~,\quad
v\otimes w~\longmapsto~(-1)^{\vert v\vert\,\vert w\vert}~w\otimes v\quad.
\end{flalign}
The internal hom object for $V^\bullet_\bullet,W^\bullet_\bullet\in \DGdgVec$
is given by
\begin{subequations}
\begin{flalign}
\hom(V^\bullet_\bullet,W^\bullet_\bullet)^k_l \,=\,\prod_{i,j\in\bbZ} 
\mathrm{Lin}\big(V^i_j,W^{i+k}_{j+l}\big)\quad,
\end{flalign}
where $\mathrm{Lin}$ denotes the vector space of linear maps, together with the `adjoint' differentials
\begin{flalign}
\partial (L)\,&=\,\big\{\partial\,L^i_j - (-1)^{\vert L\vert } \, L^{i}_{j-1}\,\partial \,:\,V^i_j\to W^{i+k}_{j+l-1} \big\}^{}_{i,j\in\bbZ}\quad,\\
\delta (L)\,&=\,\big\{\delta\,L^i_j - (-1)^{\vert L\vert } \, L^{i+1}_{j}\,\delta \,:\,V^i_j\to W^{i+k+1}_{j+l} \big\}^{}_{i,j\in\bbZ}\quad,
\end{flalign}
\end{subequations}
for all $L=\big\{L^i_j \,:\,V^i_j\to W^{i+k}_{j+l} \big\}^{}_{i,j\in\bbZ}
\in \hom(V^\bullet_\bullet,W^\bullet_\bullet)^k_l$, where $\vert L\vert =k-l$ denotes the total degree.
\end{rem}

The category $\DGdgVec$ of chain-cochain bicomplexes can be endowed with 
a cofibrantly generated model structure
in which the fibrations are bidegree-wise surjections and
the weak equivalences are level-wise quasi-isomorphisms,
i.e.\ morphisms $f : V^\bullet_\bullet \to W^\bullet_\bullet$ 
such that $f^i : V^i_\bullet \to W^i_\bullet$ is a quasi-isomorphism
of chain complexes, for each $i\in \bbZ$. See \cite[Lemma 3.4]{PridhamPoisson}.
Such weak equivalences are, in general, neither preserved by the sum-totalization
$\Tot^\oplus$ nor by the product-totalization $\Tot^{\Pi}$ functor.
However, they are preserved by the following construction.
\begin{defi}\label{def:totalization}
The semi-infinite totalization (or Tate realization) functor $\hat{\Tot} : \DGdgVec \to\DGVec$ assigns
to a chain-cochain bicomplex $V^\bullet_\bullet\in\DGdgVec$ the 
cochain sub-complex $\hat{\Tot}\,V^\bullet_\bullet \subseteq \Tot^\Pi\, V^\bullet_\bullet$
given by
\begin{flalign}\label{eqn:totalization}
(\hat{\Tot} \, V^\bullet_\bullet)^m\,:=\,\Big( \bigoplus_{i<0} V^i_{i-m}\Big)\oplus\Big(\prod_{i\geq 0} V^i_{i-m}\Big)\,\subseteq ~\prod_{i\in\bbZ} V^i_{i-m}
\end{flalign} 
and the total differential $\hat{\dd} := \partial + \delta$. 
\end{defi}
\begin{rem}
By definition, an element of $(\hat{\Tot} \, V^\bullet_\bullet)^m$ is given by a family
of elements $v=\{v^i\in V^i_{i-m}\}_{i\in\bbZ}^{}$ that is bounded from below, i.e.\
there exists $N\in\bbZ$ (depending on the element $v$) such that $v^i =0$ for all $i<N$.
This fact can be used to endow $\hat{\Tot} : \DGdgVec\to\DGVec$ 
with the structure of a Lax symmetric monoidal functor. More explicitly, 
the coherence maps
\begin{subequations}
\begin{flalign}
\big(\hat{\Tot}\,V^\bullet_\bullet\big) \,\otimes \,\big(\hat{\Tot}\,W^\bullet_\bullet\big)~\longrightarrow~
\hat{\Tot}\big(V^\bullet_\bullet \otimes W^\bullet_\bullet\big)
\end{flalign}
are given on homogeneous elements $v\in (\hat{\Tot} \, V^\bullet_\bullet)^m$ 
and  $w\in (\hat{\Tot} \, W^\bullet_\bullet)^n$ by
\begin{flalign}
\big\{v^i\in V^i_{i-m}\big\}_{i\in\bbZ}^{}\otimes \big\{  w^j \in W^j_{j-n}\big\}_{j\in\bbZ}^{}~\longmapsto~
\Big\{\sum_{i\in\bbZ} v^i\otimes w^{s-i}\Big\}_{s\in\bbZ}^{}\quad,
\end{flalign}
\end{subequations}
which is well-defined because both families $\{v^i\}$ and $\{w^j\}$ are bounded from below.
\end{rem}

\begin{defi}\label{def:stackyCDGA}
The category of {\em stacky CDGAs}
is defined as the category $\DGdgCAlg := \CAlg(\DGdgVec)$ of commutative and unital algebras 
in the symmetric monoidal category $\DGdgVec$. Explicitly,
a stacky CDGA consists of a chain-cochain bicomplex $A^\bullet_\bullet\in\DGdgVec$
together with two $\DGdgVec$-morphisms $\mu : A^\bullet_\bullet \otimes A^\bullet_\bullet\to A^\bullet_\bullet$
(called multiplication) and $\eta : \bbK\to A^\bullet_\bullet$ (called unit)
that satisfy the associativity, unitality and commutativity axioms.
A morphism of stacky CDGAs is a $\DGdgVec$-morphism $f : A^\bullet_\bullet\to B^\bullet_\bullet$
that preserves the multiplications and units.
\end{defi}
\begin{rem}
It is shown in \cite[Lemma 3.4]{PridhamPoisson} that the category
of stacky CDGAs can be endowed with a cofibrantly generated model
structure in which the fibrations are bidegree-wise surjections
and the weak equivalences are level-wise quasi-isomorphisms.
This model structure is obtained via transfer of the level-wise model structure on $\DGdgVec$
that we have mentioned above.
\end{rem}

\begin{ex}\label{ex:CEalgebra}
Let $B_\bullet\in \mathbf{dgCAlg}_{\geq 0}$ be a non-negatively 
graded chain CDGA together with an action $\rho : \g \to \Der(B_\bullet)$
of a finite-dimensional Lie algebra $\g$ in terms of derivations, i.e.\
$\rho(t)(b\,b^\prime) = \rho(t)(b)\,b^\prime + b\,\rho(t)(b^\prime)$
for all $t\in\g$ and $b,b^\prime\in B_\bullet$.
Taking Chevalley-Eilenberg cochains on $\g$ with coefficients in
$B_\bullet$ defines a stacky CDGA
\begin{flalign}
\CE^\bullet(\g,B_\bullet)\,\in\,\DGdgCAlg\quad.
\end{flalign}
In the context of derived algebraic geometry, one interprets
this stacky CDGA as (the function algebra of) the formal derived quotient stack
$[Y/\g]$ of the derived affine scheme $Y = \spec \,B_\bullet$ by the Lie algebra action,
i.e.\ $\O([Y/\g]) = \CE^\bullet(\g,B_\bullet)$.
\sk

Since this example will play a prominent role in our paper,
let us also spell out the stacky CDGA $\CE^\bullet(\g,B_\bullet)$ fully explicitly.
For convenience, let us choose a basis $\{t_a\in\g\}$ of the Lie algebra and 
denote by $\{\theta^a\in\g^\vee\}$ its dual basis.
The underlying bigraded commutative algebra (obtained by forgetting both differentials)
is given by the tensor product
\begin{flalign}
\CE^\sharp(\g,B_{\sharp})\,=\, \big(\Sym\, \g^{\vee[-1]}\big)^\sharp \otimes B_{\sharp}\quad,
\end{flalign}
where the degree shifting $^{[-1]}$ means that the non-trivial elements in $\g^{\vee[-1]}$ 
are of cochain degree $+1$. This bigraded algebra is generated by elements of the form
$b := \oone\otimes b\in \CE^0(\g,B_{\sharp})$ and $\theta^a:=\theta^a\otimes\oone\in \CE^1(\g,B_0)$.
With respect to our sign conventions, graded commutativity of the multiplication reads 
explicitly on such elements as
\begin{flalign}
b\,b^\prime \,=\, (-1)^{\vert b\vert \,\vert b^\prime\vert} ~b^\prime\,b~~,\quad
\theta^a\,b \,=\, (-1)^{\vert b\vert}\, b\,\theta^a~~,\quad
\theta^a\,\theta^b \,=\, -\theta^b\,\theta^a\quad.
\end{flalign}
The chain differential $\partial$ on $\CE^\bullet(\g,B_\bullet)$ is defined 
on the generators by
\begin{flalign}
\partial(b)\,=\,\partial_B^{}(b)~~,\quad
\partial(\theta^a)\, =\, 0\quad,
\end{flalign}
where $\partial_B^{}$ denotes the differential on the chain CDGA $B_\bullet$.
The cochain (i.e.\ Chevalley-Eilenberg) differential $\delta$ on $\CE^\bullet(\g,B_\bullet)$
is defined on the generators by
\begin{flalign}
\delta(b)\,=\, \theta^a\,\rho(t_a)(b)~~,\quad
\delta(\theta^a)\,=\, -\tfrac{1}{2} f^a_{bc}\,\theta^b\,\theta^c\quad,
\end{flalign}
where $[t_a,t_b] = f^{c}_{ab}\,t_c$ denote the structure constants of the Lie algebra
and summations over repeated indices are understood.
The two differentials $\partial$ and $\delta$ are 
extended to all of $\CE^\sharp(\g,B_{\sharp})$ via the graded Leibniz rule with respect to the total degree.
\end{ex}

\begin{rem}\label{rem:CEalgebrafunctoriality}
For later use, we would like to record the following functorial behavior 
of the Chevalley-Eilenberg stacky CDGAs. Let $\g$ and $\g^\prime$ be two 
finite-dimensional Lie algebras. Let further $B_\bullet \in\dgCAlg_{\geq 0}$ be a chain CDGA with a
$\g$-action $\rho : \g \to \Der(B_\bullet)$ and $B^\prime_\bullet \in\dgCAlg_{\geq 0}$
be a chain CDGA with a $\g^\prime$-action $\rho^{\prime} : \g^\prime \to \Der(B^\prime_\bullet)$.
Denote by $\CE^\bullet(\g,B_\bullet)$ and $\CE^{\bullet}(\g^\prime,B_\bullet^\prime)$
the associated Chevalley-Eilenberg stacky CDGAs from Example \ref{ex:CEalgebra}. 
\sk

Suppose that we are given a Lie algebra morphism $\tilde{\kappa} : \g^\prime\to \g$
and a CDGA morphism $\kappa^\ast : B_\bullet\to B^\prime_\bullet$ that is equivariant relative
to $\tilde{\kappa}$, i.e.\ $\rho^\prime(t^\prime)\big(\kappa^\ast(b)\big) 
= \kappa^\ast\big(\rho(\tilde{\kappa}(t^\prime))(b)\big)$
for all $b\in B_\bullet$ and $t^\prime\in \g^\prime$. Denoting by 
$\tilde{\kappa}^\ast : \g^\vee \to \g^{\prime\vee}$ the dual of the 
Lie algebra morphism $\tilde{\kappa}$, we can define a morphism
\begin{subequations}
\begin{flalign}
(\kappa^\ast,\tilde{\kappa}^\ast) \,:\, \CE^\bullet(\g,B_\bullet)~\longrightarrow~\CE^\bullet(\g^\prime,B_\bullet^\prime)
\end{flalign}
of stacky CDGAs by setting for the generators
\begin{flalign}
(\kappa^\ast,\tilde{\kappa}^\ast)(b)\,:=\,\kappa^\ast(b)~~,\quad
(\kappa^\ast,\tilde{\kappa}^\ast)(\theta) \,:=\,\tilde{\kappa}^\ast(\theta)\quad.
\end{flalign}
\end{subequations}
One easily checks that this is compatible with the chain and the cochain differentials.
\end{rem}

Next, we introduce a suitable 
concept of unshifted Poisson structures on stacky CDGAs. 
While \cite{PridhamPoisson} and \cite{PridhamUnshifted}
introduces {\em weak} Poisson structures, which loosely speaking are
Poisson brackets satisfying the Jacobi identity up to coherent homotopy,
it is sufficient for our paper to restrict ourselves to {\em strict} Poisson structures.
\begin{defi}\label{def:Poisson}
An unshifted {\em strict Poisson structure} on a stacky CDGA $A^\bullet_\bullet$
is an unshifted Poisson bracket on its totalization $\hat \Tot\, A^\bullet_\bullet$.
More explicitly, the latter is a morphism
\begin{flalign}
\{\,\cdot\,,\,\cdot\,\} \,:\,\hat \Tot\, A^\bullet_\bullet \otimes \hat \Tot\, A^\bullet_\bullet~\longrightarrow ~\hat \Tot\, A^\bullet_\bullet
\end{flalign}
of cochain complexes that satisfies the following properties:
\begin{itemize}
\item[(i)] {\em Antisymmetry:} For all homogeneous $a,a^\prime\in \hat{\Tot}\,A^\bullet_\bullet$,
\begin{flalign}
\{a,a^\prime\} \,=\, - (-1)^{\vert a\vert\,\vert a^\prime\vert}~\{a^\prime,a\}\quad.
\end{flalign}

\item[(ii)] {\em Derivation property:} For all homogeneous $a,a^\prime,a^{\prime\prime}\in 
\hat{\Tot}\,A^\bullet_\bullet$,
\begin{flalign}
\{a,a^\prime\,a^{\prime\prime}\}\,=\, \{a,a^\prime\} \,a^{\prime\prime} + (-1)^{\vert a\vert \,\vert a^\prime\vert}~
a^\prime\,\{a,a^{\prime\prime}\}\quad.
\end{flalign}

\item[(iii)] {\em Jacobi identity:} For all homogeneous $a,a^\prime,a^{\prime\prime}\in 
\hat{\Tot}\,A^\bullet_\bullet$,
\begin{flalign}
\resizebox{0.85\hsize}{!}{$
(-1)^{\vert a\vert \, \vert a^{\prime\prime}\vert}~\big\{a,\{a^\prime,a^{\prime\prime}\} \big\} + 
(-1)^{\vert a^\prime\vert \, \vert a\vert}~\big\{a^\prime,\{a^{\prime\prime},a\} \big\} + 
(-1)^{\vert a^{\prime\prime}\vert \, \vert a^\prime\vert}~\big\{a^{\prime\prime},\{a,a^{\prime}\} \big\} \,=\,0\quad.$}
\end{flalign}
\end{itemize}
\end{defi}

To every stacky CDGA $A^\bullet_\bullet\in\DGdgCAlg$
one can assign its category ${}_{A^\bullet_\bullet}\Mod$ of (say, left)
modules in the symmetric monoidal category $\DGdgVec$. An object in this category
consists of a chain-cochain bicomplex $M^\bullet_\bullet\in \DGdgVec$
together with a $\DGdgVec$-morphism $\ell : A^\bullet_\bullet \otimes M^\bullet_\bullet\to M^\bullet_\bullet$
(called left action) that satisfies the left module axioms with respect to the algebra structure on $A^\bullet_\bullet$.
The morphisms are $\DGdgVec$-morphisms $f: M^\bullet_\bullet\to N^\bullet_\bullet$ 
that preserve the left actions. This category admits
an enrichment over $\DGdgVec$ by introducing, for each pair of objects
$M^\bullet_\bullet, N^\bullet_\bullet \in {}_{A^\bullet_\bullet}\Mod$, the hom object
\begin{subequations}
\begin{flalign}
\hom_{A^\bullet_\bullet}^{}\big(M^\bullet_\bullet,N^\bullet_\bullet\big)\,\subseteq\,
\hom\big(M^\bullet_\bullet,N^\bullet_\bullet\big)\,\in\,\DGdgVec
\end{flalign}
given by the chain-cochain sub-bicomplex of all internal homs (see Remark \ref{rem:DGdgVecSMcat})
preserving the left actions. More explicitly, an element 
$L=\big\{L^i_j \,:\,M^i_j\to N^{i+k}_{j+l} \big\}^{}_{i,j\in\bbZ}
\in \hom(M^\bullet_\bullet,N^\bullet_\bullet)^k_l$ lies in this sub-bicomplex
if and only if, for all $m\in M^i_j$ and $a\in A^p_q$,
\begin{flalign}
L^{i+p}_{j+q}(a\cdot m) = (-1)^{\vert a\vert\, \vert L\vert }~a\cdot L^i_j(m)\quad,
\end{flalign}
\end{subequations}
where we abbreviated both left actions by a dot. 
(Recall that the total degrees are given by $\vert a\vert = p-q$ and $\vert L\vert =k-l$.)
The composition of morphisms in this enriched category is the one induced from the canonical
composition of internal homs.
The following definition is \cite[Definition 1.24]{PridhamUnshifted}.
\begin{defi}\label{def:perf}
The {\em dg-category of perfect modules} $\mathrm{per}(A^\bullet_\bullet)$ 
over a stacky CDGA $A^\bullet_\bullet$ is defined as follows: 
Its objects are all $M^\bullet_\bullet\in {}_{A^\bullet_\bullet}\Mod$
satisfying:
\begin{itemize}
\item[(i)] $M^\sharp_\bullet$ is cofibrant as a graded dg-module over the graded dg-algebra
$A^\sharp_\bullet$;

\item[(ii)] $M_\bullet^0$ is a perfect dg-module over the dg-algebra $A^0_\bullet$;

\item[(iii)] $A^\sharp_\bullet\otimes_{A^0_\bullet} M_\bullet^0 \longrightarrow M^\sharp_\bullet$ is a level-wise
quasi-isomorphism.
\end{itemize}
For each pair of objects, the cochain complex of morphisms is defined by
\begin{flalign}
\hat{\hom}_{A^\bullet_\bullet}^{}\big(M^\bullet_\bullet,N^\bullet_\bullet\big)\,:=\,\hat{\Tot}\, \Big(\hom_{A^\bullet_\bullet}^{}\big(M^\bullet_\bullet,N^\bullet_\bullet\big)\Big)\,\in\,\DGVec\quad.
\end{flalign}
The composition of morphisms is induced from the compositions on $\hom_{A^\bullet_\bullet}^{}$
and Lax monoidality of the functor $\hat{\Tot}$.
\end{defi}

\begin{ex}\label{ex:CEmodule}
Consider the stacky CDGA $ \CE^\bullet(\g,B_\bullet)$ from Example \ref{ex:CEalgebra}.
Given any left $B_\bullet$-dg-module $V_\bullet\in {}_{B_\bullet}\Mod$ with a compatible action
$\rho_V^{} : \g\to \End(V_\bullet)$, i.e.\ $\rho_V^{}(t)(b\cdot s) = \rho(t)(b)\cdot s + b\cdot \rho_V^{}(t)(s)$
for all $t\in\g$, $b\in B_\bullet$ and $s\in V_\bullet$, we can take the Chevalley-Eilenberg cochains
\begin{flalign}
\CE^\bullet(\g,V_\bullet)\,\in\,{}_{\CE^\bullet(\g,B_\bullet)}\Mod
\end{flalign}
and thereby obtain a left $\CE^\bullet(\g,B_\bullet)$-module. Similarly to Example \ref{ex:CEalgebra},
one can spell this out fully explicitly. The underlying bigraded module is given by
\begin{flalign}
\CE^\sharp(\g,V_{\sharp})\,=\, \big(\Sym\, \g^{\vee[-1]}\big)^\sharp \otimes V_{\sharp}
\end{flalign}
and the $\CE^\sharp(\g,B_\sharp)$-action
\begin{flalign}
(\xi\otimes b)\cdot (\xi^\prime\otimes s)\,=\, (-1)^{-i^\prime\,j}~(\xi\,\xi^\prime)\otimes (b\cdot s)\quad,
\end{flalign}
for all $\xi\otimes b\in \CE^i(\g,B_j)$ and $\xi^\prime\otimes s\in \CE^{i^\prime}(\g,V_{j^\prime})$.
Note that this module is free over $\big(\Sym\, \g^{\vee[-1]}\big)^\sharp$ with generators
concentrated in {\em co}chain degree $0$, hence it satisfies item (iii) of Definition \ref{def:perf}.
The chain differential is defined on elements $\xi\otimes s\in \CE^i(\g,V_j)$ by
\begin{flalign}
\partial(\xi\otimes s)\,=\,(-1)^{i}\,\xi\otimes \partial_V^{} (s)\quad,
\end{flalign}
where $\partial_V^{}$ denotes the differential on $V_\bullet$,
and the cochain (i.e.\ Chevalley-Eilenberg) differential
is defined on the generators by
\begin{flalign}
\delta(\oone\otimes s)\,=\,\theta^a\otimes \rho_V^{}(t_a)(s)\quad,
\end{flalign}
for all $s\in V_j$. One easily checks that item (i) of Definition \ref{def:perf}
holds true if and only if $V_\bullet\in {}_{B_\bullet}\Mod$ is a
cofibrant left $B_\bullet$-dg-module and item (ii) holds true if and only if
$V_\bullet\in {}_{B_\bullet}\Mod$ is perfect. The cofibrant dg-modules
can be characterized explicitly in terms of a semi-projectivity condition,
see e.g.\ \cite[Theorem 9.10 and Definition 9.1]{Riehl}. In particular,
$V_\bullet\in {}_{B_\bullet}\Mod$ is both cofibrant and perfect if
its underlying graded $B_\sharp$-module $V_\sharp$ is finitely generated and projective.
\sk

One can further check that, up to weak equivalence given by level-wise quasi-isomorphisms, 
every $\CE^\bullet(\g,B_\bullet)$-module
$M^\bullet_\bullet\in {}_{\CE^\bullet(\g,B_\bullet)}\Mod$ satisfying the conditions 
in Definition \ref{def:perf} is of this form. Indeed, item (iii) implies
that the underlying $\CE^\sharp(\g,B_\bullet)$-module $M^\sharp_\bullet$
(obtained by forgetting the cochain differential) is equivalent to
\begin{flalign}
\CE^\sharp(\g,B_\bullet)\otimes_{B_\bullet} M^0_\bullet\,\cong\,\big(\big(\Sym\, \g^{\vee[-1]}\big)^\sharp \otimes B_{\bullet}\big)\otimes_{B_\bullet} M_\bullet^0\,\cong\, \big(\Sym\, \g^{\vee[-1]}\big)^\sharp \otimes M_\bullet^0\quad.
\end{flalign}
The cochain differential $\delta$ is fixed by its action on the generators $\oone\otimes m$, 
with $m\in M^0_j$, which due to degree reasons has to be of the form 
$\delta(\oone\otimes m) =\theta^a\otimes \rho_M^{}(t_a)(m)$.
The square-zero condition $\delta^2=0$ is then equivalent to $\rho_M^{}$ being a Lie algebra action.
\end{ex}

Let us summarize and expand on the observations from Example \ref{ex:CEmodule}.
\begin{propo}\label{prop:CEmodulescharacterization}
The dg-category $\mathrm{per}\big(\CE^\bullet(\g,B_\bullet)\big)$ 
of perfect modules over the stacky CDGA $\CE^\bullet(\g,B_\bullet)$ 
is equivalent to the dg-category
whose objects are all cofibrant and perfect left $B_\bullet$-dg-modules $V_\bullet\in {}_{B_\bullet}\Mod$ 
that are endowed with a compatible Lie algebra action $\rho_V^{} : \g\to \End(V_\bullet)$ and 
whose cochain complexes of morphisms are given by
\begin{flalign}
\nn \hat{\hom}_{\CE^\bullet(\g,B_\bullet)}^{}\big(\CE^\bullet(\g,V_\bullet),\CE^\bullet(\g,W_\bullet)\big)\,&\cong\, \hom_{\hat{\Tot}\CE^\bullet(\g,B_\bullet)}^{}\Big(\hat{\Tot}\CE^\bullet(\g,V_\bullet),\hat{\Tot}\CE^\bullet(\g,W_\bullet)\Big)\\
\,&\cong\,\hat{\Tot}\CE^\bullet\big(\g,\hom_{B_\bullet}^{}(V_\bullet,W_\bullet)\big)\quad,\label{eqn:perfhomidentification}
\end{flalign}
for each pair of objects $V_\bullet, W_\bullet$.
\end{propo}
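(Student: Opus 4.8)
The plan is to exhibit an explicit dg-functor from the dg-category described in the statement into $\mathrm{per}\big(\CE^\bullet(\g,B_\bullet)\big)$, sending a cofibrant and perfect left $B_\bullet$-dg-module $V_\bullet$ with compatible $\g$-action to the module $\CE^\bullet(\g,V_\bullet)$ of Example \ref{ex:CEmodule}, and to prove that it is an equivalence. Essential surjectivity (up to the level-wise quasi-isomorphisms) has in fact already been settled in Example \ref{ex:CEmodule}: every $M^\bullet_\bullet\in\mathrm{per}\big(\CE^\bullet(\g,B_\bullet)\big)$ is level-wise quasi-isomorphic to $\CE^\bullet(\g,M^0_\bullet)$, where $M^0_\bullet$ is cofibrant and perfect over $B_\bullet$ by items (i)--(iii) of Definition \ref{def:perf} and carries precisely the $\g$-action encoded in its cochain differential. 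It therefore remains to prove full faithfulness, which amounts exactly to the chain of isomorphisms \eqref{eqn:perfhomidentification}.

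The heart of the argument is an isomorphism of chain-cochain bicomplexes
\begin{flalign}
\hom_{\CE^\bullet(\g,B_\bullet)}\big(\CE^\bullet(\g,V_\bullet),\CE^\bullet(\g,W_\bullet)\big)\,\cong\,\CE^\bullet\big(\g,\hom_{B_\bullet}(V_\bullet,W_\bullet)\big)\quad,&&
\end{flalign}
from which the rightmost identification in \eqref{eqn:perfhomidentification} follows by applying $\hat{\Tot}$. To establish it, I would first work at the level of the underlying bigraded modules, where $\CE^\sharp(\g,V_\sharp)\cong\CE^\sharp(\g,B_\sharp)\otimes_{B_\sharp}V_\sharp$ is the extension of scalars of $V_\sharp$ along $B_\sharp\hookrightarrow\CE^\sharp(\g,B_\sharp)$, with generators concentrated in cochain degree $0$. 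Hom--tensor adjunction then identifies $\CE^\sharp(\g,B_\sharp)$-linear maps out of $\CE^\sharp(\g,V_\sharp)$ with $B_\sharp$-linear maps $V_\sharp\to\res\,\CE^\sharp(\g,W_\sharp)$; since each symmetric power $\Sym^p\g^{\vee[-1]}$ is finite-dimensional, the factor $\big(\Sym\,\g^{\vee[-1]}\big)^\sharp$ pulls out of the internal hom degree by degree, yielding the bigraded isomorphism $\hom^\sharp_{\CE(\g,B)}\big(\CE(\g,V),\CE(\g,W)\big)\cong\big(\Sym\,\g^{\vee[-1]}\big)^\sharp\otimes\hom^\sharp_{B_\sharp}(V_\sharp,W_\sharp)$, where the right-hand side carries the adjoint $\g$-action determined by $\rho_V$ and $\rho_W$.

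It then remains to check that this bigraded isomorphism intertwines both differentials. For the chain differential this is routine, as both sides are built solely from $\partial_V$ and $\partial_W$. The key computation — and the step I expect to be the main obstacle — is the cochain differential: one must verify that the `adjoint' differential on the internal-hom bicomplex of Remark \ref{rem:DGdgVecSMcat}, namely $L\mapsto\delta\circ L-(-1)^{\vert L\vert}L\circ\delta$ restricted to $\CE^\bullet(\g,B_\bullet)$-linear maps, corresponds under the identification to the Chevalley--Eilenberg differential of $\CE^\bullet\big(\g,\hom_{B_\bullet}(V_\bullet,W_\bullet)\big)$ associated with the adjoint action $t\mapsto\big(\phi\mapsto\rho_W(t)\circ\phi-(-1)^{\vert t\vert\,\vert\phi\vert}\,\phi\circ\rho_V(t)\big)$. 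This is a purely computational but sign-delicate verification, requiring careful bookkeeping of the Koszul rule with respect to the total degree $\vert\cdot\vert=i-j$.

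Finally, the leftmost isomorphism in \eqref{eqn:perfhomidentification} identifies $\hat{\hom}_{\CE^\bullet(\g,B_\bullet)}=\hat{\Tot}\,\hom_{\CE^\bullet(\g,B_\bullet)}$ with the morphism complex $\hom_{\hat{\Tot}\CE^\bullet(\g,B_\bullet)}\big(\hat{\Tot}\CE^\bullet(\g,V_\bullet),\hat{\Tot}\CE^\bullet(\g,W_\bullet)\big)$ of ordinary dg-modules over the dg-algebra $\hat{\Tot}\CE^\bullet(\g,B_\bullet)$. This is the canonical comparison map supplied by the lax symmetric monoidal structure of $\hat{\Tot}$ (Remark following Definition \ref{def:totalization}); I would argue that it is an isomorphism here because all bicomplexes in sight — $\CE^\bullet(\g,B_\bullet)$, $\CE^\bullet(\g,V_\bullet)$, $\CE^\bullet(\g,W_\bullet)$ and their internal hom — are concentrated in non-negative cochain degrees, so that $\hat{\Tot}$ reduces to the product totalization $\Tot^{\Pi}$, while perfectness of $V_\bullet$ and $W_\bullet$ provides the boundedness in chain degree needed for $\Tot^{\Pi}$ to commute with the formation of $\hat{\Tot}\CE^\bullet(\g,B_\bullet)$-linear maps. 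Combining the three identifications yields \eqref{eqn:perfhomidentification}, hence full faithfulness, completing the proof that the functor is an equivalence of dg-categories.
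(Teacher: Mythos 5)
Your route is essentially the paper's: essential surjectivity is delegated to Example \ref{ex:CEmodule}, and the right-hand identification in \eqref{eqn:perfhomidentification} is obtained from the fact that $\CE^\sharp(\g,V_\sharp)$ is the extension of scalars of $V_\sharp$ along $B_\sharp\hookrightarrow\CE^\sharp(\g,B_\sharp)$, with generators in cochain degree $0$, plus finite-dimensionality of $\g$ to pull the $\big(\Sym\,\g^{\vee[-1]}\big)^\sharp$ factor out of the hom; this is a clean repackaging of the paper's explicit component-map description \eqref{eqn:tmpcomponentmaps}. The problem lies in your justification of the left-hand isomorphism.

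You argue that the lax-monoidal comparison map from $\hat{\hom}_{\CE^\bullet(\g,B_\bullet)}$ to $\hom_{\hat{\Tot}\CE^\bullet(\g,B_\bullet)}$ is an isomorphism because the relevant bicomplexes are non-negatively cochain graded and because ``perfectness of $V_\bullet$ and $W_\bullet$ provides the boundedness in chain degree''. The latter premise is false: cofibrant and perfect $B_\bullet$-dg-modules are in general unbounded above in chain degree --- already the rank-one module $V_\bullet=B_\bullet$ is, for instance for the $B_\bullet$ of \eqref{eqn:Bbulletsym}, which has nonzero components in every non-negative chain degree. And without some finiteness input the comparison map can genuinely fail to be surjective: an element of $\big(\hat{\Tot}\,\CE^\bullet(\g,V_\bullet)\big)^m$ is a priori an infinite family $\{m^i\in\CE^i(\g,V_{i-m})\}_{i\geq 0}$, and an abstract degree-$p$ $\hat{\Tot}\CE^\bullet(\g,B_\bullet)$-linear map is then \emph{not} determined by its restriction to the generators $V_\bullet$, since linearity only controls finite linear combinations. (Your observation that the enriched hom is itself non-negatively cochain graded is correct --- a module map of negative cochain degree kills the generators and hence vanishes --- but it does not by itself give surjectivity.) What rescues the statement, and what the paper actually invokes, is boundedness in the \emph{cochain} direction: since $\g$ is finite-dimensional, $\CE^\bullet(\g,V_\bullet)$, $\CE^\bullet(\g,W_\bullet)$ and the module hom are concentrated in cochain degrees $0,\dots,\dim\g$, so the family above has at most $\dim\g+1$ nonzero entries, every element of the totalization is a finite $\hat{\Tot}\CE^\bullet(\g,B_\bullet)$-linear combination of elements of $V_\bullet$, and degree-$p$ linear maps are therefore both determined by and freely prescribed by their values on $V_j$, landing in $\hat{\Tot}\CE^\bullet(\g,W_\bullet)^{p-j}$. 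Substituting this cochain-amplitude argument for your chain-boundedness claim closes the gap; the remainder of your proof stands.
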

\begin{proof}
The characterization of the objects has been discussed in Example \ref{ex:CEmodule}.
The second statement that the totalized $\hat{\hom}$ can be identified with
the $\hom$ between the totalized objects follows easily from the fact that 
$\CE^\bullet(\g,B_\bullet)$, $\CE^\bullet(\g,V_\bullet)$ and $\CE^\bullet(\g,W_\bullet)$
are bounded from below and, because $\g$ is finite-dimensional, also from above
in the cochain degrees. Each element in 
$\hat{\hom}_{\CE^\bullet(\g,B_\bullet)}^{}\big(\CE^\bullet(\g,V_\bullet),\CE^\bullet(\g,W_\bullet)\big)^p$ 
can then be identified with a finite direct sum 
\begin{flalign}\label{eqn:tmpcomponentmaps}
\bigoplus_{i=0}^{\mathrm{dim}\g} \Big\{(L^i)_j \,:\, V_j \to \CE^i(\g,W_{j+i-p})\Big\}^{}_{j\in\bbZ}\quad,
\end{flalign}
which under \eqref{eqn:perfhomidentification} is mapped to
the element in $\hom_{\hat{\Tot}\CE^\bullet(\g,B_\bullet)}^{}\big(\hat{\Tot}\CE^\bullet(\g,V_\bullet),\hat{\Tot}\CE^\bullet(\g,W_\bullet)\big)^p$ that is specified by
\begin{flalign}
\Big\{\sum_{i=0}^{\mathrm{dim}\g} (L^i)_j\,:\,V_j \to \hat{\Tot}\CE^\bullet(\g,W_{\bullet})^{p-j}\Big\}_{j\in\bbZ}^{}\quad.
\end{flalign}
One easily checks that this assignment is an isomorphism of cochain complexes.
For the isomorphism in the second line of \eqref{eqn:perfhomidentification},
we use once more that $\g$ is finite-dimensional in order to interpret 
\eqref{eqn:tmpcomponentmaps} in terms of elements $\{(L^i)_j\}_{j\in\bbZ}\in 
\CE^i\big(\g,\hom_{B_\bullet}^{}(V_\bullet,W_\bullet)_{i-p}\big)$.
\end{proof}

\begin{rem}\label{rem:CEmodulescharacterization}
In other words, the above proposition states that the dg-category
of perfect modules
\begin{flalign}
\mathrm{per}\big(\CE^\bullet(\g,B_\bullet)\big)\,\subseteq\,{}_{\hat{\Tot}\CE^\bullet(\g,B_\bullet)}\Mod
\end{flalign}
over the stacky CDGA $\CE^\bullet(\g,B_\bullet)$ may be described as the full sub-dg-category of the 
dg-category of dg-modules over the totalized dg-algebra $\hat{\Tot}\CE^\bullet(\g,B_\bullet)$ on the objects
$\hat{\Tot}\CE^\bullet(\g,V_\bullet)$, for all cofibrant and perfect $V_\bullet\in {}_{B_\bullet}\Mod$
that are endowed with a compatible Lie algebra action $\rho_V^{} : \g\to \End(V_\bullet)$.
This observation will be useful later when we discuss quantizations.
We would like to stress that the totalized dg-module $\hat{\Tot}\CE^\bullet(\g,V_\bullet)$
is usually {\em not} cofibrant over $\hat{\Tot}\CE^\bullet(\g,B_\bullet)$, hence
our dg-category $\mathrm{per}\big(\CE^\bullet(\g,B_\bullet)\big)$ is {\em not}
the usual derived dg-category of $\hat{\Tot}\CE^\bullet(\g,B_\bullet)$-dg-modules.
\end{rem}

\begin{rem}\label{rem:perfunctoriality}
We would like to add a comment about the functoriality of 
the dg-categories of perfect modules. Consider a stacky CDGA morphism 
$(\kappa^\ast,\tilde{\kappa}^\ast) : \CE^\bullet(\g,B_\bullet)\to \CE^{\bullet}(\g^\prime,B_{\bullet}^\prime)$
as described in Remark \ref{rem:CEalgebrafunctoriality}. Then the corresponding induced module functor
\begin{flalign}\label{eqn:dgfunctorCE}
(\kappa^\ast,\tilde{\kappa}^\ast)_!^{}\,:=\,  \CE^{\bullet}(\g^\prime,B_{\bullet}^\prime)\otimes_{\CE^\bullet(\g,B_\bullet)}^{} (-) \,:\, \mathrm{per}\big(\CE^\bullet(\g,B_\bullet)\big)~\longrightarrow~
\mathrm{per}\big(\CE^\bullet(\g^\prime,B_\bullet^\prime)\big)
\end{flalign}
defines a dg-functor between the dg-categories of perfect modules.
Using the explicit description from Proposition 
\ref{prop:CEmodulescharacterization} and Example \ref{ex:CEmodule},
we can spell out the action of this dg-functor on objects very concretely:
Let $V_\bullet \in {}_{B_\bullet}\Mod$ be a cofibrant and perfect
left $B_\bullet$-dg-module with a compatible Lie algebra action $\rho_V^{}: \g\to\End(V_\bullet)$.
Forgetting the cochain differentials, we find at the level of graded dg-modules over 
$\CE^{\sharp}(\g^\prime,B_{\bullet}^\prime)$ that
\begin{flalign}
\CE^{\sharp}(\g^\prime,B_{\bullet}^\prime)\otimes_{\CE^\sharp(\g,B_\bullet)}^{} \CE^\sharp(\g,V_\bullet)~\cong\,
\big(\Sym\,\g^{\prime\vee[-1]}\big)^\sharp \otimes \big(B_{\bullet}^\prime\otimes_{B_\bullet}^{} V_\bullet\big)\quad,
\end{flalign}
where $B_{\bullet}^\prime\otimes_{B_\bullet}^{} V_\bullet\in {}_{B_\bullet^\prime}\Mod$
is the induced module along $\kappa^\ast : B_\bullet\to B_\bullet^\prime$. (This is 
cofibrant and perfect over $B_\bullet^\prime$.) Chasing through this isomorphism, 
one obtains that the cochain differential $\delta^\prime$ acts on the generators as
\begin{flalign}
\delta^\prime\big(\oone \otimes b^\prime\otimes_{B_\bullet } s \big)\,=\,
\theta^{\prime a}\otimes  \Big(\rho^\prime(t^\prime_a)(b^\prime)\otimes_{B_\bullet } s + 
b^\prime\otimes_{B_\bullet} \rho_V^{}(\tilde{\kappa}(t^\prime_a)) (s)\Big)\quad.
\end{flalign}
Comparing with the general construction in Example \ref{ex:CEmodule},
we observe that this is the cochain differential associated
with the $\g^\prime$-action $\rho_{V}^\prime :
\g^\prime\to \End\big(B_{\bullet}^\prime\otimes_{B_\bullet}^{} V_\bullet\big)$ defined
by $\rho^\prime_{V}(t^\prime)(b^\prime\otimes_{B_\bullet} s) =
\rho^\prime(t^\prime)(b^\prime)\otimes_{B_\bullet} s + 
b^\prime\otimes_{B_\bullet} \rho_V^{}(\tilde{\kappa}(t^\prime))(s)$.
Summing up, this means that the dg-functor \eqref{eqn:dgfunctorCE} 
acts on objects as $(V_\bullet,\rho_V^{})\mapsto 
\big(B_{\bullet}^\prime\otimes_{B_\bullet}^{} V_\bullet,\rho_V^\prime\big)$.
Note that, as typical for induced module functors, the assignment $(-)_!$
from stacky CDGA morphisms to dg-functors is not strictly functorial but 
only pseudo-functorial with respect to the obvious coherences for compositions and identities.
\end{rem}


\subsection{\label{subsec:resolution}Resolutions of derived quotient stacks}
Let $Y=\spec\, B_\bullet$ be a derived affine scheme, i.e.\ $B_\bullet\in\dgCAlg_{\geq 0}$
is a non-negatively graded chain CDGA, together with an action $Y\times G\to Y$ 
of a smooth affine group scheme $G = \spec\,H$, i.e.\ $H$ is a smooth commutative Hopf algebra.
The derived quotient stack $[Y/G]$ is defined as the colimit
\begin{flalign}\label{eqn:quotientstack}
[Y/G] \,:=\, \colim\Big(
\xymatrix@C=1em{
Y ~&~ \ar@<0.5ex>[l] \ar@<-0.5ex>[l] Y\times G~&~ \ar@<1ex>[l] \ar[l] \ar@<-1ex>[l] Y\times G^2 ~&~ \ar@<0.5ex>[l] \ar@<-0.5ex>[l]\ar@<1.5ex>[l] \ar@<-1.5ex>[l]\cdots
}\Big)
\end{flalign}
in the $\infty$-category of derived stacks
of the simplicial diagram associated with the $G$-action $Y\times G \to Y$ on $Y$. 
In contrast to the formal derived quotient stack $[Y/\g]$ in Example \ref{ex:CEalgebra},
the derived quotient stack $[Y/G]$ for a group action is in general {\em not} affine,
which means that it can {\em not} be described algebraically in terms of a CDGA or a stacky CDGA.
We will now recall, and then spell out fully explicitly, the construction in \cite[Example 3.6]{PridhamPoisson}
that provides an {\'e}tale resolution of a derived quotient stack $[Y/G]$ by stacky CDGAs.
This resolution allows us to apply the definitions and results for
stacky CDGAs from Subsection \ref{subsec:stackyCDGA} to derived quotient stacks,
and it eventually allows for the study of Poisson structures and quantizations of $[Y/G]$, 
see also \cite{PridhamUnshifted}.
\sk

Before spelling out the algebraic details in a self-contained way, 
let us provide first an informal overview of the resolution constructed in
\cite[Example 3.6]{PridhamPoisson}.
The basic idea is to consider the simplicial diagram
\begin{flalign}\label{eqn:quotientstackresolution}
\xymatrix@C=1em{
[Y/\g] ~&~ \ar@<0.5ex>[l] \ar@<-0.5ex>[l] \big[Y\times G/\g^{\oplus 2}\big]~&~ \ar@<1ex>[l] \ar[l] \ar@<-1ex>[l] 
\big[Y\times G^2/\g^{\oplus 3}\big] ~&~ \ar@<0.5ex>[l] \ar@<-0.5ex>[l]\ar@<1.5ex>[l] \ar@<-1.5ex>[l]\cdots
}
\end{flalign}
of formal derived quotient stacks instead of the simplicial diagram in 
\eqref{eqn:quotientstack}. One should think of this diagram as follows:
As a first order approximation, one approximates $[Y/G]$ by the formal derived quotient stack $[Y/\g]$,
and then one corrects for the difference by introducing the higher simplicial degrees.
The Lie algebra actions in \eqref{eqn:quotientstackresolution} are obtained from the group actions
\begin{flalign}\label{eqn:degreewisegroupaction}
\nn r\,:\,(Y\times G^n) \times G^{n+1}~&\longrightarrow~ Y\times G^n~~,\quad\\
\big((y,g_1,\dots, g_n),(g^\prime_0,\dots,g^\prime_n)\big)~&\longmapsto~
\big(y\,g^\prime_0 , g_0^{\prime -1} \,g_1\,g^\prime_1,\dots, g_{n-1}^{\prime -1}\,g_n\,g^\prime_n\big)\quad,
\end{flalign}
where the $G$-action on $Y$ is abbreviated by $y\,g^\prime_0$.
Before taking formal quotients, the face maps are defined by
\begin{flalign}\label{eqn:facemap}
\nn d_i\,:\,Y\times G^n~&\longrightarrow~ Y\times G^{n-1}~~,\quad\\
(y,g_1,\dots, g_n)~&\longmapsto~
\begin{cases}
\big(y\,g_1,g_2,\dots, g_n\big)&~~\text{for }i=0\quad,\\
\big(y,g_1,\dots,g_i\,g_{i+1},\dots, g_n\big)&~~\text{for } i=1,\dots,n-1\quad,\\
\big(y,g_1,\dots,g_{n-1}\big)&~~\text{for } i=n\quad.
\end{cases}
\end{flalign}
and the degeneracy maps are defined by
\begin{flalign}\label{eqn:degeneracymap}
s_i\,:\, Y\times G^n~\longrightarrow~ Y\times G^{n+1}~~,\quad
(y,g_1,\dots, g_n)~\longmapsto~\big(y,g_1, \dots,
g_i,e,g_{i+1},\dots,g_n\big)\quad,
\end{flalign}
for $i=0,\dots,n$, where $e\in G$ denotes the identity element.
The face and degeneracy maps are equivariant under the group actions
\eqref{eqn:degreewisegroupaction} relative to the group homomorphisms
\begin{flalign}\label{eqn:facemaptilde}
\tilde{d}_i\,:\,G^{n+1}~\longrightarrow~ G^n~~,\quad
(g_0,\dots,g_n)~\longmapsto~\big(g_0,\dots, \check{g_i}, \dots,g_n\big)\quad,
\end{flalign}
where $\check{-}$ means to omit the corresponding factor, and
\begin{flalign}\label{eqn:degeneracymaptilde}
\tilde{s}_i\,:\, G^{n+1}~\longrightarrow~G^{n+2}~~,\quad
(g_0,\dots,g_n)~\longmapsto~\big(g_0,\dots,g_i,g_i,\dots,g_n\big)\quad.
\end{flalign}
The latter means that the diagrams 
\begin{flalign}\label{eqn:compatibilitydiagram}
\xymatrix@C=1em{
\ar[d]_-{d_i\times\tilde{d}_i} 
(Y\times G^n) \times G^{n+1} \ar[r]^-{r}~&~ Y\times G^n\ar[d]^-{d_i} ~&~~&~
\ar[d]_-{s_i\times\tilde{s}_i}(Y\times G^n) \times G^{n+1} \ar[r]^-{r}~&~ Y\times G^n\ar[d]^-{s_i} \\
(Y\times G^{n-1}) \times G^{n}\ar[r]_-{r}~&~Y\times G^{n-1} ~&~~&~ 
(Y\times G^{n+1}) \times G^{n+2}\ar[r]_-{r}~&~Y\times G^{n+1}
}
\end{flalign}
commute. In particular, this implies that the face and degeneracy maps induce 
to the formal derived quotient stacks in \eqref{eqn:quotientstackresolution}.
\sk

Let us now rewrite the construction above in the dual algebraic language 
in order to connect to the theory of stacky CDGAs. The derived affine scheme
$Y=\spec\, B_\bullet$ is described by its function CDGA $\O(Y)=B_\bullet\in\dgCAlg_{\geq 0}$
and the smooth affine group scheme $G = \spec\,H$ by its (smooth)
function Hopf algebra $\O(G)=H$. We denote the comultiplication by $\Delta : H\to H\otimes H$,
the counit by $\varepsilon : H\to\bbK$ and the antipode by $S : H\to H$. 
The $G$-action $Y\times G\to Y$ on $Y$ is algebraically described by an $H$-comodule
structure $\rho : B_\bullet\to B_\bullet\otimes H $ on the CDGA $B_\bullet$.
Throughout the whole paper, we shall use the standard Sweedler notation
\begin{subequations}
\begin{flalign}
\Delta^n (h)\,&=\, h_{\und{1}}\otimes h_{\und{2}}\otimes \cdots\otimes h_{\und{n+1}}\,\in\, H^{\otimes n+1}\qquad\text{(summation understood)}\quad,\\
\rho^n (b)\,&=\, b_{\und{0}}\otimes b_{\und{1}}\otimes\cdots \otimes b_{\und{n}}\,\in\,B_\bullet\otimes H^{\otimes n}
\qquad \text{(summation understood)}
\end{flalign}
\end{subequations}
for (iterated) applications of the comultiplication and the coaction. In particular, this entails that
$\Delta (h) = h_{\und{1}}\otimes h_{\und{2}}$ and $\rho (b) = b_{\und{0}}\otimes b_{\und{1}}$.
\sk

The group actions \eqref{eqn:degreewisegroupaction} are algebraically given by
$H^{\otimes n+1}$-comodule structures on the CDGAs $\O(Y\times G^n) = B_\bullet\otimes H^{\otimes n}
\in \dgCAlg_{\geq 0}$, which we shall collectively denote by the same symbol
$\rho : B_\bullet\otimes H^{\otimes n} \to B_\bullet\otimes H^{\otimes n}\otimes H^{\otimes n+1}$.
In order to spell this out, we use that $B_\bullet\otimes H^{\otimes n}$
is generated by elements of the form
\begin{flalign}\label{eqn:elementconvention}
b\,:=\,b\otimes\Motimes_{i=1}^n \oone\,\in\, B_\bullet\otimes H^{\otimes n}~~,\quad
h^{\an{j}}\,:=\,\oone_B^{}\otimes\Motimes_{i=1}^{j-1}\oone \otimes h\otimes\Motimes_{i=j+1}^{n}\oone \,\in\, B_\bullet\otimes H^{\otimes n}\quad,
\end{flalign}
for $j=1,\dots, n$. The coaction is defined on these generators by
\begin{flalign}\label{eqn:coaction}
\rho(b)\,=\, b_{\und{0}}^{}\otimes \Big(b_{\und{1}}\otimes\Motimes_{k=1}^n\oone\Big)~~,\quad
\rho(h^{\an{j}})\,=\,h^{\an{j}}_{\und{2}}\otimes\Big(\Motimes_{k=0}^{j-2}\oone\otimes S(h_{\und{1}}^{})\otimes
h_{\und{3}}^{}\otimes\Motimes_{k=j+1}^n\oone\Big)\quad,
\end{flalign}
where we used parentheses to indicate the factor in $H^{\otimes n+1}$. 
The face maps \eqref{eqn:facemap} are algebraically given by the CDGA-morphisms
\begin{subequations}\label{eqn:facemapdual}
\begin{flalign}
d_i^\ast \,:\, B_\bullet\otimes H^{\otimes n-1} ~&\longrightarrow~ B_\bullet\otimes H^{\otimes n}
\end{flalign}
that are defined on the generators by
\begin{flalign}
d_i^\ast(b)\,&=\,\begin{cases}
b_{\und{0}}^{}\, b_{\und{1}}^{\an{1}}&~~\text{for }i=0\quad,\\
b&~~\text{for } i=1,\dots n\quad,
\end{cases}\quad\\
d_i^{\ast}\big(h^{\an{j}}\big)\,&=\,\begin{cases}
h^{\an{j+1}}&~~\text{for }i=0,\dots,j-1\quad,\\
h_{\und{1}}^{\an{i}}\, h_{\und{2}}^{\an{i+1}}&~~\text{for }i=j\quad,\\
h^{\an{j}}&~~\text{for }i=j+1,\dots,n\quad.
\end{cases}
\end{flalign}
\end{subequations}
The degeneracy maps \eqref{eqn:degeneracymap} are algebraically given by
the CDGA-morphisms
\begin{subequations}\label{eqn:degeneracymapdual}
\begin{flalign}
s_i^\ast\,:\, B_\bullet\otimes H^{\otimes n+1} ~&\longrightarrow~B_\bullet\otimes H^{\otimes n}
\end{flalign}
that are defined on the generators by
\begin{flalign}
s_i^\ast(b)\,=\,b~~,\quad
s_i^\ast\big(h^{\an{j}}\big)\,=\,\begin{cases}
h^{\an{j-1}}&~~\text{for }i=0,\dots,j-2\quad,\\
\varepsilon(h) &~~\text{for }i=j-1\quad,\\
h^{\an{j}}&~~\text{for }i=j,\dots,n\quad.
\end{cases}
\end{flalign}
\end{subequations}
The group homomorphisms \eqref{eqn:facemaptilde} and \eqref{eqn:degeneracymaptilde} 
are algebraically given by Hopf algebra morphisms.
Similarly to \eqref{eqn:elementconvention}, we will use that $H^{\otimes n+1}$
is generated by elements of the form
\begin{flalign}\label{eqn:elementconvention2}
h^{\an{j}}\,:=\,\Motimes_{k=0}^{j-1}\oone\otimes h\otimes\Motimes_{k=j+1}^n\oone\,\in\, H^{\otimes n+1}\quad,
\end{flalign}
where $j=0,\dots,n$ runs from $0$ to $n$ in order to match the labeling conventions in \eqref{eqn:facemaptilde}
and \eqref{eqn:degeneracymaptilde}. The Hopf algebra morphism 
\begin{subequations}\label{eqn:facemaptildedual}
\begin{flalign}
\tilde{d}_i^\ast\,:\, H^{\otimes n}~&\longrightarrow~H^{\otimes n+1}
\end{flalign}
associated with \eqref{eqn:facemaptilde} is defined on the generators by
\begin{flalign}
\tilde{d}_i^\ast\big(h^{\an{j}}\big)\,=\,
\begin{cases}
h^{\an{j+1}} &~~\text{for }i=0,\dots,j\quad,\\
h^{\an{j}} &~~\text{for }i=j+1,\dots,n\quad.
\end{cases}
\end{flalign}
\end{subequations}
The Hopf algebra morphism
\begin{subequations}\label{eqn:degeneracymaptildedual}
\begin{flalign}
\tilde{s}^\ast_i\,:\, H^{\otimes n+2}~&\longrightarrow~H^{\otimes n+1}
\end{flalign}
associated with \eqref{eqn:degeneracymaptilde} is defined on the generators by
\begin{flalign}
\tilde{s}^\ast_i\big(h^{\an{j}}\big)\,=\,\begin{cases}
h^{\an{j-1}}&~~\text{for }i=0,\dots,j-1\quad,\\
h^{\an{j}} &~~\text{for }i=j,\dots, n\quad.
\end{cases}
\end{flalign}
\end{subequations}
The algebraic analog of the commutative diagrams in \eqref{eqn:compatibilitydiagram}
is that the diagrams
\begin{flalign}\label{eqn:compatibilitydiagramdual}
\xymatrix@C=1em{ 
B_\bullet \otimes H^{\otimes n} \otimes  H^{\otimes n+1} ~&~ 
\ar[l]_-{\rho}B_\bullet\otimes H^{\otimes n} ~~&~~
B_\bullet\otimes H^{\otimes n}\otimes H^{\otimes n+1}~&~ 
\ar[l]_-{\rho}B_\bullet\otimes H^{\otimes n} \\
\ar[u]^-{d_i^\ast\otimes \tilde{d}_i^\ast}B_\bullet\otimes H^{\otimes n-1}\otimes H^{\otimes n}~&~
\ar[l]^-{\rho}\ar[u]_-{d_i^\ast} B_\bullet\otimes H^{\otimes n-1}~~&~~
\ar[u]^-{s_i^\ast\otimes \tilde{s}_i^\ast}B_\bullet\otimes H^{\otimes n+1}\otimes H^{\otimes n+2}~&~
\ar[l]^-{\rho}\ar[u]_-{s_i^\ast}B_\bullet\otimes H^{\otimes n+1}
}
\end{flalign}
commute or, in other words, that $d_i^\ast$ is equivariant relative
to the Hopf algebra morphism $\tilde{d}_i^\ast$ and that $s_i^\ast$
is equivariant relative to $\tilde{s}_i^\ast$. (Note that these are Hopf algebra
analogs of the Lie-equivariance conditions from Remark \ref{rem:CEalgebrafunctoriality}.)
\sk

As a last preparation, we describe the Lie algebra action that is induced by the coaction
\eqref{eqn:coaction} and the morphisms of dual Lie algebras that are 
induced by the Hopf algebra morphisms \eqref{eqn:facemaptildedual} and \eqref{eqn:degeneracymaptildedual}.
Recall that the dual Lie algebra of $G=\spec\,H$
is defined by the quotient 
\begin{flalign}
\g^\vee \,:=\, H^+\big/ H^{+2}
\end{flalign}
of the augmentation ideal $H^+ := \ker(\varepsilon : H\to\bbK)\subseteq H$ by its square.
The Lie algebra $\g$ is then defined as the dual of $\g^\vee$
and it can be described explicitly as
\begin{flalign}
\g \,:=\, \mathrm{Lin}(\g^\vee,\bbK)\,\cong\, \Der_{\varepsilon}(H,\bbK)\quad,
\end{flalign}
where $\Der_{\varepsilon}(H,\bbK)$ denotes the vector space of derivations
relative to the counit $\varepsilon: H\to\bbK$, i.e.\ an element $t\in \Der_{\varepsilon}(H,\bbK)$
is a linear map $t:H\to\bbK$ satisfying $t(h\,h^\prime) = t(h)\,\varepsilon(h^\prime) + \varepsilon(h)\,t(h^\prime)$,
for all $h,h^\prime\in H$. More generally, the dual Lie algebra
of $G^{n+1} = \spec H^{\otimes n+1}$ can be identified with $\g^{\vee\oplus n+1}$
and its Lie algebra with $\g^{\oplus n+1}$. Similarly to \eqref{eqn:elementconvention2},
we shall write
\begin{flalign}\label{eqn:elementconvention3}
\theta^{\an{j}} \,:=\,\Moplus_{k=0}^{j-1}0 \oplus \theta\oplus \Moplus_{k=j+1}^n 0 \,\in\,\g^{\vee \oplus n+1}~~,\quad
t^{\an{j}}\,:=\,\Moplus_{k=0}^{j-1}0\oplus t\oplus\Moplus_{k=j+1}^n 0\,\in\,\g^{\oplus n+1}\quad,
\end{flalign}
for $j=0,\dots,n$.
The $H^{\otimes n+1}$-coaction \eqref{eqn:coaction} 
induces the Lie algebra action $\rho : \g^{\oplus n+1}\to \Der(B_\bullet\otimes H^{\otimes n})$
(denoted with abuse of notation by the same symbol) that is given on the generators by
\begin{subequations}\label{eqn:inducedLieactions}
\begin{flalign}\label{eqn:inducedLieactions1}
\rho(t^{\an{j}})(b)\,&=\,\begin{cases}
\rho_B^{}(t)(b)&~~\text{for }j=0\quad,\\
0 &~~\text{else}\quad,
\end{cases}\\
\rho(t^{\an{j}})(h^{\an{k}})\,&=\, \begin{cases}
\rho^{\mathrm{L}}(t)(h)^{\an{k}}&~~\text{for } j=k\quad,\\
\rho^{\mathrm{R}}(t)(h)^{\an{k}}&~~\text{for } j+1=k\quad,\\
0&~~\text{else}\quad,
\end{cases}
\end{flalign}
where
\begin{flalign}\label{eqn:inducedLieactions2}
\rho_B^{}(t)(b) \,:=\, b_{\und{0}}^{}\, t(b_{\und{1}}^{})~~,\quad
\rho^{\mathrm{R}}(t)(h)\,:=\, t(S(h_{\und{1}}^{}))\,h_{\und{2}}^{}~~,\quad
\rho^{\mathrm{L}}(t)(h)\,:=\, h_{\und{1}}^{}\,t(h_{\und{2}}^{})
\end{flalign}
\end{subequations}
are respectively the induced $\g$-action on $B_\bullet$, the $\g$-action on $H$
in terms of right invariant derivations and the $\g$-action on $H$ in terms of left invariant derivations.
The Hopf algebra morphisms \eqref{eqn:facemaptildedual} and \eqref{eqn:degeneracymaptildedual} 
induce to the dual Lie algebras as
\begin{flalign}\label{eqn:facemaptildeLie} 
\tilde{d}_i^\ast\,:\,\g^{\vee \oplus n}~\longrightarrow~\g^{\vee \oplus n+1}~~,\quad
\theta^{\an{j}}~\longmapsto~\begin{cases}
\theta^{\an{j+1}} &~~\text{for }i=0,\dots, j\quad,\\
\theta^{\an{j}} &~~\text{for }i=j+1,\dots,n\quad,
\end{cases}
\end{flalign}
and
\begin{flalign}\label{eqn:degeneracymaptildeLie} 
\tilde{s}_i^\ast\,:\,\g^{\vee\oplus n+2} ~\longrightarrow~\g^{\vee\oplus n+1}~~,\quad
\theta^{\an{j}}~\longmapsto~\begin{cases}
\theta^{\an{j-1}} &~~\text{for }i=0,\dots,j-1\quad,\\
\theta^{\an{j}} &~~\text{for } i=j,\dots,n\quad.
\end{cases}
\end{flalign}
Their duals $\tilde{d}_i : \g^{\oplus n+1}\to \g^{\oplus n}$ and 
$\tilde{s}_i :  \g^{\oplus n+1}\to \g^{\oplus n+2}$ are morphisms of
Lie algebras.
\sk

We are now ready to provide a precise and fully explicit algebraic description
of the simplicial diagram \eqref{eqn:quotientstackresolution} of formal derived quotient stacks
in terms of the cosimplicial diagram
\begin{flalign}\label{eqn:cosimplicialstackyCDGA}
\xymatrix@C=1em{
\CE^\bullet(\g,B_\bullet) \ar@<0.5ex>[r] \ar@<-0.5ex>[r]~&~ 
\CE^\bullet\big(\g^{\oplus 2}, B_\bullet\otimes H\big) \ar@<1ex>[r] \ar[r] \ar@<-1ex>[r] ~&~ 
\CE^\bullet\big(\g^{\oplus 3}, B_\bullet\otimes H^{\otimes 2}\big)\ar@<0.5ex>[r] \ar@<-0.5ex>[r]\ar@<1.5ex>[r] \ar@<-1.5ex>[r]~&~\cdots
}
\end{flalign}
of stacky CDGAs. The stacky CDGAs $\CE^\bullet\big(\g^{\oplus n+1}, B_\bullet\otimes H^{\otimes n}\big)$
can be described precisely as in Example \ref{ex:CEalgebra}: One just has to replace 
the Lie algebra $\g$ by $\g^{\oplus n+1}$, the chain CDGA $B_\bullet$ by $B_\bullet\otimes H^{\otimes n}$
and use the Lie algebra action defined in \eqref{eqn:inducedLieactions}.
As basis for $\g^{\oplus n+1}$ one can take $\big\{t_a^{\an{j}}\in \g^{\oplus n+1}\big\}$,
where $\{t_a\in\g\}$ is a basis for $\g$, and the dual basis then reads as
$\big\{\theta^{a\an{j}}\in \g^{\vee \oplus n+1}\big\}$, where $\{\theta^a\in\g^{\vee}\}$
is the dual basis for $\g^{\vee}$. With respect to this basis,
the structure constants of $\g^{\oplus n+1}$ are given by
$\big[t_a^{\an{j}},t_b^{\an{k}}\big] = \delta^{jk}\,f^c_{ab}\,t_{c}^{\an{j}}$.
The coface (respectively, codegeneracy) maps in \eqref{eqn:cosimplicialstackyCDGA} are the stacky 
CDGA morphisms obtained from  \eqref{eqn:facemapdual} and \eqref{eqn:facemaptildeLie} (respectively,
\eqref{eqn:degeneracymapdual} and \eqref{eqn:degeneracymaptildeLie}) by the construction
explained in Remark \ref{rem:CEalgebrafunctoriality}. Explicitly,
the $i$-th coface map is given by
\begin{flalign}\label{eqn:coface}
d^i := (d_i^\ast,\tilde{d}_i^\ast)\,:\,
\CE^\bullet\big(\g^{\oplus n},B_\bullet\otimes H^{\otimes n-1}\big)~\longrightarrow~
\CE^\bullet\big(\g^{\oplus n+1},B_\bullet\otimes H^{\otimes n}\big)
\end{flalign}
and the $i$-th codegeneracy map is given by
\begin{flalign}\label{eqn:codegeneracy}
s^i := (s_i^\ast,\tilde{s}_i^\ast) 
\,:\,\CE^\bullet\big(\g^{\oplus n+2}, B_\bullet \otimes H^{\otimes n+1}\big)~\longrightarrow~
\CE^\bullet\big(\g^{\oplus n+1}, B_\bullet \otimes H^{\otimes n}\big)\quad.
\end{flalign}

To conclude, we shall globalize the concepts of strict Poisson structures 
(see Definition \ref{def:Poisson}) and
of perfect modules (see Definition \ref{def:perf}) from stacky CDGAs to derived quotient stacks.
\begin{defi}\label{def:Poissonresolution}
An unshifted {\em strict Poisson structure} on a derived quotient stack $[Y/G]= [\spec\,B_\bullet / \spec\,H]$
is a family, indexed by $n\in\bbZ_{\geq 0}$, consisting of an 
unshifted Poisson structure $\{\,\cdot\,,\,\cdot\,\}_n^{}$
in the sense of Definition \ref{def:Poisson} on each stacky CDGA 
$\CE^\bullet\big(\g^{\oplus n+1}, B_\bullet\otimes H^{\otimes n}\big)$.
The Poisson brackets must be preserved by all totalized coface maps \eqref{eqn:coface} and by all
totalized codegeneracy maps \eqref{eqn:codegeneracy}.
\end{defi}

\begin{defi}\label{def:perquotientstack}
The {\em dg-category of perfect modules} over a derived quotient stack $[Y/G]= [\spec\,B_\bullet / \spec\,H]$
is defined as the homotopy limit
\begin{flalign}
\mathrm{per}\big([Y/G]\big) \,:=\,\holim\Big(
\xymatrix@C=1em{
\mathrm{per}\big(\CE^\bullet(\g,B_\bullet)\big) \ar@<0.5ex>[r] \ar@<-0.5ex>[r]~&~ 
\mathrm{per}\big(\CE^\bullet\big(\g^{\oplus 2}, B_\bullet\otimes H\big)\big) \ar@<1ex>[r] \ar[r] \ar@<-1ex>[r] ~&~ 
\cdots
}
\Big)
\end{flalign}
in the model category of dg-categories of the cosimplicial diagram
that is obtained by applying Definition \ref{def:perf} object-wise to the cosimplicial
diagram \eqref{eqn:cosimplicialstackyCDGA} of stacky CDGAs. The coface and 
codegeneracy maps are obtained from \eqref{eqn:coface} and \eqref{eqn:codegeneracy}
via the induced module construction from Remark \ref{rem:perfunctoriality}.
\end{defi}

\begin{propo}\label{prop:dgCatquotientstack}
Suppose that $G=\spec\, H$ is reductive.
Then the dg-category of perfect modules $\mathrm{per}\big([Y/G]\big)$
from Definition \ref{def:perquotientstack} admits the following explicit model: 
Its objects are all cofibrant and perfect left $B_\bullet$-dg-modules
$V_\bullet\in {}_{B_\bullet}\Mod$ that are endowed with a compatible
$H$-coaction $\rho_V^{} : V_\bullet\to V_\bullet\otimes H$, i.e.\
$\rho_V^{}(b\cdot s) = \rho(b)\cdot \rho_V^{}(s)$ for all $b\in B_\bullet$ and $s\in V_\bullet$.
For two objects $V_\bullet$ and $W_\bullet$, the cochain complex of morphisms
is given by the sub-complex
\begin{flalign}
\hom_{B_\bullet}^{H}\big(V_\bullet,W_\bullet\big)\,\subseteq\,
\hom_{B_\bullet}^{~}(V_\bullet,W_\bullet)
\end{flalign}
of left $B_\bullet$-module morphisms that are strictly equivariant with respect to the $H$-coactions, i.e.\
$L\in\hom_{B_\bullet}^{~}(V_\bullet,W_\bullet)$ lies in $\hom_{B_\bullet}^{H}\big(V_\bullet,W_\bullet\big)$ 
if and only if $\rho_{W}^{}\circ L = (L\otimes\id )\circ \rho_V^{}$.
\end{propo}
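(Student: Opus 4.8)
The plan is to compute the homotopy limit in Definition \ref{def:perquotientstack} by descent: the cosimplicial diagram \eqref{eqn:cosimplicialstackyCDGA} is the Chevalley--Eilenberg resolution of the simplicial presentation of $[Y/G]$ by formal quotients, so its homotopy limit should be the dg-category of \emph{descent data} for $Y\to[Y/G]$, which classically are the $G$-equivariant, i.e.\ $H$-comodule, modules on the affine $Y=\spec B_\bullet$. I would exhibit the proposed dg-category of $H$-comodule $B_\bullet$-modules as a coaugmentation of \eqref{eqn:cosimplicialstackyCDGA} and prove that this coaugmentation realizes the homotopy limit, with reductivity of $G$ supplying exactly the vanishing that collapses all higher homotopy-coherent descent and all higher cohomology in the morphism complexes.

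First I would use Proposition \ref{prop:CEmodulescharacterization} to replace the level-$n$ term by the equivalent dg-category of cofibrant perfect $(B_\bullet\otimes H^{\otimes n})$-dg-modules carrying a compatible $\g^{\oplus n+1}$-action, with morphism complexes $\hat{\Tot}\,\CE^\bullet\big(\g^{\oplus n+1},\hom_{B_\bullet\otimes H^{\otimes n}}(-,-)\big)$, and identify the coface and codegeneracy dg-functors with the induced-module functors of Remark \ref{rem:perfunctoriality} along \eqref{eqn:coface}--\eqref{eqn:codegeneracy}, spelled out through the explicit coactions \eqref{eqn:coaction} and the maps \eqref{eqn:facemapdual}--\eqref{eqn:degeneracymaptildeLie}. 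Next I would construct the coaugmentation: to a strict $H$-comodule $B_\bullet$-module $(V_\bullet,\rho_V^{})$ I associate at level $0$ its infinitesimal action $\rho_V^{}(t)(s)=s_{\und 0}\,t(s_{\und 1})$, and I check that its two coface images into level $1$ are canonically identified by $\rho_V^{}$ itself, the cocycle condition at level $2$ being coassociativity of $\rho_V^{}$. On morphisms this functor is the inclusion of the strictly equivariant maps, so by construction it lands in $\hom_{B_\bullet}^{H}(V_\bullet,W_\bullet)$.

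The heart of the argument is to prove this coaugmentation is an equivalence, and this is where I would invoke reductivity. On objects, the counit of $H$ already provides an extra codegeneracy on the simplicial presentation, witnessing effective descent, so every object of the homotopy limit is a descent datum for $Y\to[Y/G]$; reductivity then guarantees, via semisimplicity of $\Rep(G)$, that such a datum strictifies to a genuine $H$-coaction with no residual higher coherence. On morphisms, reductivity enters decisively through exactness of $G$-invariants (equivalently the Reynolds operator on $H$): the Bousfield--Kan totalization of the cosimplicial morphism complex $n\mapsto\hat{\Tot}\,\CE^\bullet\big(\g^{\oplus n+1},\hom_{B_\bullet\otimes H^{\otimes n}}(V^{(n)},W^{(n)})\big)$ then has no higher cohomology in the group direction and collapses onto its normalized degree-zero part, which is exactly the complex $\hom_{B_\bullet}^{H}(V_\bullet,W_\bullet)$ of strictly equivariant maps.

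I expect the main obstacle to be making this collapse precise at the level of dg-categories rather than of mere cochain complexes: the homotopy limit is a higher-categorical limit, so the Reynolds operator must be upgraded to a \emph{coherent} contraction of the whole cosimplicial dg-category, compatibly with the Chevalley--Eilenberg differentials already present at each level. A secondary, bookkeeping-heavy difficulty is disentangling the two directions that coexist here --- the simplicial/bar direction, in which the infinitesimal $\g^{\oplus n+1}$-actions must integrate to a genuine $H$-coaction, and the internal Chevalley--Eilenberg direction, already handled levelwise by Proposition \ref{prop:CEmodulescharacterization} --- and verifying that reductivity is exactly what strictifies the former while leaving the explicit morphism complexes of the latter intact.
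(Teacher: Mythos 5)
Your coaugmentation functor and your treatment of the morphism complexes are essentially sound and parallel the paper: the coaction supplies the natural isomorphism between the two coface images, coassociativity gives the cocycle condition, and on morphisms the homotopy limit reduces (after the Chevalley--Eilenberg directions contribute acyclically) to the normalized group cohomology complex $\hat{\Tot}\, N^\bullet\big(G,\hom_{B_\bullet}(V_\bullet,W_\bullet)\big)$, which reductivity collapses onto the invariants $\hom_{B_\bullet}^{H}(V_\bullet,W_\bullet)$. One difference of route: the paper first invokes \cite[Proposition 3.11]{PridhamUnshifted} to replace the Chevalley--Eilenberg resolution \eqref{eqn:cosimplicialstackyCDGA} by the plain flat hypercover $\mathrm{per}(B_\bullet)\rightrightarrows\mathrm{per}(B_\bullet\otimes H)\cdots$, whereas you stay with the stacky resolution throughout; that direct route is viable (it is the one the paper carries out in the quantized setting, Proposition \ref{prop:quantdgCatquotientstack}), but it is not free, and it is exactly where your argument breaks down.

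The genuine gap is essential surjectivity. Your claim that ``the counit of $H$ provides an extra codegeneracy, witnessing effective descent, so every object of the homotopy limit is a descent datum'' does not work as stated: the \v{C}ech nerve of $Y\to[Y/G]$ acquires its extra degeneracy only after base change along $Y\to[Y/G]$, so at best this identifies objects of the homotopy limit with \emph{homotopy-coherent} descent data; it does nothing to rectify them. And ``semisimplicity of $\Rep(G)$ strictifies the datum'' is an assertion, not an argument --- an object of a homotopy limit of dg-categories is a homotopy-coherent family, and one needs an actual rectification device before one can even speak of its components concretely. This is where the paper's proof does its real work: it invokes the strictification theorem \cite[Corollary 18.7]{HirschowitzSimpson}, adapted to pseudo-model categories as in \cite[Corollary 1.3.7.4]{hag2}, to present $\mathrm{per}\big([Y/G]\big)$ as the explicit dg-category of homotopy-Cartesian cosimplicial dg-modules over the cosimplicial CDGA \eqref{eqn:cosimplicialCDGA}; it then observes that the homology groups of a homotopy-Cartesian module form a \emph{Cartesian} cosimplicial module, which is equivalent to a strict $H$-coaction on the $\mathrm{H}_0(B_\bullet)$-modules $\mathrm{H}_i(M^0_\bullet)$; and finally it runs an induction on the lowest non-vanishing homology group, using projective resolutions and cones, to show every object is homotopy equivalent to one in the image of $\mathcal{C}$. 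Note also that reductivity is not what strictifies the objects (that is the strictification theorem plus the homology induction); reductivity enters through the collapse of the morphism complexes, which your proposal does capture. Without a substitute for the strictification step, your proof of essential surjectivity cannot get started.
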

\begin{proof} 
By the proof of \cite[Proposition 3.11]{PridhamUnshifted}, one can calculate 
$\mathrm{per}\big([Y/G]\big)$ using any flat hypercover of the stack $[Y/G]$.
Using as in \eqref{eqn:quotientstack} the hypercover given by the simplicial affine
scheme $Y \Leftarrow Y \times G \Lleftarrow Y \times G^2 \cdots$, 
we obtain an equivalence
\begin{flalign}
\mathrm{per}\big([Y/G]\big) \,\simeq \,\holim\Big(
\xymatrix@C=1em{
\mathrm{per}\big(B_\bullet\big) \ar@<0.5ex>[r] \ar@<-0.5ex>[r]~&~ 
\mathrm{per}\big( B_\bullet\otimes H\big) \ar@<1ex>[r] \ar[r] \ar@<-1ex>[r] ~&~ 
\cdots
}
\Big)\quad,
\end{flalign}
which we will use to simplify the proof.
(A direct proof using the description of Definition \ref{def:perquotientstack} also exists.
See Proposition \ref{prop:quantdgCatquotientstack} and take the classical limit.)
\sk

Let us denote by $\mathcal{C}$ the dg-category described in the statement of
this proposition. Forgetting the $H$-coaction gives us a dg-functor 
$\mathcal{C} \to \mathrm{per}\big(B_\bullet\big)$. Composing this with the coface maps
$d_0^\ast,d_1^\ast$ gives us dg-functors $\mathcal{C} \to \mathrm{per}\big( B_\bullet\otimes H\big)$ 
which are not equal, but are naturally isomorphic via the $H$-coaction. 
This natural isomorphism $\theta$ satisfies the 
cocycle condition $d_1^\ast\theta =d_2^\ast\theta \circ d_0^\ast\theta$ in 
$\mathrm{per}\big( B_\bullet\otimes H^{\otimes 2}\big)$, which means that we have constructed a 
dg-functor from $\mathcal{C}$ to the $2$-categorical limit of the diagram above. Since the 
$2$-categorical limit maps to the homotopy limit, this gives us a dg-functor 
$\mathcal{C} \to \mathrm{per}\big([Y/G]\big)$, so it remains to show that this is a quasi-equivalence. 
\sk

On morphisms, the dg-functor is given by
\begin{flalign}
\resizebox{0.9\hsize}{!}{$\hom_{B_\bullet}^{H}(V_\bullet,W_\bullet) ~\longrightarrow~ \holim\Big(
\xymatrix@C=1em{
\hom_{B_\bullet}^{~}(V_\bullet,W_\bullet)\ar@<0.5ex>[r] \ar@<-0.5ex>[r]~&~
\hom_{B_\bullet\otimes H}^{~}(V_\bullet\otimes H,W_\bullet\otimes H)
 \ar@<1ex>[r] \ar[r] \ar@<-1ex>[r] ~&~ 
\cdots
}
\Big)\quad.$}
\end{flalign}
The homotopy limit in the target can be computed
by the normalized group cohomology complex $\hat{\Tot} N^\bullet(G,\hom_{B_\bullet}(V_\bullet,W_\bullet))$.
Since $G$ is reductive, the higher group cohomology groups vanish, hence we obtained the
desired quasi-isomorphism. 
\sk

It remains to establish essential surjectivity.
The functor sending a chain CDGA $B_\bullet$ to the category of all left  
$B_\bullet$-dg-modules is a left Quillen hypersheaf in the sense of 
\cite[\S 17]{HirschowitzSimpson}. By the strictification theorem \cite[Corollary 18.7]{HirschowitzSimpson}, 
adapted to a diagram of full subcategories (specifically, pseudo-model categories) as in 
\cite[Corollary 1.3.7.4]{hag2}, we can thus characterize $\mathrm{per}\big([Y/G]\big)$ as the 
dg-category $\mathrm{per}\big([Y/G]\big)_{\mathrm{cart}}$ of those left modules 
\begin{flalign}
 \xymatrix@C=1em{
M^0_{\bullet} \ar@<0.5ex>[r] \ar@<-0.5ex>[r]~&~ 
M^1_{\bullet} \ar@<1ex>[r] \ar[r] \ar@<-1ex>[r] ~&~ 
M^2_{\bullet} \ar@<1.5ex>[r] \ar@<0.5ex>[r] \ar@<-0.5ex>[r]\ar@<-1.5ex>[r] ~&~ 
\cdots
}
\end{flalign}
in cosimplicial chain complexes over the cosimplicial CDGA
\begin{flalign}\label{eqn:cosimplicialCDGA}
\xymatrix@C=1em{
B_\bullet \ar@<0.5ex>[r] \ar@<-0.5ex>[r]~&~ 
 B_\bullet\otimes H \ar@<1ex>[r] \ar[r] \ar@<-1ex>[r] ~&~ 
 B_\bullet\otimes H^{\otimes 2}  \ar@<1.5ex>[r] \ar@<0.5ex>[r] \ar@<-0.5ex>[r]\ar@<-1.5ex>[r] ~&~ 
\cdots
}
\end{flalign}
which
\begin{enumerate}
\item are cofibrant in the projective model structure on cosimplicial dg-modules over \eqref{eqn:cosimplicialCDGA},
\item have $M^0_{\bullet}$ perfect as a $B_{\bullet}$-dg-module, and
\item are homotopy-Cartesian in the sense that the morphisms
\begin{flalign}
 d^j \,:\, \big(B_{\bullet}\otimes H^{\otimes n+1}\big)  \otimes_{(B_{\bullet}\otimes H^{\otimes n})}^{}
 M_\bullet^n~\longrightarrow~ M_\bullet^{n+1}
\end{flalign}
from the induced modules are all quasi-isomorphisms.
\end{enumerate}
We can think of these as $B_{\bullet}$-dg-modules equipped with a strong homotopy $H$-coaction,
which we will now strictify.
\sk

Given an object $V_\bullet \in \mathcal{C}$, the associated  object of 
$\mathrm{per}\big([Y/G]\big)_{\mathrm{cart}}$ is a cofibrant replacement of the homotopy-Cartesian 
cosimplicial module  $\big\{C^n(G,V_{\bullet})\big\}_{n\geq 0}$ given by setting 
$C^n(G,V_{\bullet}):=V_{\bullet}\otimes H^{\otimes n}$, with the coface 
maps $d^i$ given by the coaction $\rho$, coproducts and units,
and the codegeneracy maps $s^i$ by counits. In other words, $ C^\bullet(G,V_\bullet)$ 
is the complex calculating algebraic group cohomology of $G$ with coefficients in $V_\bullet$.
For any homotopy-Cartesian module $M_{\bullet}^{\bullet} \in \mathrm{per}\big([Y/G]\big)_{\mathrm{cart}}$, 
the homology groups $\mathrm{H}_i\big(M^{\bullet}_\bullet)$ form a Cartesian cosimplicial module, 
in the sense that the morphisms
\begin{flalign}
 d^j \,:\, 
 \big(\mathrm{H}_0(B_\bullet)\otimes H^{\otimes n+1}\big) \otimes_{(\mathrm{H}_0(B_\bullet) \otimes H^{\otimes n})}^{}  \mathrm{H}_i\big(M^n_\bullet\big)  ~\longrightarrow~ \mathrm{H}_i\big(M^{n+1}_\bullet\big)
\end{flalign}
are all isomorphisms. These give, and are 
determined by, $H$-coactions on the $\mathrm{H}_0(B_\bullet)$-modules 
$\mathrm{H}_i\big(M^0_\bullet\big)$. By taking projective resolutions and cones, we can then show by 
induction on the lowest non-zero homology group of $M^0_\bullet$ that every object is homotopy 
equivalent to one in the image of $\mathcal{C}$.
\end{proof}

\begin{rem}\label{rem:reductive}
Our hypothesis that $G$ is reductive is crucial to obtain
the simple model for the homotopy limit in Proposition \ref{prop:dgCatquotientstack}.
Dropping this hypothesis would lead to homotopy coherent $H$-comodule structures, see \cite{holimdgCat},
which are much harder to work with. 
We would like to recall that many of the standard examples of groups are 
reductive, such as $\mathrm{GL}(n)$, $\mathrm{SL}(n)$, $\mathrm{O}(n)$, 
$\mathrm{SO}(n)$, $\mathrm{Sp}(n)$ and tori.
\end{rem}


\section{\label{sec:cotangent}Quantization of derived cotangent stacks}
The goal of this section is to study Poisson structures 
and their quantizations for an explicit class of examples of derived quotient stacks.
Consider an ordinary smooth affine scheme $X = \spec\,A$, i.e.\ $A\in \CAlg$ is a 
smooth commutative algebra, together with an action
$X\times G\to X$  of a smooth affine group scheme $G=\spec\,H$, i.e.\
$H$ is a smooth commutative Hopf algebra. The corresponding
quotient stack $[X/G]$ is defined via \eqref{eqn:quotientstack}
as a colimit in the $\infty$-category of derived stacks. We are interested
in the derived cotangent stack $T^\ast[X/G]$, which by \cite{CalaqueTangent} carries
a canonical unshifted symplectic structure and hence also an unshifted Poisson structure.
This section is divided into four subsections: In Subsection \ref{subsec:sympred},
we compute $T^\ast[X/G]\simeq [T^\ast X/\!\!/G]$ explicitly as a derived symplectic reduction,
see e.g.\ \cite{SafronovImplosion} and also \cite[Section 4]{BVgroup} for more details.
In particular, this provides a presentation of the derived cotangent stack $T^\ast[X/G]$ 
as a derived quotient stack.
In Subsection \ref{subsec:cotangentresolution}, we apply the resolution techniques
from Subsection \ref{subsec:resolution} to the example $T^\ast[X/G]\simeq [T^\ast X/\!\!/G]$
and spell out its canonical unshifted Poisson structure fully explicitly.
In Subsection \ref{subsec:localquantization}, we construct quantizations
of the dg-categories of perfect modules over the individual stacky CDGAs
entering our resolution, which are then globalized in Subsection \ref{subsec:globalquantization}
to a quantization of the dg-category of perfect modules
over $T^\ast[X/G]\simeq [T^\ast X/\!\!/G]$.


\subsection{\label{subsec:sympred}Symplectic reduction}
We recall from \cite{SafronovImplosion} that 
the derived cotangent stack $T^\ast[X/G]\simeq [T^\ast X/\!\!/G]$
is equivalent to the derived symplectic reduction of the Hamiltonian
$G$-space $(T^\ast X,\omega,\mu)$ given by 
\begin{itemize}
\item the cotangent bundle
$T^\ast X$ over $X=\spec\,A$, 
\item the canonical symplectic structure $\omega := \dd^{\dR}\lambda\in\Omega^2(T^\ast X)$, where
$\lambda\in \Omega^1(T^\ast X)$ is the tautological $1$-form, and 
\item the moment map
$\mu : T^\ast X \to \g^\vee$ defined by $\langle\mu,t\rangle := -\iota_{\rho(t)}\lambda$, 
for all $t\in\g$, i.e.\ the contraction
of $-\lambda$ against the vector field $\rho(t)$ obtained from the Lie algebra action on $T^\ast X$.
\end{itemize}
Symplectic reduction is a two-step construction: First, one takes
the fiber product
\begin{flalign}
\xymatrix{
\ar@{-->}[d]\mu^{-1}(0)\ar@{-->}[r]~&~\mathrm{pt} \ar[d]^-{0}\\
T^\ast X \ar[r]_-{\mu}~&~\g^\vee
}
\end{flalign}
in the $\infty$-category of derived stacks, which determines the derived zero locus
of the moment map, and then one forms the derived quotient stack
\begin{flalign}
[T^\ast X/\!\!/G]\,:=\, [\mu^{-1}(0)/G]
\end{flalign}
using the inherited $G$-action on $\mu^{-1}(0)$. Since the 
derived stack $\mu^{-1}(0)$ is affine, the derived
cotangent stack $T^\ast[X/G]\simeq [T^\ast X/\!\!/G]= [\mu^{-1}(0)/G]$
is a special instance of the class of examples $[Y/G]$ discussed in Subsection \ref{subsec:resolution}.
\sk

Using the calculation of homotopy pushouts in \cite[Section 4]{BVgroup}, 
we can spell out fully explicitly the chain CDGA 
\begin{flalign}
B_\bullet \,:=\, \O(\mu^{-1}(0))_\bullet \,\in\,\dgCAlg_{\geq 0}
\end{flalign}
of functions on $\mu^{-1}(0)$ and its $H$-comodule structure
$\rho :  B_\bullet\to B_\bullet \otimes H$.
This however requires some preparations and introducing some notations.
Let us start by recalling that, as input data, we are given a 
smooth commutative algebra $A=\O(X)\in\CAlg$
together with an $H$-comodule structure $\rho_A^{} : A\to A\otimes H$.
The function algebra on the cotangent bundle 
\begin{flalign}
\O(T^\ast X)\,=\, \Sym_{A}^{} \T_{\! A}\,\in\,\CAlg\quad
\end{flalign}
is given by the symmetric algebra over $A$ of the $A$-module 
$\T_{\! A} = \Der(A)\in {}_A\Mod$ of derivations of $A$, i.e.\ 
an element $v\in \T_{\! A}$ is a linear map $v:A\to A$ satisfying
the Leibniz rule $v(a\,a^\prime) = v(a)\,a^\prime + a\,v(a^\prime)$, for all $a,a^\prime \in A$.
We endow the $A$-module $\T_{\! A}=\Der(A)$ with the adjoint $H$-coaction
$\rho_{\T_{\! A}}^{} : \T_{\! A} \to \T_{\! A}\otimes H$ defined by
\begin{flalign}
\rho_{\T_{\! A}}^{}(v) (a)\,=\, \big(v(a^{}_{\und{0}})\big)_{\und{0}}^{}\otimes \, \big(v(a^{}_{\und{0}})\big)_{\und{1}}^{}\, S(a_{\und{1}}^{})\quad,
\end{flalign}
for all $v\in \T_{\! A}$ and $a\in A$, which  equivalently can be defined as the dual of the 
canonical coaction $\rho_{\Omega^1_A}: \Omega^1_A\to \Omega^1_A\otimes H $ on K\"ahler $1$-forms.
Observe that the evaluation map $\T_{\! A} \otimes A \to A\,,~v\otimes a\mapsto v(a)$ is $H$-equivariant.
Using tensor product coactions, this yields an $H$-coaction $\rho_{\Sym_A\T_{\! A}}^{} : 
\Sym_A\T_{\! A}\to \Sym_A \T_{\! A}\otimes H$ on the symmetric algebra.
We further endow the dual Lie algebra $\g^\vee$ with the adjoint $H$-coaction
$\rho_{\g^{\vee}}^{} : \g^\vee\to \g^\vee\otimes H$ given by
\begin{flalign}
\rho_{\g^{\vee}}^{}(\theta)\,=\, \theta_{\und{2}}^{}\otimes S(\theta_{\und{1}}^{})\,\theta_{\und{3}}^{}\quad,
\end{flalign}
for all $\theta \in \g^\vee = H^+/H^{+2}$, and the Lie algebra
$\g$ with the dual  $\rho_{\g}^{}:\g\to\g\otimes H$ of this coaction.
By construction, the evaluation map $\langle\,\cdot\, , \,\cdot\,\rangle :
\g^\vee\otimes \g \to\bbK $ is then $H$-equivariant.
The tautological $1$-form $\lambda \in \Omega^1(T^\ast X) = \Omega^1_{\Sym_A\T_{\! A}}$
can be defined via the coevaluation map $\mathrm{coev} : A\to \T_{\! A}\otimes \Omega^1_A$
for the dual pair $(\T_{\! A},\Omega^1_A)$ of $A$-modules given by the derivations
$\T_{\! A}=\Der(A)$ and the K\"ahler $1$-forms $\Omega^1_A$.
Explicitly, we have that
\begin{flalign}
\lambda\,:=\,\mathrm{coev}(\oone)\,\in  \T_{\! A}\otimes \Omega^1_A\,\subseteq \, \Sym_A \T_{\! A} \otimes \Omega^1_A\,\subseteq\, \Omega^1_{\Sym_A\T_{\! A}}\quad.
\end{flalign}
The moment map $\mu : T^\ast X\to \g^\vee$ is algebraically given by
the algebra map
\begin{subequations}\label{eqn:momentmap}
\begin{flalign}
\mu^\ast\,:\ \Sym\,\g ~\longrightarrow~\Sym_A\,\T_{\! A}
\end{flalign}
that is defined on the generators $t\in \g$ by
\begin{flalign}
\mu^\ast(t)\,=\,-\iota_{\rho_{\Sym_A\T_{\! A}}^{}(t)}\lambda\, \in\, \T_{\! A}\, \subseteq\, \Sym_A\T_{\! A}\quad,
\end{flalign}
\end{subequations}
where $\iota : \T_{\Sym_A \T_{\! A}} \otimes \Omega^1_{\Sym_A\T_{\! A}}\to\Sym_A\T_{\! A}$ 
denotes the contraction between the derivations and the K\"ahler $1$-forms on $\Sym_A\T_{\! A}$,
and $\rho_{\Sym_A\T_{\! A}}^{}(t)\in \T_{\Sym_A \T_{\! A}}= \Der(\Sym_A \T_{\! A})$ is the Lie algebra action that is
induced by the $H$-coaction $\rho_{\Sym_A\T_{\! A}}^{} :  \Sym_A\T_{\! A}\to \Sym_A \T_{\! A}\otimes H$.
We are now in the position to extract from \cite[Section 4]{BVgroup} that
\begin{flalign}\label{eqn:Bbulletsym}
B_{\bullet}\,:=\,\O(\mu^{-1}(0))_\bullet\,=\,\Sym_A\Big(
\xymatrix@C=1.5em{
\T_{\! A} ~&~\ar[l]_-{\mu^\ast} A\otimes \g_{[-1]}^{}
} \Big)
\end{flalign}
is the symmetric algebra over $A$ of a chain complex of $A$-modules (concentrated in degrees $0$ and $1$)
with differential given by the $A$-linear extension 
$\mu^\ast : A\otimes \g_{[-1]}^{} \to \T_{\! A}\,,~a\otimes t \mapsto a\,\mu^\ast(t)$
of the moment map \eqref{eqn:momentmap}. In other words, this means that $B_\bullet$
is generated by elements of the form $a\in A\subseteq B_0$,
$v\in \T_{\! A}\subseteq B_0$ and $t\in \g_{[-1]}^{}\subseteq B_{1}$
and that the chain differential $\partial$ is given on the generators by
\begin{flalign}\label{eqn:chaindiffmu}
\partial (a)\,=\,0~~,\quad \partial(v)\,=\,0~~,\quad \partial(t)\,=\, \mu^\ast(t) \quad,
\end{flalign}
with $\mu^\ast$ given explicitly in \eqref{eqn:momentmap}.
The $H$-coaction $\rho: B_\bullet\to B_\bullet\otimes H$
is defined on the generators by the $H$-coactions introduced above, i.e.\
\begin{flalign}\label{eqn:rhosymp}
\rho(a)\, =\,\rho_A^{}(a)~~,\quad \rho(v)\,=\,\rho_{\T_{\! A}}^{}(v)~~,\quad\rho(t)\,=\,\rho_{\g}^{}(t)\quad.
\end{flalign}
Note that $H$-equivariance of the differential $\partial$ is a consequence
of $H$-equivariance of the moment map.


\subsection{\label{subsec:cotangentresolution}Resolution by stacky CDGAs and Poisson structure}
We have shown in the previous subsection that the derived
cotangent stack $T^\ast[X/G]\simeq [T^\ast X/\!\!/G]= [\mu^{-1}(0)/G]$
is a derived quotient stack, hence the resolution techniques from Subsection 
\ref{subsec:resolution} can be applied. The aim of this subsection is
to spell out this resolution by stacky CDGAs fully explicitly
and to describe the canonical unshifted Poisson structure on $T^\ast[X/G]$
in terms of this resolution.
\sk

The $n$-th level of the cosimplicial diagram \eqref{eqn:cosimplicialstackyCDGA}
is given by the Chevalley-Eilenberg stacky CDGA $\CE^\bullet\big(\g^{\oplus n+1},B_\bullet\otimes H^{\otimes n}\big)$
with $B_\bullet\in \dgCAlg_{\geq 0}$ given explicitly in \eqref{eqn:Bbulletsym}.
The underlying bigraded commutative algebra (obtained by forgetting both 
differentials) is given by
\begin{flalign}\label{eqn:CEdoublesharp}
\CE^\sharp\big(\g^{\oplus n+1},B_\sharp\otimes H^{\otimes n}\big)\,\cong\, 
\big(\Sym\,\g_{}^{\vee \oplus n+1 [-1]}\big)^{\sharp}\otimes \big(\Sym\, \g_{[-1]}^{}\big)_{\sharp}\otimes 
\Sym_A \T_{\! A}\otimes H^{\otimes n}\quad.
\end{flalign}
This bigraded algebra is generated by the following elements: 
The generators of bidegree $\mycom{0}{0}$ are 
$a\in A$, $v\in \T_{\! A}$ and $h^{\an{k}}\in H^{\otimes n}$, for $k=1,\dots,n$.
(Recall the notation introduced in \eqref{eqn:elementconvention}.)
The generators of bidegree $\mycom{0}{1}$ are
$t\in \g_{[-1]}^{}$ and 
the generators of bidegree $\mycom{1}{0}$ are
$\theta^{\an{j}}\in \g_{}^{\vee \oplus n+1 [-1]}$, for $j=0,\dots, n$. 
(Recall the notation introduced in \eqref{eqn:elementconvention3}.)
Recalling also \eqref{eqn:chaindiffmu},
we observe that the chain differential acts on the generators as
\begin{flalign}\label{eqn:chaincotangent}
\partial(a)\,=\,0~~,\quad \partial(v)\,=\,0~~,\quad
\partial(t) \,=\, \mu^\ast(t)~~,\quad
\partial\big(h^{\an{k}}\big)\,=\, 0~~,\quad
\partial\big(\theta^{\an{j}}\big)\,=\,0\quad.
\end{flalign}
Using \eqref{eqn:inducedLieactions},
we find that the cochain differential acts on the generators as
\begin{flalign}\label{eqn:cochaincotangent}
\nn \delta(a)\,&=\, \theta^{b \an{0}}\,\rho_{A}^{}(t_{b})(a)~~,\quad
\delta(v)\,=\,\theta^{b \an{0}}\,\rho_{\T_{\! A}}^{}(t_{b})(v)~~,\quad
\delta(t)\,=\,\theta^{b \an{0}}\,\rho_{\g}^{}(t_{b})(t)\quad,\\
\delta\big(h^{\an{k}}\big)\,&=\,\theta^{b \an{k-1}}\,\rho^{\mathrm{R}}(t_{b})(h)^{\an{k}}
+ \theta^{b \an{k}}\,\rho^{\mathrm{L}}(t_{b})(h)^{\an{k}}~~,
\quad  \delta\big(\theta^{a \an{j}}\big)\,=\,-\tfrac{1}{2} f^{a}_{bc}~\theta^{b\an{j}}\,\theta^{c\an{j}}\quad,
\end{flalign}
where we recall that $\{t_a\in\g\}$ and $\{\theta^a\in\g^\vee\}$ denotes a pair of dual bases.
\sk

The canonical unshifted Poisson structure on the formal derived 
cotangent stack $T^\ast[X/\g]$ is described by the following Poisson bracket
on the totalized stacky CDGA $\hat{\Tot}\CE^\bullet(\g,B_\bullet)$:
For the generators of total degree $0$, we set
\begin{subequations}\label{eqn:Poissonlevel0}
\begin{flalign}
\{a,a^\prime\}_0^{}\,=\,0~~,\quad
\{v,a\}_0^{}\,=\, v(a)\,=\, -\{a,v\}_{0}^{}~~,\quad
\{v,v^\prime\}_{0}^{}\,=\,[v,v^\prime]\quad,
\end{flalign}
where $v(a)\in A$ denotes the evaluation of the derivation $v\in\T_{\! A}=\Der(A)$ on $a\in A$
and $[v,v^\prime]\in\T_{\! A}$ denotes the Lie bracket (i.e.\ the commutator) of derivations. 
For the generators $t\in\g_{[-1]}$ in total degree $-1$ 
and the generators $\theta\in \g_{}^{\vee [-1]}$ 
in total degree $+1$, we set
\begin{flalign}
\{t, \theta \}_0^{}\,=\, -\langle\theta,t\rangle \,=\,\{\theta, t\}_0^{} \quad,
\end{flalign}
\end{subequations}
where $\langle\,\cdot\,,\,\cdot\,\rangle : \g^\vee \otimes\g \to \bbK$
denotes the duality pairing. (In a pair of dual bases, the latter Poisson brackets
reads as $\{t_a,\theta^b\}_0^{} = -\delta_a^b = \{\theta^b,t_a\}_0^{}$.)
All other Poisson brackets between generators are taken to be trivial.
It is easy to prove that $\{\,\cdot\,,\,\cdot\,\}_0^{}$ satisfies the properties
listed in Definition \ref{def:Poisson}, hence it defines an unshifted 
Poisson structure on the stacky CDGA $\CE^\bullet\big(\g,B_\bullet\big)$.
\sk

To obtain an unshifted strict Poisson structure on the derived 
cotangent stack $T^\ast[X/G]\simeq [\mu^{-1}(0)/G]$,
we have to construct a compatible family $\{\,\cdot\, , \,\cdot\,\}_n^{}$ of unshifted
Poisson structures on $\CE^\bullet\big(\g^{\oplus n+1},B_\bullet\otimes H^{\otimes n}\big)$ 
that extends \eqref{eqn:Poissonlevel0}, see Definition \ref{def:Poissonresolution}.
Let us present a concrete construction.
From \eqref{eqn:facemapdual} and \eqref{eqn:facemaptildeLie},
we find that, for all $m\in\bbZ_{\geq 1}$, the last of the totalized coface maps 
$d^m := (d^{\ast}_m, \tilde{d}^\ast_m): 
\hat{\Tot}\CE^\bullet\big(\g^{\oplus m},B_\bullet\otimes H^{\otimes m-1}\big) \to 
\hat{\Tot}\CE^\bullet\big(\g^{\oplus m+1},B_\bullet\otimes H^{\otimes m}\big)$ acts as the identity
on all generators. Hence, from the family of compatibility conditions 
$d^m \{\,\cdot\,,\,\cdot\,\}_{m-1}^{} = \{d^m(\,\cdot\,),d^m(\,\cdot\,)\}_m^{}$,
one finds that $\{\,\cdot\, , \,\cdot\,\}_{n}^{}$ must act as in \eqref{eqn:Poissonlevel0} among generators
of the type $a,v,t,\theta^{\an{0}}\in \hat{\Tot}\CE^\bullet\big(\g^{\oplus n+1},B_\bullet\otimes H^{\otimes n}\big)$,
i.e.\
\begin{flalign}\label{eqn:Poissonleveln1}
\{a,a^\prime\}_n^{}\,=\,0~~,\quad
\{v,a\}_n^{}\,=\, v(a)~~,\quad
\{v,v^\prime\}_{n}^{}\,=\,[v,v^\prime]~~,\quad
\big\{t , \theta^{\an{0}}\big\}_n^{}\,=\, -\langle\theta,t\rangle\quad.
\end{flalign}
Demanding that the Poisson brackets of the form $\{\,\cdot\, , h^{\an{k}}\}_n^{}=0$
are trivial, for all $k=1,\dots,n$, it remains to determine
$\{\,\cdot\, , \theta^{\an{j}}\}_n^{}$, for all $j=1,\dots,n$.
Writing again $d^i := (d^{\ast}_i, \tilde{d}^\ast_i)$
for the totalized coface maps, we obtain from \eqref{eqn:facemaptildeLie} that 
$\theta^{\an{j}} = d^{n}\cdots d^{j+1}(d^0)^j (\theta) \in 
\hat{\Tot}\CE^\bullet\big(\g^{\oplus n+1},B_\bullet\otimes H^{\otimes n}\big) $,
with $\theta\in \hat{\Tot}\CE^\bullet(\g,B_\bullet)$. Hence, the required 
compatibility conditions enforce that
\begin{flalign}\label{tmp:compatibility1}
\big\{d^{n}\cdots d^{j+1}(d^0)^j(\,\cdot\,) , \theta^{\an{j}}\big\}_n^{}
\,=\,d^{n}\cdots d^{j+1}(d^0)^j \{\,\cdot\, , \theta\}_0^{}\quad.
\end{flalign}
Writing $b\in B_\bullet$ for one of the generators
$a,v,t\in \hat{\Tot}\CE^\bullet\big(\g^{\oplus n+1},B_\bullet\otimes H^{\otimes n}\big)$,
we compute using \eqref{eqn:facemapdual} that
\begin{flalign}\label{tmp:compatibility2}
\big\{d^{n}\cdots d^{j+1}(d^0)^j(b) , \theta^{\an{j}}\big\}_n^{}
\,=\, \big\{ b_{\und{0}}^{}\,b_{\und{1}}^{\an{1}}\,\cdots\,b_{\und{j}}^{\an{j}} , \theta^{\an{j}} \big\}_n^{}
\,=\,\big\{ b_{\und{0}}^{} , \theta^{\an{j}} \big\}_n^{}~\,b_{\und{1}}^{\an{1}}\,\cdots\,b_{\und{j}}^{\an{j}}\quad,
\end{flalign}
where in the last step we have used that $\{\,\cdot\,, h^{\an{k}} \}_n^{}=0$.
Combining \eqref{tmp:compatibility1} and \eqref{tmp:compatibility2},
we obtain
\begin{flalign}
\big\{b, \theta^{\an{j}} \big\}_n^{} \,=\, \big(d^{n}\cdots d^{j+1}(d^0)^j 
\{b_{\und{0}}^{} , \theta\}_0^{}\big)~S(b_{\und{1}}^{})^{\an{j}}\,S(b_{\und{2}}^{})^{\an{j-1}}\cdots
S(b_{\und{j}}^{})^{\an{1}}\quad.
\end{flalign}
Recalling also \eqref{eqn:Poissonlevel0}, we obtain that the only non-vanishing Poisson
brackets of this type are
\begin{flalign}\label{eqn:Poissonleveln2}
\big\{t , \theta^{\an{j}}\big\}_n^{}\,=\, - \langle\theta,t_{\und{0}^{}}\rangle ~
S(t_{\und{1}}^{})^{\an{j}}\,S(t_{\und{2}}^{})^{\an{j-1}}\cdots
S(t_{\und{j}}^{})^{\an{1}}\quad,
\end{flalign}
for all $j=1,\dots,n$.
\begin{propo}\label{prop:unshiftedcotangent}
For each $n\in \bbZ_{\geq 1}$, 
the map $\{\,\cdot\,,\,\cdot\,\}_n^{}$ given by \eqref{eqn:Poissonleveln1},
\eqref{eqn:Poissonleveln2} and zero for all other combinations of generators 
defines an unshifted Poisson structure on the stacky CDGA
$\CE^\bullet\big(\g^{\oplus n+1},B_\bullet\otimes H^{\otimes n}\big)$.
Together with $\{\,\cdot\, ,\,\cdot\, \}_0^{}$ given in \eqref{eqn:Poissonlevel0},
this family defines an unshifted Poisson structure on the derived cotangent stack
$T^\ast[X/G]\simeq [\mu^{-1}(0)/G]$ (see Definition \ref{def:Poissonresolution}).
\end{propo}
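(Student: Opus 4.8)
The plan is to verify the two axiomatic requirements in turn: first, that each $\{\,\cdot\,,\,\cdot\,\}_n^{}$ is an unshifted strict Poisson structure on the individual stacky CDGA $\CE^\bullet\big(\g^{\oplus n+1},B_\bullet\otimes H^{\otimes n}\big)$ in the sense of Definition \ref{def:Poisson}, and second, that the resulting family is compatible with all totalized coface maps \eqref{eqn:coface} and codegeneracy maps \eqref{eqn:codegeneracy} as demanded by Definition \ref{def:Poissonresolution}. Since the underlying bigraded algebra \eqref{eqn:CEdoublesharp} is free as a graded-commutative algebra on the generators $a,v,t,h^{\an{k}},\theta^{\an{j}}$, the prescription \eqref{eqn:Poissonleveln1} and \eqref{eqn:Poissonleveln2} (together with the vanishing of all remaining brackets) extends uniquely to a graded biderivation; hence antisymmetry and the derivation property, Definition \ref{def:Poisson}(i)--(ii), hold automatically, and everything reduces to checking the Jacobi identity, the cochain-map property, and the simplicial compatibilities on pairs and triples of generators, where all brackets are bilinear biderivations.

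For the Jacobi identity I would exploit the fact that the structurally nontrivial brackets split into two disjoint sectors. The cotangent sector $\{v,a\}_n^{}=v(a)$ and $\{v,v^\prime\}_n^{}=[v,v^\prime]$ is exactly the canonical Poisson structure of $T^\ast X$, for which the Jacobi identity among triples drawn from $\{a,v\}$ is classical. The remaining nonzero brackets $\{t,\theta^{\an{j}}\}_n^{}$ of \eqref{eqn:Poissonleveln2} are valued in scalars or in the Hopf sub-algebra generated by the $h^{\an{k}}$, which is Poisson-central by fiat (all brackets with the $h^{\an{k}}$ vanish). Consequently, in any Jacobi triple involving at least one $t$ or $\theta^{\an{j}}$, every inner bracket factors through a central element and the whole expression collapses to zero. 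Thus Definition \ref{def:Poisson}(iii) is the easy part and never leaves the cotangent sector.

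The substantive step is the cochain-map property, i.e.\ compatibility of $\{\,\cdot\,,\,\cdot\,\}_n^{}$ with the total differential $\hat{\dd}=\partial+\delta$ of \eqref{eqn:chaincotangent}--\eqref{eqn:cochaincotangent}. On the cotangent sector this reduces to the statements that the adjoint $\g$-action acts by derivations of the evaluation $\T_{\! A}\otimes A\to A$ and of the Lie bracket of vector fields, i.e.\ to the $H$-equivariance already built into $\rho_A^{}$ and $\rho_{\T_{\! A}}^{}$, which is level-$0$ content and easy. The genuine obstacle lies in the pairs $(t,\theta^{\an{j}})$ and $(t,h^{\an{k}})$ with $j,k\geq 1$: because $\delta$ acts on $\theta^{\an{j}}$ and on $h^{\an{k}}$ through the left- and right-invariant derivations $\rho^{\mathrm{L}},\rho^{\mathrm{R}}$ of \eqref{eqn:inducedLieactions2}, while $\{t,\theta^{\an{j}}\}_n^{}$ is the $j$-fold antipode-twisted expression \eqref{eqn:Poissonleveln2}, the requirement $\hat{\dd}\{t,h^{\an{k}}\}_n^{}=0$ unwinds, using the derivation property and $\{t,h^{\an{k}}\}_n^{}=0$, to a Hopf-algebraic relation of the form
\[
\{t,\theta^{b\an{k-1}}\}_n^{}\,\big(\rho^{\mathrm{R}}(t_b)(h)\big)^{\an{k}} + \{t,\theta^{b\an{k}}\}_n^{}\,\big(\rho^{\mathrm{L}}(t_b)(h)\big)^{\an{k}}\,=\,0\,,
\]
and likewise $\hat{\dd}\{t,\theta^{\an{j}}\}_n^{}=\{\delta t,\theta^{\an{j}}\}_n^{}-\{t,\delta\theta^{\an{j}}\}_n^{}$ unwinds to a companion identity involving the structure constants $f^a_{bc}$ and the adjoint coaction $\rho_{\g}^{}$. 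I expect these to be the hard part: establishing them requires repeated use of the antipode axioms $S(h_{\und{1}})\,h_{\und{2}}=\varepsilon(h)=h_{\und{1}}\,S(h_{\und{2}})$, coassociativity, and the compatibility of $\rho_{\g}^{}$ with $\rho^{\mathrm{L}}$ and $\rho^{\mathrm{R}}$, all carried out carefully in Sweedler notation. A convenient simplification is the dual-basis contraction $\sum_b\langle\theta^b,t_{\und{0}}\rangle\,\rho^{\mathrm{R}}(t_b)=\rho^{\mathrm{R}}(t_{\und{0}})$, which removes the free index and reduces each identity to a relation purely in $H$; the existence of the canonical unshifted symplectic structure on $T^\ast[X/G]$ furnishes an a priori guarantee that these relations must in fact hold.

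Finally, for the simplicial compatibilities of Definition \ref{def:Poissonresolution} I would argue generator-wise from the explicit formulas \eqref{eqn:facemapdual}, \eqref{eqn:facemaptildeLie}, \eqref{eqn:degeneracymapdual} and \eqref{eqn:degeneracymaptildeLie}. Compatibility with those coface maps already employed in deriving \eqref{eqn:Poissonleveln2} holds by construction, since that formula was engineered precisely to enforce it; the remaining coface checks and all codegeneracy checks are finite Sweedler computations, because $d^i$ and $s^i$ are algebra morphisms and the brackets are biderivations, so it suffices to compare $d^i\{x,y\}_n^{}$ with $\{d^ix,d^iy\}_{n+1}^{}$, and analogously for $s^i$, on pairs of generators. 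One may further streamline the whole argument inductively: every generator except $\theta^{\an{n}}$ and $h^{\an{n}}$ lies in the image of the last coface map $d^n$, which acts as the identity on those generators and is an algebra, hence cochain, morphism. Therefore the Poisson and differential axioms at level $n$ reduce to level $n-1$ for all pairs not involving $\theta^{\an{n}}$ or $h^{\an{n}}$, leaving only the genuinely new top-slot generators—and thus only the Hopf identities displayed above—to be verified directly at each level.
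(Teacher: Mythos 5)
Your proposal is correct and takes essentially the same route as the paper: the paper's own proof is a short sketch asserting that the Poisson axioms of Definition \ref{def:Poisson} are immediate, that the cochain-map property with respect to $\hat{\dd}=\partial+\delta$ is ``a slightly lengthy but straightforward calculation at the level of the various combinations of generators'', and that the coface and codegeneracy compatibilities are checked from \eqref{eqn:facemapdual}, \eqref{eqn:facemaptildeLie}, \eqref{eqn:degeneracymapdual} and \eqref{eqn:degeneracymaptildeLie} --- precisely the generator-wise verification you organize, with the Hopf-algebraic identities you display (antipode axioms, coassociativity, adjoint-coaction compatibility) being the genuine content of the cochain-map check, and your inductive reduction via the last coface map $d^n$ mirroring the paper's own observation that $d^n$ acts as the identity on generators. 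Two minor cautions: your centrality argument for the Jacobi identity is stated too strongly --- for a triple such as $(v,v^\prime,t)$ the inner bracket $[v,v^\prime]$ is \emph{not} central, and the term vanishes instead because $\{t,\,\cdot\,\}_n^{}$ annihilates all of $\Sym_A\T_{\!A}$ --- and the ``a priori guarantee'' you invoke from the existence of the canonical symplectic structure on $T^\ast[X/G]$ cannot replace the Sweedler computations, since abstract existence does not certify that this particular explicit bracket realizes it.
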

\begin{proof}
One immediately checks that each $\{\,\cdot\,,\,\cdot\,\}_n^{}$ satisfies the three conditions
(i), (ii) and (iii) in Definition \ref{def:Poisson}. Showing that $\{\,\cdot\, , \,\cdot\,\}_n^{}$ is a 
cochain map with respect to the total differential $\hat{\dd} = \partial + \delta$ obtained from 
\eqref{eqn:chaincotangent} and \eqref{eqn:cochaincotangent} is a slightly lengthy but straightforward
calculation at the level of the various combinations of generators. 
\sk

The compatibility conditions $d^i\{\,\cdot\,,\,\cdot\,\}_{n-1}^{} = 
\{d^i(\,\cdot\,),d^i(\,\cdot\,)\}_n^{}$ for 
the coface maps \eqref{eqn:coface} can be checked using 
\eqref{eqn:facemapdual} and \eqref{eqn:facemaptildeLie}.
The compatibility conditions $s^i\{\,\cdot\,,\,\cdot\,\}_{n+1}^{} 
= \{s^i(\,\cdot\,),s^{i}(\,\cdot\,)\}_n$ for the codegeneracy maps
\eqref{eqn:codegeneracy} can be checked using 
\eqref{eqn:degeneracymapdual} and \eqref{eqn:degeneracymaptildeLie}.
\end{proof}

\begin{rem}
The unshifted Poisson structure on the derived cotangent stack 
$T^\ast[X/G]\simeq [\mu^{-1}(0)/G]$ constructed in Proposition 
\ref{prop:unshiftedcotangent} above is compatible with
the canonical symplectic structure on derived cotangent stacks from \cite{CalaqueTangent}. 
The comparison between symplectic and Poisson structures 
can be performed via \cite[Theorem 3.33]{PridhamPoisson}
or \cite[Theorem 3.2.4]{DAG2}.
\end{rem}


\subsection{\label{subsec:localquantization}Local quantizations by differential operators and $D$-modules}
We shall now explain how to quantize, for each non-negative integer $n\in \bbZ_{\geq 0}$, 
the totalized stacky CDGAs 
$\hat{\Tot}\CE^\bullet(\g^{\oplus n+1},B_\bullet \otimes H^{\otimes n})$
along the unshifted Poisson bracket $\{\,\cdot\, , \,\cdot\,\}_n^{}$ which was
defined for $n=0$ in \eqref{eqn:Poissonlevel0} and for $n\geq 1$
in \eqref{eqn:Poissonleveln1} and \eqref{eqn:Poissonleveln2}.
\sk

In order to illustrate our construction, 
let us start with the simplest case $n=0$. Note that the underlying
graded algebra of $\hat{\Tot}\CE^\bullet(\g,B_\bullet)$ 
can be identified with the symmetric algebra 
\begin{subequations}\label{eqn:symderivations}
\begin{flalign}
\hat{\Tot}\CE^\bullet(\g,B_\bullet)^\sharp \,\cong\, \Sym_{(\Sym \,\g^{\vee[-1]})^\sharp \otimes A}^{} 
\T_{(\Sym\, \g^{\vee[-1]})^\sharp \otimes A}^{}
\end{flalign}
of the $(\Sym\, \g^{\vee[-1]})^\sharp \otimes A$-module
of derivations 
\begin{flalign}
\T_{(\Sym \,\g^{\vee[-1]})^\sharp \otimes A}^{} \,=\, \Der\big((\Sym \,\g^{\vee[-1]})^\sharp \otimes A\big)\quad.
\end{flalign}
\end{subequations}
This means that we may interpret the generators $t\in \g_{[-1]}$ and $v\in \T_{\!A}$
as derivations on the graded algebra $(\Sym\, \g^{\vee[-1]})^\sharp\otimes A$ generated
by $\theta \in \g^{\vee[-1]}$ and $a\in A$: The derivation associated
with $v\in \T_{\! A} = \Der(A)$ is given by the evaluation $v(a)\in A$
and $v(\theta)=0$, and the derivation associated with $t\in\g_{[-1]}$
is given by $t(a)=0$ and the negative duality pairing $t(\theta) = -\langle\theta,t\rangle$.
Through this identification, the total differential $\hat{\dd} = \partial + \delta$ 
remains unaltered, i.e.\ $\partial$ and $\delta$ are still
defined on the generators by \eqref{eqn:chaincotangent} and \eqref{eqn:cochaincotangent},
and the Poisson bracket \eqref{eqn:Poissonlevel0} can be understood
in terms of evaluations of derivations and their Lie brackets.
\sk

The symmetric algebra \eqref{eqn:symderivations} of the module of derivations 
admits a standard quantization along the canonical unshifted Poisson 
structure, which can be described very explicitly 
via differential operators. (See e.g.\ \cite[Example 1.20]{PridhamUnshifted}.) 
Let $\hbar$ denote a formal deformation parameter. 
We define the {\em noncommutative} graded algebra of differential operators
\begin{subequations}\label{eqn:AAAhbar}
\begin{flalign}
\AAA^{0\sharp}_\hbar\,:=\, \mathrm{DiffOp}^\sharp_\hbar\big((\Sym \,\g^{\vee[-1]})^\sharp \otimes A[[\hbar]]\big)
\end{flalign}
in terms of an explicit presentation by generators and relations:
It is generated over the formal power series extension
$(\Sym \,\g^{\vee[-1]})^\sharp \otimes A[[\hbar]]$ 
by the differential operators $\hat{v}$ and $\hat{t}$, 
for all $v\in\T_{\! A}$ and $t\in\g_{[-1]}$, that act as
\begin{flalign}
\hat{v}(a) \,=\, \hbar\,v(a)~~,\quad
\hat{v}(\theta)\,=\,0~~,\quad
\hat{t}(a)\,=\,0~~,\quad
\hat{t}(\theta) \,=\,-\hbar\,\langle \theta,t\rangle\quad.
\end{flalign}
Denoting by $\hat{a}$ and $\hat{\theta}$ the differential operators
that act by left multiplication on $(\Sym \,\g^{\vee[-1]})^\sharp \otimes A[[\hbar]]$,
we obtain the following commutation relations
\begin{flalign}
\nn \hat{a}\,\hat{a}^\prime -\hat{a}^\prime\, \hat{a} \,&=\, 0~~,\quad
\hat{v}\,\hat{a} - \hat{a}\,\hat{v} \,=\, \hbar\, \widehat{v(a)} ~~,\quad 
\hat{a}\,\hat{t} - \hat{t}\,\hat{a}  \,=\,0~~,\quad 
\hat{a}\,\hat{\theta} - \hat{\theta}\,\hat{a} \,=\,0~~,\quad \\
\nn 
\hat{v}\,\hat{v}^\prime - \hat{v}^\prime\,\hat{v} \,&=\, \hbar\,\widehat{[v,v^\prime]}~~,\quad 
\hat{v}\,\hat{t} - \hat{t}\,\hat{v} \,=\,0~~,\quad
\hat{v}\,\hat{\theta} - \hat{\theta}\,\hat{v} \,=\, 0~~,\quad\\
\hat{t}\,\hat{t}^\prime + \hat{t}^\prime\,\hat{t} \,&=\,0~~,\quad
\hat{t}\,\hat{\theta} + \hat{\theta}\,\hat{t} \,=\, -\hbar\,\langle\theta,t\rangle ~~,\quad 
\hat{\theta}\,\hat{\theta}^\prime + \hat{\theta}^\prime\,\hat{\theta} \,=\,0\label{eqn:ideal0}
\end{flalign}
\end{subequations}
in $\AAA^{0\sharp}_\hbar$ that quantize the Poisson bracket \eqref{eqn:Poissonlevel0}.
To quantize the classical total differential
$\hat{\dd}=\partial+\delta$ to a differential 
$\hat{\dd}_\hbar$ on $\AAA^{0\sharp}_\hbar$, we proceed in two steps:
First, we observe using \eqref{eqn:chaincotangent} and \eqref{eqn:cochaincotangent}
that the classical differential $\hat{\dd}$ closes on 
$(\Sym \,\g^{\vee[-1]})^\sharp \otimes A[[\hbar]]$ and it is explicitly given by
\begin{flalign}
\hat{\dd}(a) \,=\,\theta^b\,\rho_A^{}(t_b)(a)~~,\quad
\hat{\dd}(\theta^a)\,=\,-\tfrac{1}{2}f^{a}_{bc}\,\theta^b\,\theta^c\quad.
\end{flalign}
Second, regarding the differential $\hat{\dd}$ on $(\Sym \,\g^{\vee[-1]})^\sharp \otimes A[[\hbar]]$
as a $\bbK[[\hbar]]$-linear endomorphism, we can define the quantized differential on
the differential operators through the graded commutator
\begin{flalign}\label{eqn:dquantcommutator}
\hat{\dd}_\hbar\,:=\, [\hat{\dd},\,\cdot\,]
\end{flalign}
of endomorphisms of $(\Sym \,\g^{\vee[-1]})^\sharp \otimes A[[\hbar]]$. 
A short calculation using \eqref{eqn:chaincotangent} and 
\eqref{eqn:cochaincotangent} shows that
\begin{flalign}
\nn \hat{\dd}_\hbar(\hat{a}) \,&=\,\hat{\theta}^b\,\widehat{\rho_A^{}(t_b)(a)}~~,\quad
\hat{\dd}_\hbar(\hat{\theta}^a)\,=\,-\tfrac{1}{2}f^{a}_{bc}\,\hat{\theta}^b\,\hat{\theta}^c\quad,\\
\hat{\dd}_\hbar(\hat{v})\,&=\, \hat{\theta}^b\,\widehat{\rho_{\T_{\! A}}^{}(t_b)(v)}~~,\quad
\hat{\dd}_\hbar(\hat{t})\,=\, \widehat{\mu^\ast(t)} + \hat{\theta}^b\,\widehat{\rho_{\g}^{}(t_b)(t)}\quad.
\end{flalign}
Note that the quantized differential takes formally the same form as 
its classical counterpart in \eqref{eqn:chaincotangent} and \eqref{eqn:cochaincotangent}, 
but this has to be interpreted with some care: In the classical case, the 
underlying graded algebra is commutative, while in the quantum case it is noncommutative.
This means that one faces questions about operator orderings 
when trying naively to promote the classical differential $\hat{\dd}$
in \eqref{eqn:chaincotangent} and \eqref{eqn:cochaincotangent} to a quantized differential $\hat{\dd}_\hbar$.
Our construction above provides a systematic fix for these operator ordering issues.
Summing up, we obtain a noncommutative cochain dg-algebra 
\begin{flalign}\label{eqn:AAA0}
\AAA^{0\bullet}_\hbar \,:=\,\big(\AAA_\hbar^{0\sharp}, \hat{\dd}_\hbar\big)
\end{flalign}
that we interpret as a quantization of $\hat{\Tot}\CE^\bullet(\g,B_\bullet)$
along the Poisson structure \eqref{eqn:Poissonlevel0}.
\begin{rem}
The noncommutative cochain dg-algebra $\AAA^{0\bullet}_\hbar$ is a quantization 
in the following precise sense: Taking the quotient 
$\AAA^{0\bullet}_\hbar/(\hbar)\cong \hat{\Tot}\CE^\bullet(\g,B_\bullet)$ by the ideal
generated by $\hbar$ recovers the classical cochain dg-algebra $\hat{\Tot}\CE^\bullet(\g,B_\bullet)$
on which the normalized commutator $\frac{1}{\hbar}[\,\cdot\,,\,\cdot\,]$ descends to the 
Poisson bracket $\{\,\cdot\,,\,\cdot\,\}_0^{}$ given in \eqref{eqn:Poissonlevel0}.
Note that such quantizations by differential operators are commonly used in quantum mechanics:
For instance, consider a classical mechanical system on the real line $\bbR$ that
is described by the phase space $T^\ast \bbR$ equipped with its canonical Poisson structure.
The corresponding classical algebra is generated by the position $x$ and momentum $p$ variables,
whose Poisson bracket is $\{p,x\} = \oone$. Quantization can be performed by 
introducing the differential operators $\hat{p} = \hbar \frac{d}{dx}$ and $\hat{x}=x$,
whose commutation relation $\hat{p}\,\hat{x} - \hat{x}\,\hat{p} = \hbar$ 
is of the same form as the relations in \eqref{eqn:ideal0}.
\end{rem}

Similarly to our construction above, one can quantize, for each $n\geq 1$,
the cochain dg-algebra $\hat{\Tot}\CE^\bullet(\g^{\oplus n+1},B_\bullet\otimes H^{\otimes n})$
along the unshifted Poisson structure $\{\,\cdot\, ,\,\cdot\,\}_{n}^{}$
given in \eqref{eqn:Poissonleveln1} and \eqref{eqn:Poissonleveln2}. 
Again, this is achieved by interpreting
$v\in\T_{\! A}$ and $t\in \g_{[-1]}$ in terms of derivations on the graded
algebra $(\Sym \,\g^{\vee \oplus n+1 [-1]})^\sharp \otimes A\otimes H^{\otimes n}$
and then defining as in \eqref{eqn:AAAhbar} a quantization $\AAA_\hbar^{n\bullet}$ 
in terms of differential operators. For completeness, let us note that 
the relevant commutation relations 
generalizing \eqref{eqn:ideal0} to the case $n\geq 1$ are given by
\begin{flalign}
\nn \hat{a}\,\hat{a}^\prime - \hat{a}^\prime\,\hat{a}\,&=\, 0~~,\quad 
\hat{v}\,\hat{a} - \hat{a}\,\hat{v} \,=\, \hbar\, \widehat{v(a)} ~~,\quad  
\hat{a}\,\hat{h}^{\an{k}} - \hat{h}^{\an{k}} \,\hat{a}\,=\,0~~,\quad 
\hat{a}\,\hat{t} - \hat{t}\,\hat{a}  \,=\,0~~,\quad \\ \nn 
\hat{a}\,\hat{\theta} - \hat{\theta}\,\hat{a}  \,&=\,0~~,\quad
\hat{v}\,\hat{v}^\prime - \hat{v}^\prime\,\hat{v} \,=\, \hbar \,\widehat{[v,v^\prime]}~~,\quad 
\hat{v}\,\hat{h}^{\an{k}} - \hat{h}^{\an{k}} \,\hat{v}\,=\,0~~,\quad\\ \nn
\hat{v}\,\hat{t} - \hat{t}\,\hat{v} \,&=\,0~~,\quad 
\hat{v}\,\hat{\theta}^{\an{j}} - \hat{\theta}^{\an{j}}\,\hat{v} \,=\, 0~~,\quad
\hat{h}^{\an{k}}\,\hat{h}^{\prime\an{k^\prime}} - \hat{h}^{\prime \an{k^\prime}}\,\hat{h}^{\an{k}}\, =\, 0~~,\quad \\ \nn
\hat{h}^{\an{k}}\, \hat{t} - \hat{t}\, \hat{h}^{\an{k}} \,&=\,0~~,\quad
\hat{h}^{\an{k}}\, \hat{\theta}^{\an{j}} - \hat{\theta}^{\an{j}} \, \hat{h}^{\an{k}} \,=\,0~~,\quad
\hat{t}\,\hat{t}^\prime + \hat{t}^\prime\,\hat{t} \,=\,0~~,\quad \\ 
\hat{t}\,\hat{\theta}^{\an{j}} + \hat{\theta}^{\an{j}}\,\hat{t} \,&=\, -\hbar\,
\langle\theta,t_{\und{0}}^{}\rangle
~\widehat{S(t_{\und{1}}^{})}^{\an{j}}\cdots \widehat{S(t_{\und{j}}^{})}^{\an{1}}~~,\quad
\hat{\theta}^{\an{j}}\, \hat{\theta}^{\prime \an{j^\prime}} + \hat{\theta}^{\prime \an{j^\prime}} \, \hat{\theta}^{\an{j}} \,=\,0\quad, \label{eqn:idealn}
\end{flalign}
for all $j,j^\prime \in\{ 0,\dots,n\}$ and $k,k^\prime\in \{1,\dots,n\}$. To obtain a quantized differential
$\hat{\dd}_\hbar$ on $\AAA_\hbar^{n\bullet}$, we note again
that the classical total differential $\hat{\dd}=\partial+\delta$ 
given in \eqref{eqn:chaincotangent} and \eqref{eqn:cochaincotangent}
closes on $(\Sym \,\g^{\vee \oplus n+1 [-1]})^\sharp \otimes A\otimes H^{\otimes n}[[\hbar]]$.
The quantized differential $\hat{\dd}_\hbar$ is then defined by the graded commutator
\eqref{eqn:dquantcommutator} of endomorphisms and one explicitly finds that
\begin{flalign}
\nn \hat{\dd}_\hbar(\hat{a}) \,&=\,\hat{\theta}^{b\an{0}}\,\widehat{\rho_A^{}(t_b)(a)}~~,\quad
\hat{\dd}_\hbar\big(\hat{\theta}^{a\an{j}}\big)\,=\,-\tfrac{1}{2}f^{a}_{bc}\,\hat{\theta}^{b\an{j}}\,\hat{\theta}^{c\an{j}}~~,\quad\\
\nn \hat{\dd}_\hbar\big(\hat{h}^{\an{k}}\big)\,&=\, \hat{\theta}^{b \an{k-1}}\,\widehat{\rho^{\mathrm{R}}(t_{b})(h)}^{\an{k}}
+ \hat{\theta}^{b \an{k}}\,\widehat{\rho^{\mathrm{L}}(t_{b})(h)}^{\an{k}}\quad,\\
\hat{\dd}_\hbar(\hat{v})\,&=\, \hat{\theta}^{b\an{0}}\,\widehat{\rho_{\T_{\! A}}^{}(t_b)(v)}~~,\quad
\hat{\dd}_\hbar(\hat{t})\,=\, \widehat{\mu^\ast(t)} +\hat{\theta}^{b\an{0}}\,\widehat{\rho_{\g}^{}(t_b)(t)}\quad.
\end{flalign}
This defines a noncommutative cochain dg-algebra
\begin{flalign}\label{eqn:AAAn}
\AAA_\hbar^{n\bullet}\,:=\,\big(\AAA_{\hbar}^{n\sharp}, \hat{\dd}_\hbar\big)
\end{flalign}
quantizing $\hat{\Tot}\CE^\bullet(\g^{\oplus n+1},B_\bullet\otimes H^{\otimes n})$
along the unshifted Poisson structure $\{\,\cdot\, ,\,\cdot\,\}_{n}^{}$.
To obtain quantized analogs of the classical coface and codegeneracy maps
in \eqref{eqn:cosimplicialstackyCDGA}, we use for the generators
the same formulas as in the classical case, see \eqref{eqn:facemapdual}, \eqref{eqn:facemaptildeLie},
\eqref{eqn:degeneracymapdual} and \eqref{eqn:degeneracymaptildeLie}. (Note that there
are no operator ordering ambiguities in defining these maps on generators.)
One easily checks that this is compatible with the commutation relations in
\eqref{eqn:idealn} and with the quantized differentials $\hat{\dd}_\hbar$. 
Summing up, we obtain
\begin{propo}\label{propo:cosimplicialquantum}
The quantizations described in this subsection yield a cosimplicial diagram
\begin{flalign}\label{eqn:cosimplicialquantum}
\xymatrix@C=1em{
\AAA_\hbar^{0\bullet} \ar@<0.5ex>[r] \ar@<-0.5ex>[r]~&~ 
\AAA_\hbar^{1\bullet} \ar@<1ex>[r] \ar[r] \ar@<-1ex>[r] ~&~ 
\AAA_\hbar^{2\bullet}\ar@<0.5ex>[r] \ar@<-0.5ex>[r]\ar@<1.5ex>[r] \ar@<-1.5ex>[r]~&~\cdots
}
\end{flalign}
of noncommutative cochain dg-algebras.
\end{propo}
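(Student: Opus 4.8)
The plan is to establish three things for the coface and codegeneracy maps, which are defined on generators by the classical formulas \eqref{eqn:facemapdual}, \eqref{eqn:facemaptildeLie}, \eqref{eqn:degeneracymapdual} and \eqref{eqn:degeneracymaptildeLie}: first, that they are well-defined morphisms of the noncommutative graded algebras $\AAA_\hbar^{n\sharp}$; second, that they intertwine the quantized differentials $\hat{\dd}_\hbar$; and third, that they satisfy the cosimplicial identities. Since each $\AAA_\hbar^{n\sharp}$ is defined by the presentation consisting of the commutative algebra $(\Sym\,\g^{\vee\oplus n+1[-1]})^\sharp\otimes A\otimes H^{\otimes n}[[\hbar]]$ of position operators together with the derivation generators $\hat{v},\hat{t}$ subject to the relations \eqref{eqn:idealn}, a map specified on generators extends to an algebra morphism if and only if it is compatible with these relations. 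Thus the first point reduces entirely to a relation-by-relation check.

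For this check I would organize the relations in \eqref{eqn:idealn} by type. The purely positional relations (those involving only $\hat{a},\hat{h}^{\an{k}},\hat{\theta}^{\an{j}}$) are preserved simply because $d_i^\ast,\tilde{d}_i^\ast$ and $s_i^\ast,\tilde{s}_i^\ast$ are morphisms of the underlying commutative algebras. The mixed Heisenberg-type relations, such as $\hat{v}\,\hat{a}-\hat{a}\,\hat{v}=\hbar\,\widehat{v(a)}$ and the commuting relations between a momentum operator and a position operator in a different tensor sector, are preserved because the classical maps are equivariant for the induced Lie and derivation actions; concretely this is the $H$-equivariance of the evaluation $\T_{\!A}\otimes A\to A$ together with the compatibility diagrams \eqref{eqn:compatibilitydiagramdual}. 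The relations among the momentum operators $\hat{v},\hat{t}$ are preserved because $\tilde{d}_i,\tilde{s}_i$ are Lie algebra morphisms intertwining the actions $\rho_{\T_{\!A}},\rho_{\g}$. The observation that makes all of this work with no $\hbar$-corrections is that every relation in \eqref{eqn:idealn} is of first order in $\hbar$, the right-hand side being $\hbar$ times the image under $\widehat{(-)}$ of the classical Poisson bracket \eqref{eqn:Poissonleveln1}--\eqref{eqn:Poissonleveln2}. Hence preservation of each relation is equivalent to the statement that the classical coface and codegeneracy maps preserve the corresponding Poisson bracket, which is precisely the content of Proposition \ref{prop:unshiftedcotangent}.

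The compatibility with the quantized differentials is checked on generators: once the previous step is done each map is an algebra morphism and $\hat{\dd}_\hbar=[\hat{\dd},\,\cdot\,]$ is a graded derivation, see \eqref{eqn:dquantcommutator}, so it suffices to compare $\hat{\dd}_\hbar$ of the image of a generator with the image of $\hat{\dd}_\hbar$ of that generator. On generators the quantized differential takes the same form as the classical total differential $\partial+\delta$ from \eqref{eqn:chaincotangent}--\eqref{eqn:cochaincotangent}, and the maps take the same form as the classical coface and codegeneracy maps, so this comparison collapses to the classical statement that $(d_i^\ast,\tilde{d}_i^\ast)$ and $(s_i^\ast,\tilde{s}_i^\ast)$ are morphisms of stacky CDGAs, hence cochain maps (Remark \ref{rem:CEalgebrafunctoriality}). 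The cosimplicial identities are likewise equalities of algebra morphisms and are therefore checked on generators, where they reduce to the already-known simplicial identities for the classical maps. The main obstacle I anticipate is the distinguished relation $\hat{t}\,\hat{\theta}^{\an{j}}+\hat{\theta}^{\an{j}}\,\hat{t}=-\hbar\,\langle\theta,t_{\und{0}}^{}\rangle\,\widehat{S(t_{\und{1}}^{})}^{\an{j}}\cdots\widehat{S(t_{\und{j}}^{})}^{\an{1}}$: verifying its preservation under the coface maps requires careful bookkeeping of the Sweedler sums and antipodes across the several tensor factors, and, crucially, one must confirm that the operators produced on the right-hand side lie in mutually commuting sectors, so that the expression is unambiguous and no higher-order $\hbar$ or operator-ordering corrections appear. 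This is exactly the point at which the $H$-equivariance encoded in \eqref{eqn:Poissonleveln2} and in Proposition \ref{prop:unshiftedcotangent} does the essential work.
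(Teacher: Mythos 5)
Your proposal is correct and follows essentially the same route as the paper: the paper likewise defines the quantized coface and codegeneracy maps on generators by the classical formulas (noting the absence of ordering ambiguities there) and then asserts that compatibility with the relations \eqref{eqn:idealn} and with the quantized differentials $\hat{\dd}_\hbar$ is checked directly, which is exactly the verification you organize. Your additional observation that, since every relation in \eqref{eqn:idealn} is linear in $\hbar$ with right-hand side the hatted classical bracket, the relation-by-relation check reduces to the Poisson compatibility already established in Proposition \ref{prop:unshiftedcotangent} (up to confirming that the correction factors live in mutually commuting sectors) is a useful way of making the paper's ``one easily checks'' precise.
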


In the remaining part of this subsection we discuss how to quantize
the dg-categories of perfect modules over the stacky CDGAs 
$\CE^\bullet(\g^{\oplus n+1},B_\bullet\otimes H^{\otimes n})$.
Our approach consists of working out an explicit model for the 
deformation theoretic construction in \cite[Proposition 1.25]{PridhamUnshifted} 
in our example of interest. 
We start again with spelling out the simplest case $n=0$ fully explicitly, 
from which one can easily infer the general construction for all $n\geq 0$. 
Recall from Proposition \ref{prop:CEmodulescharacterization}
that an object in $\mathrm{per}(\CE^\bullet(\g,B_\bullet))$ 
is a left $B_\bullet$-dg-module $V_\bullet\in {}_{B_\bullet}\Mod$
with a compatible $\g$-action $\rho_V^{}:\g\to\End(V_\bullet)$. (We ignore
for the moment the cofibrancy and perfectness properties and will come back to them later.)
In the present case, we have that $B_\bullet$ is the symmetric algebra over $A$ of a chain complex
of $A$-modules (see \eqref{eqn:Bbulletsym}), which allows us to 
decompose the $B_\bullet$-module structure on $V_\bullet$ into the following data:
\begin{itemize}
\item[(1)] a left $A$-dg-module $V_\bullet\in {}_A\Mod$ with a compatible $\g$-action $\rho_V^{}:\g\to\End(V_\bullet)$,
\item[(2)] a $\g$-equivariant left $A$-dg-module map 
$\Theta: V_\bullet \to \Omega^1_A\otimes_A^{} V_\bullet$, and
\item[(3)] a $\g$-equivariant graded left $A$-module map $\Psi : \g_{[-1]}\otimes V_\sharp\to V_\sharp$.
\end{itemize}
These data have to satisfy the following properties:
\begin{itemize}
\item[(i)] For all $v,v^\prime\in\T_{\! A}$ and $t,t^\prime\in\g_{[-1]}$,
\begin{flalign}
\Theta_v \circ \Theta_{v^\prime} \,=\,\Theta_{v^\prime} \circ \Theta_{v}~~,\quad
\Theta_{v}\circ \Psi_{t} \,=\, \Psi_{t}\circ \Theta_{v}~~,\quad
\Psi_t\circ \Psi_{t^\prime} \,=\,- \Psi_{t^\prime}\circ \Psi_{t}\quad,
\end{flalign}
where  $\Theta_v := \iota_v\Theta$ is defined through contraction and $\Psi_t := \Psi(t\otimes -)$. 
\item[(ii)] For all $t\in \g_{[-1]}$,
\begin{flalign}
\partial\circ \Psi_t + \Psi_t \circ \partial \,=\, \Theta_{\mu^\ast(t)}\quad,
\end{flalign}
where $\partial$ denotes the differential on $V_\bullet$ and the moment map 
$\mu^\ast(t)$ arises from \eqref{eqn:Bbulletsym}.
\end{itemize}
Let us rephrase this decomposition 
of structure in the context of Example \ref{ex:CEmodule} and Remark \ref{rem:CEmodulescharacterization}:
The structure maps $\Theta$ and $\Psi$, and their compatibility conditions (i) and (ii),
are precisely what is needed to endow the totalized Chevalley-Eilenberg module 
$\hat{\Tot}\CE^\bullet(\g,V_\bullet)$ associated to $V_\bullet\in {}_A\Mod$
with a left dg-module structure over the totalized Chevalley-Eilenberg algebra
$\hat{\Tot}\CE^\bullet(\g,B_{\bullet})$. 
\sk

In order to quantize the dg-category 
$\mathrm{per}(\CE^\bullet(\g,B_{\bullet}))$, we have to find quantum analogs
of the structure maps $\Theta$ and $\Psi$, and their compatibility conditions,
such that $\hat{\Tot}\CE^\bullet(\g,V_\bullet)$ can be endowed
with a left dg-module structure over the noncommutative cochain dg-algebra
$\AAA_\hbar^{0\bullet}$ quantizing $\hat{\Tot}\CE^\bullet(\g,B_{\bullet})$.
Since $\AAA_\hbar^{0\bullet}$ is an algebra of differential operators, it is not surprising
that this is related to $D$-modules, i.e.\ we expect that $\Theta$ will be quantized to a 
flat connection $\nabla$ on $V_\bullet$. Working out the details, one finds that the 
relevant data are given by:
\begin{itemize}
\item[($1_{\hbar}$)] a left $A[[\hbar]]$-dg-module $V_{\bullet}\in {}_{A[[\hbar]]}\Mod$ 
with a compatible $\g$-action $\rho_V^{}:\g\to\End(V_{\bullet})$,

\item[($2_{\hbar}$)] a $\g$-equivariant dg-connection $\nabla : V_{\bullet}\to 
\Omega^1_{A}[[\hbar]] \otimes_{A[[\hbar]]}^{} V_{\bullet}$ with respect to 
the differential $\hbar\,\dd^{\dR}$, i.e.\
$\nabla$ satisfies the Leibniz rule $\nabla(a\cdot s)=\hbar \,\dd^{\dR}a\otimes_{{A[[\hbar]]}} s + 
a\cdot \nabla(s)$, for all $a\in A[[\hbar]]$ and $s\in V_{\bullet}$,
and it commutes $(\id\otimes \partial)\circ \nabla = \nabla\circ\partial $ with the 
differential $\partial$ on $V_{\bullet}$, and

\item[($3_{\hbar}$)] a $\g$-equivariant graded left $A[[\hbar]]$-module map 
$\Psi : \g_{[-1]}\otimes V_{\sharp}\to V_{\sharp}$.
\end{itemize}
These data have to satisfy the following properties:
\begin{itemize}
\item[($\text{i}_{\hbar}$)] For all $v,v^\prime\in\T_{\! A}$ and $t,t^\prime\in\g_{[-1]}$,
\begin{flalign}
\nabla_v\circ \nabla_{v^\prime} - \nabla_{v^\prime}\circ \nabla_{v} \,=\, \hbar\, \nabla_{[v,v^\prime]}~~,\quad
\nabla_v\circ \Psi_t\,=\, \Psi_t\circ \nabla_v~~,\quad \Psi_t\circ\Psi_{t^\prime} \,=\, - 
\Psi_{t^\prime}\circ\Psi_{t}\quad,
\end{flalign}
where $\nabla_v := \iota_v\nabla$ denotes the covariant derivative along $v\in \T_{\! A}$ that is
associated with the connection $\nabla$. Note that the first condition states that $\nabla$ is a flat connection.

\item[($\text{ii}_{\hbar}$)] For all $t\in\g_{[-1]}$,
\begin{flalign}
\partial\circ \Psi_t + \Psi_t \circ \partial \,=\, \nabla_{\mu^\ast(t)} +\hbar\, \rho_V^{}(t)\quad.
\end{flalign}
\end{itemize}
\begin{rem}
Note that the classical limit $(V_\bullet,\nabla,\Psi)\vert_{\hbar=0}$ 
of such triple $(V_\bullet,\nabla,\Psi)$, which is 
obtained by quotienting by the ideal $(\hbar)\subseteq\bbK[[\hbar]]$,
is precisely the classical data listed in the itemization above.
\end{rem}

Based on these insights, we propose
the following natural quantization of the 
dg-category $\mathrm{per}(\CE^\bullet(\g,B_\bullet))$
associated to the formal derived cotangent stack $T^\ast[X/\g]$. 
It corresponds to the pre-triangulated envelope of the deformation of 
$\mathrm{per}(\CE^\bullet(\g,B_\bullet))$ associated by \cite[Proposition 1.25]{PridhamUnshifted} 
to the quantization $\AAA_\hbar^{0\bullet}$ of $\hat{\Tot}\CE^\bullet(\g,B_{\bullet})$.
\begin{defi}
We define the quantized dg-category $\mathrm{per}_\hbar(\CE^\bullet(\g,B_\bullet))$
to be the dg-category over $\bbK[[\hbar]]$ whose objects are all triples
$(V_\bullet,\nabla,\Psi)$ as introduced above, such that
the classical limit $(V_\bullet,\nabla,\Psi)\vert_{\hbar=0}$ 
defines an object in $\mathrm{per}(\CE^\bullet(\g,B_\bullet))$
and additionally $V_\bullet \cong \varprojlim_n V_\bullet/\hbar^nV_\bullet$
with $V_{\sharp}/\hbar^nV_\sharp$ projective as a graded
$\mathrm{DiffOp}_\hbar (A[[\hbar]])/\hbar^n \otimes \big(\Sym\, \g_{[-1]}^{}\big)_{\sharp}$-module, for all $n\geq 0$.
The cochain complexes (over $\bbK[[\hbar]]$) of morphisms are given by
\begin{flalign}\label{eqn:quantummapcomplexlocal}
\hom_{\AAA_\hbar^{0\bullet}}^{}\big(\hat{\Tot}\CE^\bullet(\g,V_\bullet),\hat{\Tot}\CE^\bullet(\g,V_\bullet^\prime)\big)\quad,
\end{flalign}
for all pairs of objects $(V_\bullet,\nabla,\Psi)$ and $(V_\bullet^\prime,\nabla^\prime,\Psi^\prime)$.
\end{defi}
\begin{rem}
The role of the additional condition on the module $V_\bullet$ 
is to rule out pathological objects in the quantized dg-category 
$\mathrm{per}_\hbar(\CE^\bullet(\g,B_\bullet))$, such as those where
$\hbar$ acts as $0$ or as $1$.
To satisfy all conditions it is sufficient, but not necessary, 
for $V_\sharp$ to be a finitely generated projective 
$\mathrm{DiffOp}_\hbar (A[[\hbar]]) \otimes \big(\Sym\, \g_{[-1]}^{}\big)_{\sharp}$-module.
\end{rem}
\begin{rem}\label{rem:CEmodulequantum}
In analogy to the classical case (see Remark \ref{rem:CEmodulescharacterization}),
we have that
\begin{flalign}
\mathrm{per}_\hbar(\CE^\bullet(\g,B_\bullet))\,\subseteq\, {}_{\AAA_\hbar^{0\bullet}}\Mod
\end{flalign}
can be regarded (by construction) as a full sub-dg-category of the dg-category of dg-modules over the 
quantized algebra $\AAA_\hbar^{0\bullet}$. Each object in this sub-dg-category
is of the form $\hat{\Tot}\CE^\bullet(\g,V_\bullet)$, with the left $\AAA_\hbar^{0\bullet}$-module
structure determined from the data of a triple $(V_\bullet,\nabla,\Psi)$.
\end{rem}

\begin{rem}\label{rem:alternativequantmorphisms}
There exists an alternative description of the cochain complexes
of morphisms in $\mathrm{per}_\hbar(\CE^\bullet(\g,B_\bullet))$ that generalizes
the second isomorphic description in \eqref{eqn:perfhomidentification} to the quantum case
and that emphasizes better our description of objects by triples. By an analogous rewriting exercise
as in the proof of Proposition \ref{prop:CEmodulescharacterization}, one observes
that the cochain complex \eqref{eqn:quantummapcomplexlocal} may be regarded as a sub-complex of
$\hat{\Tot}\CE^\bullet(\g,\hom_{A[[\hbar]]}^{}(V_\bullet,V^\prime_\bullet))$. This sub-complex is characterized
by the following compatibility conditions with $\nabla$ and $\Psi$: An element
\begin{subequations}
\begin{flalign}
L\, =\, \sum_{j=0}^{\dim\,\g} \frac{1}{j!}\,\hat{\theta}^{a_1}\cdots \hat{\theta}^{a_j}\, L_{a_1\cdots a_j}\,\in\,
\hat{\Tot}\CE^\bullet\big(\g,\hom_{A[[\hbar]]}^{}(V_\bullet,V^\prime_\bullet)\big)^p\quad,
\end{flalign}
where $L_{a_1\cdots a_j}\in \hom_{A[[\hbar]]}^{}(V_\bullet,V^\prime_\bullet)_{j-p}$,
defines an element of degree $p$ in  \eqref{eqn:quantummapcomplexlocal} if and only if
\begin{flalign}
\nabla_v^\prime\circ L_{a_1\cdots a_j}  \,=\, L_{a_1\cdots a_j}\circ \nabla_v^{}\quad,
\end{flalign}
for all $v\in\T_{\! A}$, and
\begin{flalign}
\Psi^\prime_{t_b} \circ L_{a_1\cdots a_j}\,=\, (-1)^{p-j}\, L_{a_1\cdots a_j}\circ \Psi_{t_b} + \hbar \, L_{a_1\cdots a_j b}\quad,
\end{flalign}
\end{subequations}
for all basis elements $t_b\in\g_{[-1]}$.
Note that the $\hbar$-contribution to the second condition arises from the graded commutators
$\hat{t}\,\hat{\theta} + \hat{\theta}\,\hat{t} = -\hbar\,\langle\theta,t\rangle$.
This alternative description will become useful in Subsection \ref{subsec:globalquantization}.
\end{rem}

\begin{ex}\label{ex:rank1module}
The rank-one module $\AAA_\hbar^{0\bullet}\in {}_{\AAA_\hbar^{0\bullet}}\Mod$
is (isomorphic to) an object in the quantized dg-category $\mathrm{per}_\hbar(\CE^\bullet(\g,B_\bullet))$.
Let us spell out explicitly a triple $(B_{\hbar\bullet},\nabla,\Psi)$ that represents this object.
From \eqref{eqn:AAAhbar}, we observe that there exists
an isomorphism of graded $\bbK[[\hbar]]$-modules
\begin{flalign}\label{eqn:tmporderingpointing}
\AAA_\hbar^{0\sharp}\,\cong\, \hat{\Tot}\Big(\Sym\,\g^{\vee[-1]} \otimes \mathrm{DiffOp}_\hbar(A[[\hbar]])\otimes \Sym\, \g_{[-1]}\Big)^\sharp
\end{flalign}
given by using the commutation relations in \eqref{eqn:ideal0} to bring
each differential operator to the displayed form, i.e.\ the $\hat{\theta}$'s to the left,
composites of the $\hat{a}$'s and $\hat{v}$'s in the middle, and the $\hat{t}$'s to the right. 
Let us define the graded left $A[[\hbar]]$-module
\begin{flalign}\label{eqn:Bhbarsharp}
B_{\hbar\sharp}\,:=\,\mathrm{DiffOp}_\hbar(A[[\hbar]])\otimes \big(\Sym\, \g_{[-1]}\big)_{\sharp}
\end{flalign}
and endow it with the $\g$-action determined 
by the given $\g$-actions on the individual tensor factors. 
Demanding that \eqref{eqn:tmporderingpointing} defines
an isomorphism $\AAA_\hbar^{0\bullet}\cong \hat{\Tot}\CE^\bullet(\g,B_{\hbar\bullet})$ 
of left $A[[\hbar]]$-dg-modules fixes uniquely a chain differential
$\partial$ on $B_{\hbar\sharp}$. Explicitly, one finds at the level
of elements $\hat{D}\otimes \hat{t}_{1}\cdots \hat{t}_n\in \mathrm{DiffOp}_\hbar(A[[\hbar]])
\otimes\big( \Sym\, \g_{[-1]}\big)_\sharp$ that
\begin{flalign}\label{eqn:Bhbardifferential}
\partial\big(\hat{D}\otimes \hat{t}_1\cdots \hat{t}_n\big)\,=\, \sum_{i=1}^n (-1)^{i-1}\Big( 
\hat{D}\,\widehat{\mu^{\ast}(t_i)}\otimes \hat{t}_1\cdots \check{t_{i}}\cdots \hat{t}_n 
+ \hat{D}\otimes\big[\hat{t}_1\cdots \hat{t}_{i-1}, \hat{\theta}^b\big]\,\widehat{\rho_\g^{}(t_b)(t_i)}\,\cdots \hat{t}_n\Big)\quad,
\end{flalign}
where $\check{-}$ denotes omission of the factor and 
the bracket $[\,\cdot\,,\,\cdot\,]$ denotes the graded commutator of differential operators,
see \eqref{eqn:ideal0} for the relevant commutation relations.
One further checks that $\AAA_\hbar^{0\bullet}\cong \hat{\Tot}\CE^\bullet(\g,B_{\hbar\bullet})$  
becomes an isomorphism of left $\AAA_\hbar^{0\bullet}$-dg-modules
if we equip $\hat{\Tot}\CE^\bullet(\g,B_{\hbar\bullet})$ with the module structure
resulting from the connection $\nabla_v(\hat{D}\otimes
\hat{t}_1\cdots \hat{t}_n)= \hat{v}\,\hat{D}\otimes \hat{t}_1\cdots \hat{t}_n$ 
and the map $\Psi_t(\hat{D}\otimes \hat{t}_1\cdots \hat{t}_n) = \hat{D}\otimes\hat{t}\,\hat{t}_1\cdots \hat{t}_n$ given
by multiplications of differential operators. This defines
a triple $(B_{\hbar\bullet},\nabla,\Psi)$ representing, up to the isomorphism
constructed in this example, the rank-one module $\AAA_\hbar^{0\bullet}\in {}_{\AAA_\hbar^{0\bullet}}\Mod$.
\end{ex}

The construction above generalizes in the obvious way to all $n\geq 0$. 
This allows us to define quantized dg-categories
$\mathrm{per}_\hbar\big(\CE^\bullet(\g^{\oplus n+1},B_\bullet\otimes H^{\otimes n})\big)$
over $\bbK[[\hbar]]$ which, in analogy to Remark \ref{rem:CEmodulequantum}, can be presented
as full sub-dg-categories 
\begin{flalign}
\mathrm{per}_\hbar\big(\CE^\bullet(\g^{\oplus n+1},B_\bullet\otimes H^{\otimes n})\big)
\,\subseteq\, {}_{\AAA_\hbar^{n\bullet}}\Mod
\end{flalign}
of the dg-categories of dg-modules over the quantized algebras $\AAA_\hbar^{n\bullet}$.
Via the induced module functors associated with the coface and the codegeneracy maps
in \eqref{eqn:cosimplicialquantum}, we obtain a cosimplicial diagram
\begin{flalign}\label{eqn:cosimplicialdgCat}
\xymatrix@C=1em{
\mathrm{per}_\hbar\big(\CE^\bullet(\g,B_\bullet)\big) \ar@<0.5ex>[r] \ar@<-0.5ex>[r]~&~ 
\mathrm{per}_\hbar\big(\CE^\bullet\big(\g^{\oplus 2}, B_\bullet\otimes H\big)\big) \ar@<1ex>[r] \ar[r] \ar@<-1ex>[r]~&~ 
\mathrm{per}_\hbar\big(\CE^\bullet\big(\g^{\oplus 3}, B_\bullet\otimes H^{\otimes 2}\big)\big)\ar@<0.5ex>[r] \ar@<-0.5ex>[r]\ar@<1.5ex>[r] \ar@<-1.5ex>[r]~&~ 
\cdots
}
\end{flalign}
of dg-categories over $\bbK[[\hbar]]$, which quantizes the cosimplicial diagram in 
Definition \ref{def:perquotientstack}.


\subsection{\label{subsec:globalquantization}Global quantization}
The quantization of the dg-category $\mathrm{per}(T^\ast[X/G])$ of perfect modules
over the derived cotangent stack $T^\ast[X/G]$ is defined as the homotopy limit
\begin{flalign}\label{eqn:quantizeddgCATholim}
\mathrm{per}_\hbar(T^\ast[X/G])\,:=\,\holim\Big(
\xymatrix@C=1em{
\mathrm{per}_\hbar\big(\CE^\bullet(\g,B_\bullet)\big) \ar@<0.5ex>[r] \ar@<-0.5ex>[r]~&~ 
\mathrm{per}_\hbar\big(\CE^\bullet\big(\g^{\oplus 2}, B_\bullet\otimes H\big)\big) \ar@<1ex>[r] \ar[r] \ar@<-1ex>[r]~&~ 
\cdots
}\Big)
\end{flalign}
of the cosimplicial diagram \eqref{eqn:cosimplicialdgCat} of local quantizations
that we have obtained in Subsection \ref{subsec:localquantization} in terms of 
a resolution by stacky CDGAs. 
\sk

Before we provide an explicit model for 
this dg-category over $\bbK[[\hbar]]$, it will be instructive to 
describe in some detail the classical dg-category $\mathrm{per}(T^\ast[X/G])$.
For this we assume that $G=\spec\, H$ is reductive, see also Remark \ref{rem:reductive}.
Specializing Proposition \ref{prop:dgCatquotientstack} to the present case,
we obtain, in complete analogy to the decomposition of data explained in the
previous subsection, that the data specifying an object in $\mathrm{per}(T^\ast[X/G])$
may be decomposed as follows:
\begin{itemize}
\item[(1)] a left $A$-dg-module $V_\bullet\in {}_A\Mod$ with a compatible $H$-coaction $\rho_V^{}:V_\bullet \to V_\bullet \otimes H$,
\item[(2)] an $H$-equivariant left $A$-dg-module map 
$\Theta: V_\bullet \to \Omega^1_A\otimes_A^{} V_\bullet$, and
\item[(3)] an $H$-equivariant graded left $A$-module map $\Psi : \g_{[-1]}\otimes V_\sharp\to V_\sharp$.
\end{itemize}
These data have to satisfy the following properties:
\begin{itemize}
\item[(i)] For all $v,v^\prime\in\T_{\! A}$ and $t,t^\prime\in\g_{[-1]}$,
\begin{flalign}
\Theta_v \circ \Theta_{v^\prime} \,=\,\Theta_{v^\prime} \circ \Theta_{v}~~,\quad
\Theta_{v}\circ \Psi_{t} \,=\, \Psi_{t}\circ \Theta_{v}~~,\quad
\Psi_t\circ \Psi_{t^\prime} \,=\,- \Psi_{t^\prime}\circ \Psi_{t}\quad.
\end{flalign}
\item[(ii)] For all $t\in \g_{[-1]}$,
\begin{flalign}
\partial\circ \Psi_t + \Psi_t \circ \partial \,=\, \Theta_{\mu^\ast(t)}\quad.
\end{flalign}
\end{itemize}
The left $B_\bullet$-dg-module that is canonically associated 
with a triple $(V_\bullet,\Theta,\Psi)$ is further required to be cofibrant and perfect.
For two objects $(V_\bullet,\Theta,\Psi)$ and $(V^\prime_\bullet,\Theta^\prime,\Psi^\prime)$,
the cochain complex of morphisms in $\mathrm{per}(T^\ast[X/G])$ is given by
the sub-complex
\begin{subequations}
\begin{flalign}
\hom_A^{H,\Theta,\Psi}\big(V_\bullet,V^\prime_\bullet\big)\,\subseteq\, \hom_A^{}(V_\bullet,V^\prime_\bullet)
\end{flalign}
of left $A$-module morphisms that are strictly $H$-equivariant
and that commute with $\Theta$ and $\Psi$, i.e.\
$L\in\hom_A^{}(V_\bullet,V^\prime_\bullet)$ lies in $\hom_A^{H,\Theta,\Psi}\big(V_\bullet,V^\prime_\bullet\big)$ 
if and only if
\begin{flalign}\label{eqn:classicalcompatibilityHOM}
\rho_{V^\prime}^{}\circ L \,=\, (L\otimes\id )\circ \rho_V^{}~~,\quad
\Theta_v^\prime \circ L \,=\,  L\circ \Theta_v~~,\quad
\Psi_t^\prime \circ L \,=\, (-1)^{\vert L\vert}\, L \circ \Psi_t\quad,
\end{flalign}
\end{subequations}
for all $v\in\T_{\! A}$ and $t\in\g_{[-1]}$.
\sk

Inspired by our local quantizations from Subsection \ref{subsec:localquantization},
it is natural to expect that such triples quantize to the following data:
\begin{itemize}
\item[($1_{\hbar}$)] a left $A[[\hbar]]$-dg-module $V_{\bullet}\in {}_{A[[\hbar]]}\Mod$ 
with a compatible $H$-coaction $\rho_V^{}:V_\bullet \to V_\bullet\otimes H$,

\item[($2_{\hbar}$)] an $H$-equivariant dg-connection $\nabla : V_{\bullet}\to 
\Omega^1_{A}[[\hbar]] \otimes_{A[[\hbar]]}^{} V_{\bullet}$ with respect to $\hbar\,\dd^{\dR}$, and

\item[($3_{\hbar}$)] an $H$-equivariant graded left $A[[\hbar]]$-module map 
$\Psi : \g_{[-1]}\otimes V_{\sharp}\to V_{\sharp}$.
\end{itemize}
These data have to satisfy the following properties:
\begin{itemize}
\item[($\text{i}_{\hbar}$)] For all $v,v^\prime \in\T_{\! A}$ and $t,t^\prime\in\g_{[-1]}$,
\begin{flalign}
\nabla_v\circ \nabla_{v^\prime} - \nabla_{v^\prime}\circ \nabla_{v} \,=\, \hbar\, \nabla_{[v,v^\prime]}~~,\quad
\nabla_v\circ \Psi_t\,=\, \Psi_t\circ \nabla_v~~,\quad \Psi_t\circ\Psi_{t^\prime} \,=\, - 
\Psi_{t^\prime}\circ\Psi_{t}\quad.
\end{flalign}

\item[($\text{ii}_{\hbar}$)] For all $t\in\g_{[-1]}$,
\begin{flalign}
\partial\circ \Psi_t + \Psi_t \circ \partial \,=\, \nabla_{\mu^\ast(t)} +\hbar\, \rho_V^{}(t)\quad,
\end{flalign}
where $\rho_V^{}(t)$ denotes the $\g$-action induced by the $H$-coaction $\rho_V^{}$. 
(Explicitly, $\rho_V^{}(t)(s):= s_{\und{0}} \, t(s_{\und{1}})$, for all $s\in V_\bullet$.)
\end{itemize}

\begin{rem}
Writing $\Omega^{\bullet}_{\hbar}(A):=(\Omega^\sharp_A[[\hbar]], \hbar \,\dd^{\dR})$ 
and similarly $\Omega^{\bullet}_{\hbar}(H):=(\Omega^\sharp_H[[\hbar]], \hbar\, \dd^{\dR})$, 
observe that, for any object $(V_\bullet,\nabla,\Psi)$ of $\mathrm{per}_\hbar(T^\ast[X/G])$, 
we have an $\Omega^{\bullet}_{\hbar}(A)$-module
\begin{flalign}
\Omega^{\bullet}_{\hbar}(A,V)_\nabla \,:=\, \big(\hat{\Tot}\big(\Omega^\bullet_\hbar(A)\otimes_{A[[\hbar]]}^{} V_{\bullet}\big)^\sharp, 
\hat{\dd} + \nabla\big) \quad,
\end{flalign}
whose differential combines the total differential $\hat{\dd} =  \partial + \hbar \,\dd^\dR$ 
and the flat connection $\nabla$, that may be equipped with a compatible $\Omega^{\bullet}_{\hbar}(H)$-coaction
\begin{flalign}
\Omega^{\bullet}_{\hbar}(A,V)_\nabla ~\longrightarrow ~ \Omega^{\bullet}_{\hbar}(A,V)_\nabla 
\otimes_{\bbK[[\hbar]]} \Omega^\bullet_\hbar(H) 
\end{flalign}
constructed out of $\rho$ and $\Psi$. 
This provides a concise characterization of the conditions ($\text{i}_{\hbar}$) and ($\text{ii}_{\hbar}$).
\end{rem}
\begin{rem}\label{rem:hodgestack}
On inverting $\hbar$, the resulting $A((\hbar))$-module $V_\bullet[\hbar^{-1}]$ 
has a flat connection $\hbar^{-1}\nabla$, and hence a $\mathcal{D}_A((\hbar))$-module structure. 
The description above then immediately gives $V_\bullet[\hbar^{-1}]$ the structure of a $\bbK((\hbar))$-linear 
$\mathcal{D}$-module on $[X/G]$ in the sense of \cite[\S 6.2.2]{DrinfeldGaitsgoryFinAlgStacks}. 
However, inverting $\hbar$ destroys a lot of information (see Example \ref{ex:Gmex} below); 
whereas $\mathcal{D}$-modules correspond to quasi-coherent sheaves on the de Rham stack 
$[X/G]_{\mathrm{DR}}$ as in \cite{GaitsgoryRozenblyumCrystals,GaitsgoryRozenblyumBook}, our structure on $V_\bullet$ 
corresponds to that of a quasi-coherent sheaf on the $2$-stack 
$[X/G]_{\mathrm{Hod}}:=[X_{\mathrm{Hod}}/G_{\mathrm{Hod}}]$, where $X_{\mathrm{Hod}}$ 
is the Hodge stack  of \cite[\S 7]{simpsonhodgefil}.    
\end{rem}

Recalling also Remark \ref{rem:alternativequantmorphisms}, we propose the following
candidate for the quantized dg-category.
\begin{defi}\label{def:quantizeddgCat}
We define $\mathrm{per}_\hbar(T^\ast[X/G])$ to be the dg-category over $\bbK[[\hbar]]$
whose objects are all triples $(V_\bullet,\nabla,\Psi)$ as introduced above, such that
the classical limit $(V_\bullet,\nabla,\Psi)\vert_{\hbar=0}$ 
defines an object in $\mathrm{per}(T^\ast[X/G])$
and additionally $V_\bullet \cong \varprojlim_n V_\bullet/\hbar^nV_\bullet$
with $V_{\sharp}/\hbar^nV_\sharp$ projective as a graded
$\mathrm{DiffOp}_\hbar (A[[\hbar]])/\hbar^n \otimes \big(\Sym\, \g_{[-1]}^{}\big)_{\sharp}$-module, for all $n\geq 0$.
For two objects $(V_\bullet,\nabla,\Psi)$ and $(V^\prime_\bullet,\nabla^\prime,\Psi^\prime)$,
the cochain complex (over $\bbK[[\hbar]]$) of morphisms is given by the sub-complex
\begin{subequations}
\begin{flalign}
\hom_{A[[\hbar]]}^{H,\nabla,\Psi}\big(V_\bullet,V^\prime_\bullet\big)\,\subseteq\, \hom_{A[[\hbar]]}^{}(V_\bullet,V^\prime_\bullet)
\end{flalign}
of left $A[[\hbar]]$-module morphisms that are strictly $H$-equivariant
and that commute with $\nabla$ and $\Psi$, i.e.\
$L\in\hom_{A[[\hbar]]}^{}(V_\bullet,V^\prime_\bullet)$ lies in 
$\hom_{A[[\hbar]]}^{H,\nabla,\Psi}\big(V_\bullet,V^\prime_\bullet\big)$ 
if and only if
\begin{flalign}
\rho_{V^\prime}^{}\circ L \,=\, (L\otimes\id )\circ \rho_V^{}~~,\quad
\nabla_v^\prime \circ L \,=\,  L\circ \nabla_v~~,\quad
\Psi_t^\prime \circ L \,=\, (-1)^{\vert L\vert}\, L \circ \Psi_t\quad,
\end{flalign}
\end{subequations}
for all $v\in\T_{\! A}$ and $t\in\g_{[-1]}$.
\end{defi}

\begin{propo}\label{prop:quantdgCatquotientstack}
Suppose that $G = \spec H$ is reductive.
Then the dg-category over $\bbK[[\hbar]]$ introduced in Definition \ref{def:quantizeddgCat}
is a model for the homotopy limit in \eqref{eqn:quantizeddgCATholim}.
\end{propo}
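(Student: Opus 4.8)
The plan is to run the argument of Proposition~\ref{prop:dgCatquotientstack} again, but directly on the cosimplicial diagram \eqref{eqn:cosimplicialdgCat} of local quantizations and $\bbK[[\hbar]]$-linearly throughout; setting $\hbar=0$ then recovers the classical statement. Write $\mathcal{C}$ for the dg-category of Definition~\ref{def:quantizeddgCat}. First, generalizing Remark~\ref{rem:CEmodulequantum} to all $n$, present each local category $\mathrm{per}_\hbar\big(\CE^\bullet(\g^{\oplus n+1},B_\bullet\otimes H^{\otimes n})\big)$ as the full sub-dg-category of ${}_{\AAA_\hbar^{n\bullet}}\Mod$ on the modules of the form $\hat{\Tot}\CE^\bullet(\g^{\oplus n+1},V_\bullet)$ determined by triples $(V_\bullet,\nabla,\Psi)$. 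A descent functor $\mathcal{C}\to\mathrm{per}_\hbar(T^\ast[X/G])$ is then built exactly as classically: forgetting the $H$-coaction of an object $(V_\bullet,\nabla,\Psi)$ produces an object of the $n=0$ term; the two coface maps $d^0,d^1$ carry it to two objects of the $n=1$ term that are naturally isomorphic via the $H$-coaction $\theta$; this $\theta$ satisfies the cocycle condition $d^1\theta=d^2\theta\circ d^0\theta$, so the data land in the $2$-categorical limit, which maps to the homotopy limit. The connection $\nabla$ and the map $\Psi$ are transported by the induced-module functors because the coface and codegeneracy maps act on generators by the same formulas as in the classical case, and conditions ($\text{i}_\hbar$)--($\text{ii}_\hbar$) are stable under these functors.

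For fully faithfulness, note that on morphisms the functor sends $\hom_{A[[\hbar]]}^{H,\nabla,\Psi}(V_\bullet,V^\prime_\bullet)$ into the homotopy limit of the cosimplicial cochain complex of local Hom-complexes. Using the alternative description of morphisms from Remark~\ref{rem:alternativequantmorphisms}, the $n$-th local Hom-complex is the $\AAA_\hbar^{0\bullet}$-linear complex $\hom_{A[[\hbar]]}^{\nabla,\Psi}(V_\bullet,V^\prime_\bullet)$ tensored with the group bar-resolution, so the homotopy limit is computed by the $\hbar$-deformed normalized group-cohomology complex $\hat{\Tot}\,N^\bullet\big(G,\hom_{A[[\hbar]]}^{\nabla,\Psi}(V_\bullet,V^\prime_\bullet)\big)$, whose coefficients carry the $H$-coaction induced by $\rho_V,\rho_{V^\prime}$. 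Since $G$ is reductive the higher group cohomology vanishes and this complex is quasi-isomorphic to its degree-zero part, namely the $H$-equivariant subcomplex $\hom_{A[[\hbar]]}^{H,\nabla,\Psi}(V_\bullet,V^\prime_\bullet)$ of Definition~\ref{def:quantizeddgCat}. Reductivity survives the passage to $\bbK[[\hbar]]$-coefficients: the Reynolds averaging operator is $\bbK$-linear, hence $\bbK[[\hbar]]$-linear, so the invariants functor stays exact on $\bbK[[\hbar]]$-linear $H$-comodules (equivalently, one argues $\hbar$-adically on the associated graded).

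For essential surjectivity we follow the classical proof. The assignment of its dg-category of modules to each $\AAA_\hbar^{n\bullet}$ defines a left Quillen hypersheaf \cite[\S~17]{HirschowitzSimpson} over the cosimplicial diagram \eqref{eqn:cosimplicialquantum}, so by the strictification theorem \cite[Corollary~18.7]{HirschowitzSimpson}, adapted as in \cite[Corollary~1.3.7.4]{hag2}, the homotopy limit is modeled by the homotopy-Cartesian cosimplicial $\AAA_\hbar^{\bullet\bullet}$-modules, i.e.\ by $\AAA_\hbar^{0\bullet}$-modules equipped with a strong homotopy $H$-coaction compatible with $\nabla$ and $\Psi$. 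For such a module $M^\bullet_\bullet$ the homology groups $\mathrm{H}_i(M^\bullet_\bullet)$ form a Cartesian cosimplicial module, and hence carry strict $H$-coactions on the $\mathrm{H}_0(B_\bullet)[[\hbar]]$-modules $\mathrm{H}_i(M^0_\bullet)$; taking projective resolutions and cones and inducting on the lowest nonzero homology group then rigidifies every object to one in the image of $\mathcal{C}$, with $\nabla$, $\Psi$ and the $\hbar$-adic projectivity condition of Definition~\ref{def:quantizeddgCat} carried along.

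The main obstacle is this last rigidification step: one must strictify the strong homotopy $H$-coaction simultaneously with the quantized flat connection $\nabla$ and the map $\Psi$, $\bbK[[\hbar]]$-linearly and while preserving the $\hbar$-adic completeness and projectivity imposed in Definition~\ref{def:quantizeddgCat}. The technical linchpin making both this and fully faithfulness work is the vanishing of higher cohomology of the reductive group $G$ with coefficients in $\bbK[[\hbar]]$-linear $H$-comodules.
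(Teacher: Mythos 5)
Your construction of the comparison dg-functor and your fully faithfulness argument coincide with the paper's proof: the $2$-categorical limit trick, the use of Remark~\ref{rem:alternativequantmorphisms} to rewrite the morphism complexes, the reduction of the homotopy limit to the normalized group cohomology complex $\hat{\Tot}\,N^\bullet\big(G,\hom_{A[[\hbar]]}^{\nabla,\Psi}(V_\bullet,V^\prime_\bullet)\big)$, and the appeal to reductivity are all exactly what the paper does, and your observation that reductivity is insensitive to $\bbK[[\hbar]]$-coefficients is sound. The gap is in essential surjectivity, which you yourself flag as ``the main obstacle'' but do not resolve --- and resolving it is precisely where the paper's proof departs from a verbatim repetition of Proposition~\ref{prop:dgCatquotientstack}.

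Concretely, two devices are missing. First, the paper does \emph{not} run the strictification argument directly over $\bbK[[\hbar]]$: it introduces the truncated dg-categories $\mathcal{C}_m$ defined over $\bbK[\hbar]/\hbar^m$, proves essential surjectivity onto $\mathrm{per}_\hbar(T^\ast[X/G]/\hbar^m)$ at each finite level $m$, and only then recovers the statement over $\bbK[[\hbar]]$ by observing that $\mathcal{C}=\varprojlim_m \mathcal{C}_m$ is a homotopy limit (the transition functors $\mathcal{C}_m\to\mathcal{C}_{m-1}$ being fibrations in the model structure of \cite{tabuadaMCdgcat}) and exchanging $\holim_m$ with $\holim_{n\in\Delta}$. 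This detour is forced: cofibrant replacement in any module model structure over $\bbK[[\hbar]]$ is incompatible with the $\hbar$-adic completeness and truncation-wise projectivity conditions built into Definition~\ref{def:quantizeddgCat}, which is exactly the difficulty you name without addressing. Second, for the module categories over the noncommutative dg-algebras $\AAA^{n}_{\hbar\bullet}/\hbar^m$ the paper uses Positselski's projective model structure \emph{of the second kind} \cite[Theorem 8.3(a)]{positselski}, in which cofibrancy amounts to graded projectivity and weak equivalences between cofibrant objects are homotopy equivalences, equivalently morphisms inducing quasi-isomorphisms after applying $\big(B_\bullet\otimes H^{\otimes n}\big)\otimes_{\AAA^{n}_{\hbar\bullet}/\hbar}\big((-)/\hbar\big)$. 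With the ordinary quasi-isomorphism model structure implicit in your appeal to \cite[Corollary 18.7]{HirschowitzSimpson}, the homotopy-Cartesian description would be wrong for $\mathrm{per}_\hbar$: the relevant objects $\hat{\Tot}\CE^\bullet(\g,V_\bullet)$ are typically not cofibrant in the usual sense (Remark~\ref{rem:CEmodulescharacterization}), and plain quasi-isomorphism is not the correct notion of equivalence over the quantized algebras. Relatedly, your induction is phrased on $\mathrm{H}_i(M^\bullet_\bullet)$, but over the noncommutative deformation this carries no usable Cartesian comodule structure; the correct invariant is $\mathrm{H}_i\big((B_\bullet\otimes H^{\otimes n})\otimes_{\AAA^{n}_{\hbar\bullet}/\hbar}(M^n_\bullet/\hbar)\big)$, and the lifting step is of $G$-equivariant projective modules from $B_\bullet$ to $\AAA^{0}_{\hbar\bullet}/\hbar^m$. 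Finally, your closing claim that the vanishing of higher $G$-cohomology with $\bbK[[\hbar]]$-coefficients is the linchpin of the whole proof is accurate only for fully faithfulness; essential surjectivity hinges instead on the truncation-plus-inverse-limit scheme and the second-kind model structure.
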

\begin{proof}
We adapt the proof of Proposition \ref{prop:dgCatquotientstack} to 
the homotopy limit \eqref{eqn:quantizeddgCATholim}. Let us temporarily denote 
the dg-category from Definition \ref{def:quantizeddgCat} by $\mathcal{C}$, and the
analogous dg-categories defined via modules over the truncated rings $\bbK[\hbar]/\hbar^m$ by $\mathcal{C}_m$.
\sk

Given any object  $(V_\bullet,\nabla,\Psi) \in \mathcal{C}$,  we
observe that we have $\hat{\Tot}\CE^\bullet(\g,V_\bullet)  \in \mathrm{per}_\hbar\big(\CE^\bullet(\g,B_\bullet)\big)$, 
together with an isomorphism  $\theta : d_0^\ast(\hat{\Tot}\CE^\bullet(\g,V_\bullet)) \cong d_1^\ast(\hat{\Tot}\CE^\bullet(\g,V_\bullet))$ 
in $\mathrm{per}_\hbar\big(\CE^\bullet\big(\g^{\oplus 2}, B_\bullet\otimes H\big)\big)$ satisfying the 
cocycle condition $d_1^\ast\theta \cong d_2^\ast\theta \circ d_0^\ast\theta$. 
This gives us a dg-functor from $\mathcal{C}$ to the $2$-categorical limit of the cosimplicial diagram
in \eqref{eqn:quantizeddgCATholim} and hence  a dg-functor $\mathcal{C} \to \mathrm{per}_\hbar(T^\ast[X/G])$
to the homotopy limit.
\sk

To see that this is fully faithful, observe that on morphisms we are looking at the map 
\begin{flalign}
\resizebox{1\hsize}{!}{$\hom_{A[[\hbar]]}^{H,\nabla,\Psi}\big(V_\bullet,V^\prime_\bullet\big) ~\to~ 
\holim\Big(
\xymatrix@C=1em{\hat{\Tot}\CE^\bullet\big(\g,  \hom_{A[[\hbar]]}^{\nabla,\Psi}(V_\bullet,V_\bullet^\prime)\big) \ar@<0.5ex>[r] \ar@<-0.5ex>[r]~&~ 
\hat{\Tot}\CE^\bullet\big(\g^{\oplus 2}, \hom_{A[[\hbar]]}^{\nabla,\Psi}(V_\bullet, V^\prime_\bullet)\otimes H\big)  
~\cdots
}\Big)\quad,$}
\end{flalign}
where we have used the description of the morphism complexes from Remark \ref{rem:alternativequantmorphisms}.
We can calculate the homotopy limit as a normalized total complex, 
and the factors involving $\g$ make an acyclic contribution, giving us a 
quasi-isomorphism from the homotopy limit to 
$ \hat\Tot N^\bullet(G,\hom_{A[[\hbar]]}^{\nabla,\Psi}(V_\bullet, V^\prime_\bullet))$. 
Since $G$ is reductive, this is quasi-isomorphic to its subcomplex 
$\hom_{A[[\hbar]]}^{H,\nabla,\Psi}(V_\bullet, V^\prime_\bullet)$ of $G$-invariants, 
giving the required quasi-isomorphism on $\hom$-complexes.
\sk

We now turn our attention to essential surjectivity. 
In order to use the same conventions as in the proof
of Proposition \ref{prop:dgCatquotientstack}, let us introduce the 
{\em chain} dg-algebras $\AAA_{\hbar\bullet}^{n}$ 
via the usual degree-reflection of the {\em cochain} dg-algebras
$\AAA_{\hbar}^{n\bullet}$, i.e.\ $(\AAA_{\hbar}^{n})_i := (\AAA_{\hbar}^{n})^{-i}$.
In order to appeal to the strictification 
theorem \cite[Corollary 18.7]{HirschowitzSimpson}, we use the projective model structure of 
the second kind, as in \cite[Theorem 8.3(a)]{positselski}, applied to the categories of 
$\AAA_{\hbar\bullet}^{n}/\hbar^m$-modules. 
This model structure has the crucial property that modules are cofibrant 
whenever the underlying graded module is projective.  
Beware that the notion of weak equivalence 
is stronger than quasi-isomorphism, but for cofibrant objects it coincides with 
homotopy equivalence, or in our case with morphisms $M_\bullet \to N_{\bullet}$ which induce quasi-isomorphisms 
\begin{flalign}\label{eqn:tmpinducedmod}
\big( B_\bullet \otimes H^{\otimes n}\big)\otimes_{\AAA_{\hbar\bullet}^{n}/\hbar}  \big(M_\bullet/\hbar\big)
~\longrightarrow~\big(B_\bullet \otimes H^{\otimes n}\big)\otimes_{\AAA_{\hbar\bullet}^{n}/\hbar} \big(N_{\bullet}/\hbar\big)\quad.
\end{flalign}
The dg-algebra map
$\AAA_{\hbar\bullet}^{n}/\hbar = \hat{\Tot}\CE^\bullet\big(\g^{\oplus n+ 1}, B_\bullet \otimes H^{\otimes n}\big)\to
B_\bullet \otimes H^{\otimes n}$ entering the induced module construction in 
\eqref{eqn:tmpinducedmod} is given by sending all Chevalley-Eilenberg 
generators to zero.
\sk

Replacing modules over  $\bbK[[\hbar]]$ with modules over $\bbK[\hbar]/\hbar^m$ in the homotopy limit 
\eqref{eqn:quantizeddgCATholim} gives us a dg-category  $\mathrm{per}_\hbar(T^\ast[X/G]/\hbar^m)$, 
which by strictification can be characterized as 
the dg-category $\mathrm{per}_\hbar(T^\ast[X/G]/\hbar^m)_{\mathrm{cart}}$ of those left modules 
\begin{flalign}
 \xymatrix@C=1em{
M^0_{\bullet} \ar@<0.5ex>[r] \ar@<-0.5ex>[r]~&~ 
M^1_{\bullet} \ar@<1ex>[r] \ar[r] \ar@<-1ex>[r] ~&~ 
M^2_{\bullet} \ar@<1.5ex>[r] \ar@<0.5ex>[r]\ar@<-0.5ex>[r] \ar@<-1.5ex>[r] ~&~ 
\cdots
}
\end{flalign}
in cosimplicial complexes over the cosimplicial noncommutative dg-algebra
\begin{flalign}
\xymatrix@C=1em{\AAA_{\hbar\bullet}^{0}/\hbar^m
\ar@<0.5ex>[r] \ar@<-0.5ex>[r]~&~ 
\AAA_{\hbar\bullet}^{1}/\hbar^m\ar@<1ex>[r] \ar[r] \ar@<-1ex>[r] ~&~ 
\AAA^2_{\hbar \bullet}/\hbar^m \ar@<1.5ex>[r] \ar@<0.5ex>[r]\ar@<-0.5ex>[r] \ar@<-1.5ex>[r] ~&~ 
\cdots
}
\end{flalign}
which
\begin{enumerate}
\item are projective as cosimplicial graded modules,
\item have  $B_{\bullet}\otimes_{\AAA_{\hbar\bullet}^{0}/\hbar}  (M^0_\bullet/\hbar)$ 
cofibrant and perfect as a left $B_{\bullet}$-dg-module, and
\item are homotopy-Cartesian in the sense that the morphisms
\begin{flalign}
 d^j \,:\, \AAA_{\hbar\bullet}^{n+1}/\hbar^m \otimes_{ \AAA_{\hbar\bullet}^{n}/\hbar^m} M_\bullet^n ~\longrightarrow M_\bullet^{n+1}
\end{flalign}
from the induced modules are all homotopy equivalences.
\end{enumerate}

For any homotopy-Cartesian module 
$M_{\bullet}^{\bullet} \in \mathrm{per}_\hbar(T^\ast[X/G]/\hbar^m)_{\mathrm{cart}}$, 
the homology groups 
$\mathrm{H}_i\big((B_\bullet \otimes H^{\otimes n})\otimes_{\AAA_{\hbar\bullet}^{n}/\hbar} ( M^n_\bullet/\hbar) \big)$ 
form a Cartesian cosimplicial module, so for $n=0$ this gives a 
$G$-equivariant $B_\bullet$-module which determines the whole diagram.  
By lifting $G$-equivariant projective modules from $B_\bullet $ 
to $ \AAA_{\hbar\bullet}^{0}/\hbar^m$ 
and taking cones, we can then show by induction on the lowest non-zero homology group of 
$B_\bullet \otimes_{\AAA_{\hbar\bullet}^{0}/\hbar}  (M^0_\bullet/\hbar)$ 
that every object is homotopy equivalent to one in the essential image of $\mathcal{C}_m$.
\sk

Finally, observe that $\mathcal{C}=\varprojlim_m \mathcal{C}_m$, and that the 
morphisms $\mathcal{C}_m \to \mathcal{C}_{m-1}$ are all fibrations of 
dg-categories in the model structure of \cite{tabuadaMCdgcat}, so the limit is a 
homotopy limit. 
Similarly, we have 
$\mathrm{per}_\hbar\big(\CE^\bullet(\g^{\oplus n+1},B_\bullet\otimes H^{\otimes n})\big) 
\simeq \holim_m\mathrm{per}_\hbar\big(\CE^\bullet(\g^{\oplus n+1},B_\bullet\otimes H^{\otimes n})\big/\hbar^m)$, 
so we have shown
\begin{flalign}
\nn  \mathcal{C} \,&\simeq\, \holim_m^{}\,\mathcal{C}_m \\
\nn  \,&\simeq\,  \holim_m^{}\holim_{n \in \Delta}^{}\, \mathrm{per}_\hbar\big(\CE^\bullet(\g^{\oplus n+1},B_\bullet\otimes H^{\otimes n})\big/\hbar^m)\\
\nn  \,&\simeq\, \holim_{n \in \Delta}^{}\, \mathrm{per}_\hbar\big(\CE^\bullet(\g^{\oplus n+1},B_\bullet\otimes H^{\otimes n})\big)\\
\,&\simeq\, \mathrm{per}_\hbar(T^\ast[X/G])\quad,
\end{flalign}
as required.
\end{proof}

\begin{ex}\label{ex:globalrank1module}
The object $(B_{\hbar\bullet},\nabla,\Psi)\in \mathrm{per}_\hbar(\CE^\bullet(\g,B_\bullet))$ 
constructed in Example \ref{ex:rank1module} can be upgraded to an object in $\mathrm{per}_\hbar(T^\ast[X/G])$
by endowing $B_{\hbar\bullet}$ given in \eqref{eqn:Bhbarsharp} with the tensor product $H$-coaction.
The resulting object $(B_{\hbar\bullet},\nabla,\Psi)\in \mathrm{per}_\hbar(T^\ast[X/G])$ plays the role of a pointing 
(i.e.\ an $E_0$-monoidal structure) on the dg-category $\mathrm{per}_\hbar(T^\ast[X/G])$ that quantizes the 
symmetric monoidal structure on the classical dg-category $ \mathrm{per}(T^\ast[X/G])$. In fact,
the classical limit $(B_{\hbar\bullet},\nabla,\Psi)\vert_{\hbar =0}$ is the monoidal unit of $ \mathrm{per}(T^\ast[X/G])$.
\end{ex}

\begin{ex}\label{ex:Gmex}
As a simple illustration, let us sketch the quantized dg-category
$\mathrm{per}_\hbar(T^\ast[X/G])$ for the case where $X = \mathrm{pt} = \spec\,\bbK$ 
is a point and $G = \mathbb{G}_m = \spec\,\bbK[x^{\pm1}]$ is the $1$-dimensional torus.
The $\bbK[x^{\pm1}]$-coaction on the $\bbK[[\hbar]]$-dg-module $V_\bullet$ 
defines a weight grading $V_\bullet = \bigoplus_{n\in \bbZ} V_\bullet^{(n)}$
such that $\rho(s) = s\otimes x^n$, for all $s\in V_\bullet^{(n)}$. 
Since $\Omega^1_{\bbK}[[\hbar]] = 0$,
the datum of a connection $\nabla$ is necessarily $0$ in this case. Furthermore,
since $\g = \bbK$, the $\Psi$-map specializes to a $\bbK[[\hbar]]$-linear 
map $\Psi : V_\sharp \to V_{\sharp+1}$ of degree $1$ that
preserves the weight grading and squares $\Psi^2 =0$ to zero by ($\text{i}_{\hbar}$). 
The condition ($\text{ii}_{\hbar}$) then reads on elements $s\in V_\bullet^{(n)}$
of weight $n\in\bbZ$ as $\partial\Psi(s) + \Psi(\partial s)  =\hbar\, n\,s$.
\sk

This dg-category is generated by the following objects:
For each $n\in \bbZ$, consider $(V_\bullet,0,\Psi)\in\mathrm{per}_\hbar(T^\ast[X/G])$
whose underlying graded $\bbK[[\hbar]]$-module is
$\big(\Sym\,\g_{[-1]}\big)[[\hbar]]_\sharp \cong
\bbK[[\hbar]] \oplus t\,\bbK[[\hbar]]$ in weight $n$ with differential $\partial(a+ t\,b) = \hbar\, n\, b$
and $\Psi(a + t\,b) = t\,a$, for all $a,b\in\bbK[[\hbar]]$.
Note that $n=0$ gives the pointing from Example \ref{ex:globalrank1module}.
The morphism complexes between any two generating objects 
with different weights is $0$, and the endomorphism complex
of the generating object of weight $n$ is the CDGA
$\big(\Sym\,\g_{[-1]}\big)[[\hbar]]_\sharp$ with differential
$\partial(a+ t\,b) = \hbar\, n\, b$, for all $a,b\in\bbK[[\hbar]]$.
\sk

Note that on inverting $\hbar$ as in Remark \ref{rem:hodgestack}, 
these endomorphism complexes become acyclic for all $n\ne 0$, 
so $\mathrm{per}_\hbar(T^\ast[X/G])$ has a far richer structure than
the category of $\mathcal{D}$-modules on $[X/G]$.
\end{ex}


\section{\label{sec:lattice}Gauge theory on directed graphs}
We apply the constructions and results of the previous section
to study the quantization of gauge theories on directed graphs.
In Subsection \ref{subsec:model}, we introduce our gauge theoretic
model of interest. On each directed graph, the phase space of this model is 
a derived cotangent stack, together with its canonical unshifted Poisson structure.
Hence, it can be quantized by our techniques developed in Section \ref{sec:cotangent}
and we spell out this quantization explicitly in Subsection \ref{subsec:quantization}.
In Subsection \ref{subsec:factorizationproducts}, we show that these quantizations can be endowed
with a prefactorization algebra structure associated with suitable pairwise disjoint
embeddings of directed graphs.


\subsection{\label{subsec:model}The model}
Recall that a (finite) {\em directed graph} $U := (\s,\t: \E \rightrightarrows \V)$ consists
of a finite set $\V$ of vertices, a finite set $\E$ of edges and two maps $\s,\t : \E\rightrightarrows \V$
assigning to each edge its source and its target vertex. One may visualize
an edge $e\in\E$ by an arrow $\s(e)\stackrel{e}{\longrightarrow}\t(e)$.
We interpret a directed graph $U = (\s,\t: \E \rightrightarrows \V)$ as a discrete approximation
of a manifold: The vertices $v\in\V$ correspond to points and the edges
$e\in\E$ correspond to paths between points. 
\sk

One can decorate each directed graph $U = (\s,\t: \E \rightrightarrows \V)$ 
with gauge theoretic data, see e.g.\ \cite{Baez}. 
Let $G=\spec\, H$ be a smooth affine group scheme.
The space of {\em $G$-connections} on $U$ is defined as the product
\begin{flalign}\label{eqn:connections}
\Con_G^{}(U)\,:=\,\prod_{e\in \E} G
\end{flalign}
of affine schemes over the set of edges $\E$.
This is interpreted as assigning to each edge the datum of a parallel transport. 
The {\em gauge group} on $U$ is defined as the product
\begin{flalign}\label{eqn:gaugegroup}
\Gau(U)\,:=\, \prod_{v\in \V} G
\end{flalign}
of affine group schemes, i.e.\ we assign to each vertex a copy of $G$.
The action of the gauge group on the space of $G$-connections is defined by
\begin{flalign}\label{eqn:gaugeaction}
\Con_G^{}(U)\times \Gau(U)~\longrightarrow~\Con_G^{}(U)~~,\quad
\big(\{A_e\}_{e\in\E}, \{g_v\}_{v\in\V}\big)~\longmapsto~
\big\{g_{\t(e)}^{-1}\,A_e\,g_{\s(e)}^{}\big\}_{e\in \E}\quad,
\end{flalign}
where $g_{\t(e)}^{-1}\,A_e\,g_{\s(e)}^{}$ denotes group multiplications.
This is precisely the way in which parallel transports transform under gauge transformations.
\sk

The canonical phase space of the gauge theory is given by the derived cotangent stack
\begin{flalign}\label{eqm:phasespace}
T^\ast\big[\Con_G^{}(U)/\Gau(U)\big]
\end{flalign}
of the quotient stack of $G$-connections modulo gauge transformations on $U$,
together with its canonical unshifted Poisson structure.
As explained in Subsection \ref{subsec:sympred}, this derived stack
admits a concrete description in terms of derived symplectic reduction,
which we shall spell out now explicitly in the algebraic language. 
First, let us note that the function Hopf 
algebra of the gauge group \eqref{eqn:gaugegroup} is the tensor Hopf algebra
\begin{flalign}\label{eqn:Ogaugegroup}
\O(\Gau(U))\,=\,\bigotimes_{v\in\V }H \quad,
\end{flalign}
where $H=\O(G)$ is the function Hopf algebra on $G=\spec\, H$.
The Lie algebra of $\Gau(U)$ and its dual can be identified with direct sums 
\begin{flalign}\label{eqn:gaugeLiealgebra}
\g(U)\,\cong\,\bigoplus_{v\in\V} \g\quad,\qquad \g^\vee(U)\,\cong\ \bigoplus_{v\in \V} \g^\vee
\end{flalign}
of the Lie algebra $\g$ of $G=\spec\, H$ and its dual $\g^\vee$.
The cotangent bundle of the space of $G$-connections can be written as
\begin{flalign}
T^\ast \Con_G^{}(U) \,=\,\prod_{e\in\E} T^\ast G\,\cong\, \prod_{e\in \E}\big( \g^\vee \times G\big)\quad,
\end{flalign}
hence its function algebra is given by the tensor algebra
\begin{flalign}\label{eqn:OTCon}
\O(T^\ast \Con_G^{}(U))\,\cong\, \bigotimes_{e\in\E } \big( \Sym\,\g \otimes H\big)\quad.
\end{flalign}
Note that these isomorphisms require a choice of trivialization
of the cotangent bundle $T^\ast G$, or equivalently in the algebraic language
an isomorphism of $H$-modules $H\otimes \g \stackrel{\cong}{\to} \T_{\! H} = \Der(H)$.
We choose to work with the isomorphism 
\begin{flalign}\label{tmp:isoleftinv}
H\otimes \g ~\longrightarrow~ \T_{\! H} ~~,\quad h^\prime \otimes t ~\longmapsto~h^\prime \,\rho^{\mathrm{L}}(t)
\end{flalign}
given by the assignment of left invariant derivations $\rho^{\mathrm{L}}(t)(h)
= h_{\und{1}} \,t(h_{\und{2}})$, for all $t\in\g$ and $h\in H$.
To specify the comodule structure $\rho : \O(T^\ast \Con_G^{}(U))\to \O(T^\ast \Con_G^{}(U))\otimes 
\O(\Gau(U))$ induced by \eqref{eqn:gaugeaction}, it is convenient to introduce the following
notations: For a vertex $v\in\V$, we denote by 
$h_v\in \O(\Gau(U))$ the element that is $h\in H$ on the $v$-th
tensor factor of \eqref{eqn:Ogaugegroup} and the unit element $\oone$ on all other factors.
We further denote by $t_v\in \g(U)$ and $\theta_v\in \g^\vee(U)$ 
the elements that are, respectively, $t\in \g$ or $\theta\in\g^\vee$ 
on the $v$-th direct summand of \eqref{eqn:gaugeLiealgebra} and zero on all other summands.
For \eqref{eqn:OTCon} we use a similar convention and write $t_e\in \O(T^\ast \Con_G^{}(U))$
and $a_e\in \O(T^\ast \Con_G^{}(U))$ for the generators associated with
placing, respectively, $t\otimes \oone \in \g\otimes H$ or $\oone \otimes a\in\Sym\,\g\otimes H$ 
on the $e$-th tensor factor and unit elements everywhere else. The $\O(\Gau(U))$-coaction on $\O(T^\ast \Con_G^{}(U))$
induced by \eqref{eqn:gaugeaction} then reads on these generators as
\begin{flalign}\label{eqn:gaugerhocoaction}
\rho(a_e)\,=\,a_{\und{2}\,e}\otimes S(a_{\und{1}})_{\t(e)}\,a_{\und{3}\,\s(e)}~~,\quad
\rho(t_e)\,=\,t_{\und{0}\,e}\otimes t_{\und{1}\,\s(e)}\quad, 
\end{flalign}
where $t_{\und{0}}\otimes t_{\und{1}}=\rho_{\g}^{}(t)$ is the adjoint coaction on $\g$.
Note that the coaction on $t_{e}$ does only depend on the source $\s(e)$ but not on the target
$\t(e)$ of the edge, which is due to the fact that we have chosen to work with 
left invariant derivations in \eqref{tmp:isoleftinv}.
The moment map $\mu_U^{} : T^\ast \Con_G^{}(U) \to \g^\vee(U)$ is algebraically described
by the algebra map
\begin{subequations}\label{eqn:gaugemoment}
\begin{flalign}
\mu_U^{\ast}\,:\Sym\,\g(U)~\longrightarrow~\O(T^\ast \Con_G^{}(U))
\end{flalign}
that is given on the generators $t_v\in \g(U)$ by
\begin{flalign}
\mu_U^\ast(t_v)\,=\,-\!\! \sum_{e\in \s^{-1}(v)}\!\! t_e + \!\! \sum_{e\in \t^{-1}(v)}\!\! t_{\und{0}\,e} \, S(t_{\und{1}})_{e}\quad.
\end{flalign}
\end{subequations}
Note that the moment map is a discrete analog of the Gauss constraint: 
For each vertex $v\in\V$, it compares the total outgoing canonical 
momentum ($\s(e)=v$) with the appropriately parallel transported 
total incoming canonical momentum ($\t(e)=v$).
\sk

With these preparations we can now compute the derived zero locus
$\mu_U^{-1}(0)$ of the moment map \eqref{eqn:gaugemoment},
which by \eqref{eqn:Bbulletsym} is the derived affine scheme 
defined by the following chain CDGA $\O(\mu_{U}^{-1}(0))_\bullet\in\dgCAlg_{\geq 0}$: 
The underlying graded algebra is
\begin{subequations}\label{eqn:OmuU}
\begin{flalign}\label{eqn:OmuU1}
\O(\mu_U^{-1}(0))_\sharp \,=\,  \Big(\bigotimes_{v\in\V}\Sym\,\g_{[-1]}\Big)_{\sharp} 
\otimes \bigotimes_{e\in\E } \big( \Sym\,\g \otimes H\big)
\end{flalign}
and the chain differential $\partial$ is defined on the generators by
\begin{flalign}\label{eqn:OmuU2}
\partial(a_e)\,=\,0~~,\quad
\partial(t_e)\,=\, 0~~,\quad
\partial(t_v)\,=\, \mu^{\ast}_U(t_v)\,=\, -\!\! \sum_{e\in \s^{-1}(v)}\!\! t_e + \!\! \sum_{e\in \t^{-1}(v)}\!\! t_{\und{0}\,e} \, S(t_{\und{1}})_{e}\quad.
\end{flalign}
The $\O(\Gau(U))$-coaction on $\O(\mu_{U}^{-1}(0))_\bullet$ is 
given by \eqref{eqn:gaugerhocoaction} and the adjoint coaction on $t_v\in\g(U)$.
For completeness, let us spell out the coaction on the generators
\begin{flalign}\label{eqn:OmuU3}
\rho(a_e)\,=\,a_{\und{2}\,e}\otimes S(a_{\und{1}})_{\t(e)}\,a_{\und{3}\,\s(e)}~~,\quad
\rho(t_e)\,=\,t_{\und{0}\,e}\otimes t_{\und{1}\,\s(e)}~~,\quad
\rho(t_v)\,=\, t_{\und{0}\,v}\otimes t_{\und{1}\,v} \quad.
\end{flalign}
\end{subequations}
Summing up, we have shown that the canonical phase space of our gauge theory model
on the directed graph $U = (\s,\t:\E\rightrightarrows\V)$
is given by the derived quotient stack
\begin{flalign}\label{eqn:phasespace}
\mathcal{S}(U)\,:=\, \big[\mu_U^{-1}(0)/\Gau(U)\big]\,\simeq\, T^\ast\big[\Con_G(U)/\Gau(U)\big]\quad.
\end{flalign}

The assignment $U \mapsto \mathcal{S}(U)$ of the canonical phase spaces 
to directed graphs is functorial with respect to a (rather limited) 
class of graph embeddings.
\begin{defi}\label{def:digraph}
The category $\DD$ is defined as follows:
\begin{itemize}
\item An object in $\DD$ is a directed graph 
$U = (\s,\t: \E \rightrightarrows \V)$.

\item A $\DD$-morphism $f: U\to U^\prime$
from $U = (\s,\t: \E \rightrightarrows \V)$ 
to $U^\prime = (\s^\prime,\t^\prime : \E^\prime \rightrightarrows \V)$ is a pair 
$(f_{\V}: \V \to \V^\prime , f_{\E} : \E\to\E^\prime)$ of injective maps that satisfies the following properties:
\begin{itemize}
\item[(i)] The two diagrams
\begin{flalign}
\xymatrix@R=2em@C=2em{
\ar[d]_-{\s}\E \ar[r]^-{f_{\E}} & \ar[d]^-{\s^\prime}\E^\prime && \ar[d]_-{\t}\E \ar[r]^-{f_{\E}} & \ar[d]^-{\t^\prime}\E^\prime\\
\V \ar[r]_-{f_{\V}}& \V^\prime &&\V \ar[r]_-{f_{\V}}& \V^\prime
}
\end{flalign}
commute.

\item[(ii)] For all vertices $v\in\V$, the induced 
maps of fibers $f_{\E} : \s^{-1}(v)\to \s^{\prime -1}(f_{\V}(v))$
and $f_{\E} : \t^{-1}(v)\to \t^{\prime -1}(f_{\V}(v))$ are bijections.
\end{itemize}
\end{itemize}
\end{defi}
\begin{rem}
Note that our definition of $\DD$-morphisms is much more restrictive than
the concept of graph refinements used in \cite{Baez}. The reason behind this is as follows:
While the assignment $U\mapsto [\Con_G(U)/\Gau(U)]$ of quotient stacks considered
in \cite{Baez} is contravariantly functorial with respect to general graph refinements $U\to U^\prime$,
the resulting stack morphisms $[\Con_G(U^\prime)/\Gau(U^\prime)]\to [\Con_G(U)/\Gau(U)]$
do {\em not} in general induce to the derived cotangent stacks \eqref{eqn:phasespace}
as a consequence of the limited functorial properties of cotangent bundles.
Our more restrictive concept of $\DD$-morphisms is not affected by this problem.
Property (ii) of Definition \ref{def:digraph} formalizes the idea
that the neighborhood structure of every vertex $v\in\V$ is preserved 
by an embedding of directed graphs, i.e.\ the edges in $U$ starting/ending
at $v$ correspond bijectively to the edges in $U^\prime$ starting/ending at $f_{\V}(v)$.
This is required for naturality of the moment map \eqref{eqn:gaugemoment}.
\end{rem}

To describe the functorial structure on $U \mapsto \mathcal{S}(U)$,
let us first observe that the assignment $U\mapsto \O(\mu_{U}^{-1}(0))_\bullet$ 
of the chain CDGAs defined in \eqref{eqn:OmuU} can be promoted to a functor
$\O(\mu_{(-)}^{-1}(0))_\bullet : \DD \to \dgCAlg_{\geq 0}$ on the category 
$\DD$ introduced in Definition \ref{def:digraph}. Concretely, 
to each morphism $f:U\to U^\prime$ in $\DD$ we assign the chain CDGA morphism
\begin{subequations}\label{eqn:fastmap}
\begin{flalign}
f_\ast:= \O(\mu_{f}^{-1}(0))_\bullet\,:\, \O(\mu_{U}^{-1}(0))_\bullet~\longrightarrow~\O(\mu_{U^\prime}^{-1}(0))_\bullet
\end{flalign}
that is defined on the generators by
\begin{flalign}
f_\ast(a_e)\,=\,a_{f_{\E}(e)}~~,\quad
f_\ast(t_e)\,=\,t_{f_{\E}(e)}~~,\quad
f_\ast(t_v)\,=\, t_{f_{\V}(v)}\quad.
\end{flalign}
\end{subequations}
Using property (ii) of Definition \ref{def:digraph},
one easily confirms that $f_\ast$ preserves the chain differentials \eqref{eqn:OmuU2}.
Next, we note that the assignment $U\mapsto \O(\Gau(U))$ of the 
gauge Hopf algebras \eqref{eqn:Ogaugegroup} is functorial too:
To each $\DD$-morphism $f:U\to U^\prime$ we assign the Hopf algebra morphism
\begin{subequations}\label{eqn:tildefastmap}
\begin{flalign}
\tilde{f}_\ast :=\O(\Gau(f))\,:\, \O(\Gau(U))~\longrightarrow~\O(\Gau(U^\prime))
\end{flalign}
that is defined on the generators by
\begin{flalign}
\tilde{f}_{\ast}(h_v)\,=\, h_{f_{\V}(v)}~~.
\end{flalign}
\end{subequations}
It is evident that \eqref{eqn:fastmap} is equivariant relative to \eqref{eqn:tildefastmap}, i.e.\
the diagram
\begin{flalign}
\xymatrix@R=2em@C=2em{
\ar[d]_-{f_\ast}\O(\mu_{U}^{-1}(0))_\bullet \ar[r]^-{\rho} & \O(\mu_{U}^{-1}(0))_\bullet\otimes \O(\Gau(U))\ar[d]^-{f_\ast\otimes \tilde{f}_\ast}\\
\O(\mu_{U^\prime}^{-1}(0))_\bullet  \ar[r]_-{\rho^\prime}&  \O(\mu_{U^\prime}^{-1}(0))_\bullet\otimes \O(\Gau(U^\prime))
}
\end{flalign}
involving the coactions \eqref{eqn:OmuU3} commutes.
This shows that $\mu^{-1}_{(-)}(0)\,,\,\Gau(-) : \DD^\op\to \mathbf{dSt}$ are strict functors
and the right action $\mu^{-1}_{(-)}(0)\times \Gau(-)\to \mu^{-1}_{(-)}(0)$ is a strict natural transformation, 
which passing to derived quotient stacks yields an $\infty$-functor
$\mathcal{S}  : \DD^{\op}\to \mathbf{dSt}$ that assigns the phase spaces \eqref{eqn:phasespace}.
\sk

To show that the canonical unshifted Poisson structures
on $\mathcal{S}(U) \cong T^\ast[\Con_G^{}(U)/\Gau(U)]$ are natural with 
respect to this functorial structure, one has to resolve
each $\mathcal{S}(U)$ in terms of stacky CDGAs (see Subsection \ref{subsec:resolution})
and then show that the degree-wise Poisson structures $\{\,\cdot\,,\,\cdot\,\}_n^{}$
from Subsection \ref{subsec:cotangentresolution}
are natural with respect to the induced functorial structure of the cosimplicial stacky CDGAs.
We shall now show this explicitly for the case $n=0$ and note that
the case $n\geq 1$ follows by the same arguments.
For $n=0$, we are dealing with the formal derived quotient stack
$T^\ast[\Con_G^{}(U)/\g(U)]\cong [\mu^{-1}_U(0)/\g(U)]$
obtained by replacing the action of the gauge group $\Gau(U)$ 
with that of the gauge Lie algebra \eqref{eqn:gaugeLiealgebra}.
This is described algebraically by the Chevalley-Eilenberg stacky CDGA
\begin{flalign}\label{eqn:latticeCE}
\CE^\bullet\big(\g(U), \O(\mu^{-1}_U(0))_\bullet\big)\,\in\,\DGdgCAlg\quad.
\end{flalign}
Let us spell this out fully explicitly. The underlying bigraded
algebra reads as
\begin{flalign}
\CE^\sharp\big(\g(U), \O(\mu^{-1}_U(0))_\sharp\big)\,=\, 
\Big(\bigotimes_{v\in\V}\Sym\,\g^{\vee[-1]}\Big)^{\sharp} \otimes \Big(\bigotimes_{v\in\V}\Sym\,\g_{[-1]}\Big)_{\sharp} 
\otimes \bigotimes_{e\in\E } \big( \Sym\,\g \otimes H\big)\quad.
\end{flalign}
Using \eqref{eqn:OmuU2}, we observe that the
chain differential acts on the generators as
\begin{flalign}
\partial(a_e)\,=\,0~~,\quad
\partial(t_e)\,=\, 0~~,\quad
\partial(t_v)\,=\, -\!\! \sum_{e\in \s^{-1}(v)}\!\! t_e + \!\! \sum_{e\in \t^{-1}(v)}\!\! t_{\und{0}\,e} \, S(t_{\und{1}})_{e}~~,\quad
\partial(\theta_v ) \,=\,0\quad.
\end{flalign}
Using also \eqref{eqn:OmuU3}, we obtain that 
the cochain (i.e.\ Chevalley-Eilenberg) differential acts on the generators as
\begin{align}
\nn \delta(a_e)\,&=\, \theta^b_{\t(e)} ~\rho^{\mathrm{R}}(t_b)(a)_e +
\theta^b_{\s(e)} ~\rho^{\mathrm{L}}(t_b)(a)_{e}\quad, &
\delta(t_e)\,&=\, \theta^b_{\s(e)} ~\rho_{\g}^{}(t_b)(t)_e\quad,\\
\delta(t_v)\,&=\,\theta^b_{v}~\rho_{\g}^{}(t_b)(t)_v\quad, &
\delta(\theta^a_v)\,&=\,-\tfrac{1}{2} f^a_{bc}\,\theta^b_v\,\theta^c_v\quad,
\end{align}
where $\rho^{\mathrm{L}/\mathrm{R}}$ denotes the $\g$-action on $H$ in 
terms of left/right invariant derivations (see \eqref{eqn:inducedLieactions2})
and $\rho_{\g}^{}$ denotes the adjoint $\g$-action on $\g$.
Concerning the canonical unshifted Poisson structure, 
we obtain from \eqref{eqn:Poissonlevel0} that the non-vanishing
Poisson brackets between the generators read for the present example as
\begin{flalign}\label{eqn:latticePoisson0}
\big\{t_{e},a_{e^\prime}\big\}_0^{} \,=\,\delta_{e e^\prime}\,\rho^{\mathrm{L}}(t)(a)_e~~,\quad
\big\{t_{e},t^\prime_{e^\prime}\big\}_0^{} \,=\,\delta_{e e^\prime}\,[t,t^\prime]_e ~~,\quad
\big\{t_v,\theta_{v^\prime}\big\}_0^{} \,=\, - \delta_{v v^\prime}\, \langle\theta,t\rangle \quad,
\end{flalign}
where $[\,\cdot\, ,\,\cdot\,]$ denotes the Lie bracket on $\g$, and
$\delta_{e e^\prime}$ and $\delta_{vv^\prime}$ are the Kronecker-deltas 
on the sets of edges $\E$ and vertices $\V$.
In other words, the Poisson bracket acts locally on both the edges and the vertices.
The Chevalley-Eilenberg stacky CDGAs in \eqref{eqn:latticeCE} are clearly
functorial with respect to $\DD$-morphisms $f:U\to U^\prime$
by applying the construction in Remark \ref{rem:CEalgebrafunctoriality} to the two
morphisms in \eqref{eqn:fastmap} and \eqref{eqn:tildefastmap}.
The unshifted Poisson structure defined in \eqref{eqn:latticePoisson0} 
is clearly natural with respect to this functorial structure.
The same holds true for the degree $n\geq 1$ stacky CDGAs in the 
cosimplicial diagram obtained by specializing
\eqref{eqn:cosimplicialstackyCDGA} to our case of interest 
and their Poisson structures determined by 
\eqref{eqn:Poissonleveln1} and \eqref{eqn:Poissonleveln2}.


\subsection{\label{subsec:quantization}Quantization}
Recall that, for each directed graph $U = (\s,\t:\E\rightrightarrows\V)\in\DD$, 
the phase space $\mathcal{S}(U)$ of our gauge theory \eqref{eqn:phasespace} is a derived cotangent stack together with
its canonical unshifted Poisson structure. Hence, our global quantization construction
in Subsection \ref{subsec:globalquantization} can be applied in order to assign, 
to each object $U\in \DD$, a dg-category
\begin{flalign}\label{eqn:AAAAU}
\AAAA(U)\,:=\, \mathrm{per}_\hbar\big(\mathcal{S}(U)\big)
\end{flalign}
over $\bbK[[\hbar]]$ that quantizes the dg-category of perfect modules over $\mathcal{S}(U)$.
From a physical point of view, the dg-category $\AAAA(U)$ should be interpreted
as a generalization of an algebra of quantum observables, which motivates our choice of notation $\AAAA$. 
Note that this generalization from algebras to categories is necessary because the derived stack 
$\mathcal{S}(U)$ is not affine and hence it is not captured by a single algebra. Readers who are not 
familiar with such concepts are referred to \cite{2AQFT}, where similar ideas are pursued
in a simpler $2$-categorical setup.
\sk

In the remainder of this subsection, we shall spell out the dg-categories
\eqref{eqn:AAAAU} fully explicitly and also describe their (pseudo-)functorial structure
$\AAAA : \DD\to \DGCat_{\bbK[[\hbar]]}$ on the category of directed graphs $\DD$
introduced in Definition \ref{def:digraph}. Concerning the first task,
let us recall the general description of $\mathrm{per}_\hbar(T^\ast[X/G])$ given 
in Definition \ref{def:quantizeddgCat}. In the present case, the algebra $A\in\CAlg$ is given
by the function algebra
\begin{flalign}
\O(\Con_G(U))\,=\,\bigotimes_{e\in\E} H
\end{flalign}
on the space of $G$-connections on the graph 
$U= (\s,\t:\E\rightrightarrows\V)\in\DD$.
The $\O(\Gau(U))$-comodule structure on $\O(\Con_G(U))$ is the one defined in the first 
equation in \eqref{eqn:gaugerhocoaction}. The datum in item ($1_{\hbar}$) of
Subsection \ref{subsec:globalquantization} then specializes as follows:
\begin{itemize}
\item[($1_{\hbar}$)] A left $\O(\Con_G(U))[[\hbar]]$-dg-module $V_\bullet$ with a compatible $\O(\Gau(U))$-coaction
$\rho_V^{} : V_\bullet\to V_\bullet\otimes \O(\Gau(U))$ of the gauge Hopf algebra \eqref{eqn:Ogaugegroup}.
\end{itemize}
To proceed with the other data, let us observe that $\Omega^1_{\O(\Con_G(U))}\cong 
\big(\bigoplus_{e\in \E} \g^\vee \big)\otimes \O(\Con_G(U))$ is a free $\O(\Con_G(U))$-module by using the dual of
the isomorphism in \eqref{tmp:isoleftinv}.
The data in items ($2_{\hbar}$) and ($3_{\hbar}$) of Subsection \ref{subsec:globalquantization} 
then specialize as follows:
\begin{itemize}
\item[($2_{\hbar}$)] An $\O(\Gau(U))$-equivariant chain map
$\nabla : \big(\bigoplus_{e\in\E} \g\big) \otimes V_\bullet\to V_\bullet$ satisfying the Leibniz rule
\begin{flalign}\label{eqn:gaugeLeibnizrule}
\nabla_{t_{e}}\big(a_{e^\prime}\cdot s\big)\,=\, \hbar\, \delta_{e e^\prime} \,\rho^{\mathrm{L}}(t)(a)_e \cdot s 
+ a_{e^\prime}\cdot \nabla_{t_{e}}(s)\quad,
\end{flalign}
for all $t_e\in \bigoplus_{e\in\E} \g$, $a_{e^\prime}\in \O(\Con_G(U))$ and $s\in V_\bullet$, where 
$\cdot$ denotes the module structure on $V_\bullet$.

\item[($3_{\hbar}$)] An $\O(\Gau(U))$-equivariant graded left 
$\O(\Con_G(U))[[\hbar]]$-module map $\Psi : \g(U)_{[-1]}\otimes V_\sharp\to V_\sharp$.
\end{itemize}
These structure maps have to satisfy the following properties:
\begin{itemize}
\item[($\text{i}_{\hbar}$)] For all $t_e, t^\prime_{e^\prime} \in \bigoplus_{e\in\E} \g$ and
$t_v,t^\prime_{v^\prime} \in \g(U)_{[-1]}$,
\begin{flalign}
\nn \nabla_{t_{e}}\circ \nabla_{t^\prime_{e^\prime}} - \nabla_{t^\prime_{e^\prime}} \circ \nabla_{t_{e}}
\,&=\,\hbar\,\delta_{e e^\prime} \,\nabla_{[t,t^\prime]_e} \quad,\\
\nn \nabla_{t_e} \circ \Psi_{t_v} \,&=\, \Psi_{t_v} \circ \nabla_{t_e} \quad,\\
\Psi_{t_v}\circ \Psi_{t^\prime_{v^\prime}} \,&=\,- \Psi_{t^\prime_{v^\prime}}\circ \Psi_{t_v}\quad,
\end{flalign}
where $[t,t^\prime]\in\g$ denotes the Lie bracket.

\item[($\text{ii}_{\hbar}$)] For all $t_v\in \g(U)_{[-1]}$,
\begin{flalign}
\nn \partial\circ \Psi_{t_v} +\Psi_{t_v}\circ \partial \,&=\, \nabla_{\mu^{\ast}_U(t_v)}+ \hbar\, \rho_V(t_v)\\
&\,=\,-\!\! \sum_{e\in \s^{-1}(v)}\!\! \nabla_{t_e} + \!\! \sum_{e\in \t^{-1}(v)}\!\!  S(t_{\und{1}})_{e}\cdot 
\nabla_{t_{\und{0}\,e}} + \hbar\, \rho_V(t_v)\quad,\label{eqn:tmpgausslaw}
\end{flalign}
where $\partial$ denotes the differential on $V_\bullet$ 
and in the second line we have used the explicit expression \eqref{eqn:gaugemoment} for the moment map.
\end{itemize}
\begin{rem}\label{rem:quantummechanics}
A triple $(V_\bullet,\nabla,\Psi)$ as described above is a (dg and equivariant) generalization of a familiar concept from quantum mechanics:
The quantum mechanical analogs of $\O(\Con_G(U))$ in ($1_{\hbar}$) and 
of $\bigoplus_{e\in\E} \g$ in ($2_{\hbar}$) are respectively 
the algebra of position operators and the space of canonical momenta, which act on $V_\bullet$ via $\nabla$. The Leibniz
rule \eqref{eqn:gaugeLeibnizrule} and the flatness condition in ($\text{i}_{\hbar}$)
enforce that this action satisfies the canonical commutation relations determined by the Poisson 
bracket \eqref{eqn:latticePoisson0}. 
Therefore, the module $V_\bullet$ plays the same role as the `space of wave functions' in quantum mechanics.
The $\O(\Gau(U))$-comodule structure on $V_\bullet$ means that the `wave functions' transform under the gauge 
symmetries of the system, which is a distinct feature of gauge theories that is typically not
present in ordinary quantum mechanics, and $\O(\Gau(U))$-equivariance of $\nabla$ and $\Psi$ 
simply enforces compatibility of these structures with gauge symmetries.
The physical interpretation
of $\g(U)_{[-1]}$ in ($3_{\hbar}$) is that of anti-ghosts (in the context of the BFV formalism)
and the map $\Psi$ provides an action of these anti-ghosts on the `wave functions'. 
Item ($\text{i}_{\hbar}$) expresses that this action is compatible with the rest of the structure
and item ($\text{ii}_{\hbar}$) implements homologically the quantum Gauss constraint
for the `wave functions'. The $\hbar$-dependent factor in \eqref{eqn:tmpgausslaw} 
connects the anti-ghosts to the $\O(\Gau(U))$-comodule structure and it 
can be interpreted as a non-infinitesimal analog of the non-trivial commutation relations
between the anti-ghost and the ghosts of the traditional BFV formalism.
\end{rem}

\begin{rem}
We would like to comment briefly on the relationship
between our quantized dg-category $\mathrm{per}_\hbar(T^\ast[X/G])$
and the approach in \cite{Pflaum}, which is based on BRST deformation quantization techniques 
that have been developed in \cite{Bordemann1,Bordemann2}. The main
difference is that these papers consider infinitesimal gauge transformations,
i.e.\ they work with the gauge Lie algebra \eqref{eqn:gaugeLiealgebra}, 
while our quantized dg-category $\mathrm{per}_\hbar(T^\ast[X/G])$
takes into account the gauge group \eqref{eqn:gaugegroup}. When translated
to the language of our paper, this means that these authors construct 
an analog (in the context of differential geometry and deformation quantization)
of our noncommutative cochain dg-algebra $\AAA^{0\bullet}_{\hbar}$, which is however
only the first component of the cosimplicial diagram in Proposition 
\ref{propo:cosimplicialquantum} that encodes the quantization of $T^\ast[X/G]$.
\end{rem}

Summing up, let us record
\begin{cor}
For each directed graph $U = (\s,\t:\E\rightrightarrows\V)\in\DD$, 
the quantized dg-category $\AAAA(U)$ in \eqref{eqn:AAAAU} admits the following explicit description:
Its objects are all triples $(V_\bullet,\nabla,\Psi)$ as described above 
that satisfy the conditions from Definition \ref{def:quantizeddgCat}.
The cochain complexes (over $\bbK[[\hbar]]$) of morphisms consist of all left $\O(\Con_G(U))[[\hbar]]$-module maps satisfying 
the compatibility conditions for the $\O(\Gau(U))$-coactions, $\nabla$ and $\Psi$
listed in Definition \ref{def:quantizeddgCat}.
\end{cor}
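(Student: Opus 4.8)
The plan is to deduce this statement directly from Proposition \ref{prop:quantdgCatquotientstack} by recognizing the phase space \eqref{eqn:phasespace} as an instance of the class $T^\ast[X/G]$ treated there. Indeed, by \eqref{eqn:phasespace} we have $\mathcal{S}(U)\simeq T^\ast[\Con_G(U)/\Gau(U)]$, so the role of the smooth affine scheme $X$ is played by the space of connections $\Con_G(U)=\spec\big(\bigotimes_{e\in\E}H\big)$ and the role of the structure group by the gauge group $\Gau(U)=\prod_{v\in\V}G$. Since $\AAAA(U)$ is by definition \eqref{eqn:AAAAU} the global quantization $\mathrm{per}_\hbar(\mathcal{S}(U))$, it suffices to verify the hypotheses of Proposition \ref{prop:quantdgCatquotientstack} and then unwind its conclusion in this concrete setting.

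First I would check the reductivity hypothesis (under the standing assumption that $G$ is reductive). The gauge group $\Gau(U)=\prod_{v\in\V}G$ is a finite product of copies of $G$, and a finite product of reductive group schemes is again reductive; equivalently, its function Hopf algebra $\O(\Gau(U))=\bigotimes_{v\in\V}H$ has vanishing higher algebraic group cohomology, which is precisely the property exploited in the proof of Proposition \ref{prop:quantdgCatquotientstack}. With reductivity in hand, Proposition \ref{prop:quantdgCatquotientstack} immediately identifies $\AAAA(U)$ with the dg-category of Definition \ref{def:quantizeddgCat}, instantiated with $A=\O(\Con_G(U))$ and $H=\O(\Gau(U))$.

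The remaining work is a translation exercise: rewriting the abstract data $(V_\bullet,\nabla,\Psi)$ and the conditions $(\text{i}_\hbar)$, $(\text{ii}_\hbar)$ of Definition \ref{def:quantizeddgCat} in terms adapted to the graph. The module-with-coaction datum $(1_\hbar)$ is read off directly. For the connection $(2_\hbar)$ I would use the left-invariant trivialization dual to \eqref{tmp:isoleftinv}, which exhibits $\Omega^1_{\O(\Con_G(U))}\cong\big(\bigoplus_{e\in\E}\g^\vee\big)\otimes\O(\Con_G(U))$ as a free module; under this identification $\nabla$ decomposes into the edge-indexed covariant derivatives $\nabla_{t_e}$, and the Leibniz rule for $\hbar\,\dd^{\dR}$ specializes to \eqref{eqn:gaugeLeibnizrule}. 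The map $\Psi$ of $(3_\hbar)$ is unchanged, and substituting the explicit moment map \eqref{eqn:gaugemoment} into the general conditions $(\text{i}_\hbar)$ and $(\text{ii}_\hbar)$ produces the edge/vertex-local flatness and Gauss-constraint relations stated above. The morphism complexes are obtained verbatim from Definition \ref{def:quantizeddgCat}.

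The only point requiring genuine care — and hence the main obstacle — is the bookkeeping of this trivialization: one must confirm that the left-invariant frame is compatible with the $\O(\Gau(U))$-coaction, so that the edge-local structures are $\O(\Gau(U))$-equivariant and the equivariance of $\nabla$ and $\Psi$ in Definition \ref{def:quantizeddgCat} matches the equivariance formulated in $(2_\hbar)$ and $(3_\hbar)$. This is exactly where the asymmetry of the coaction noted after \eqref{eqn:gaugerhocoaction} — namely that $\rho(t_e)$ depends only on the source $\s(e)$ — enters, and it is the feature that must be tracked consistently through the identification to ensure the explicit description is genuinely equivalent to the abstract one.
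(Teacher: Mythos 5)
Your proposal is correct and follows essentially the same route as the paper: the paper's own justification (the text preceding the corollary, which the corollary merely summarizes) is precisely the specialization of Proposition \ref{prop:quantdgCatquotientstack} and Definition \ref{def:quantizeddgCat} to $A=\O(\Con_G(U))$, $H=\O(\Gau(U))$, using the left-invariant trivialization \eqref{tmp:isoleftinv} to make $\Omega^1_{\O(\Con_G(U))}$ free and thereby turn $\nabla$ into the edge-indexed covariant derivatives and the moment-map condition into the Gauss constraint \eqref{eqn:tmpgausslaw}. Your explicit verification that $\Gau(U)=\prod_{v\in\V}G$ inherits reductivity from $G$ is a small point the paper leaves implicit, but it does not change the argument.
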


We conclude by spelling out rather explicitly the (pseudo-)functorial structure of the 
assignment $U\mapsto \AAAA(U)$ of the dg-categories in \eqref{eqn:AAAAU}.
This structure can be deduced from the obvious functoriality of the local quantizations
$U \mapsto \AAA_{\hbar}^{n\bullet}(U)$ (the latter are obtained by specializing Subsection 
\ref{subsec:localquantization} to our example) and passing to dg-categories 
\eqref{eqn:quantizeddgCATholim} via the induced module functor.
The resulting dg-functor $\AAAA(f) : \AAAA(U)\to \AAAA(U^\prime)$
associated to a $\DD$-morphism $f : U\to U^\prime$ (see Definition \ref{def:digraph})
admits a concrete description that is similar to our construction of the object 
$B_{\hbar\bullet}$ in the Examples \ref{ex:globalrank1module} and \ref{ex:rank1module}.
To emphasize this analogy, let us introduce the graded $\bbK[[\hbar]]$-module
\begin{flalign}\label{eqn:Bhbarf}
B_{\hbar\sharp}(f)\,:=\, \!\!\!\bigotimes_{e^\prime\in\E^\prime\setminus f_{\E}(\E)} \!\!\!\mathrm{DiffOp}_\hbar(H[[\hbar]])\, \otimes \!\!\!\bigotimes_{v^\prime\in \V^{\prime}\setminus f_{\V}(\V)} \!\!\! \big(\Sym\,\g_{[-1]}\big)_\sharp 
\end{flalign}
that is obtained by forming tensor products over the edges 
and vertices of $U^\prime$ that are not contained in the image of the morphism $f$.
Due to locality of the gauge symmetries (see e.g.\ \eqref{eqn:OmuU3}),
the graded module $B_{\hbar\sharp}(f)$ carries a canonical comodule
structure for the Hopf algebra $\O(\Gau(U^\prime))$ associated with the bigger graph $U^\prime$.
Given any object $(V_\bullet, \nabla,\Psi)\in \AAAA(U)$, we can endow
$V_\sharp$ with the $\O(\Gau(U^\prime))$-comodule structure
\begin{flalign}
\xymatrix@C=2.5em{
V_\sharp \ar[r]^-{\rho_V^{}} & V_\sharp \otimes \O(\Gau(U))\ar[r]^-{\id\otimes \tilde{f}_\ast} &V_\sharp \otimes \O(\Gau(U^\prime))
}
\end{flalign}
obtained by composition with \eqref{eqn:tildefastmap}. We define
the underlying graded $\O(\Gau(U^\prime))$-comodule of 
$\AAAA(f)(V_\bullet, \nabla,\Psi)\in \AAAA(U^\prime)$ by
\begin{flalign}\label{eqn:AAAAfobject}
\AAAA(f)\big(V_\bullet,\nabla,\Psi\big)_\sharp \,=\, B_{\hbar\sharp}(f)\otimes_{\bbK[[\hbar]]} V_\sharp
\end{flalign}
and the tensor product $\O(\Gau(U^\prime))$-coaction.
We denote elements of this comodule by symbols like
$\hat{D}\otimes \hat{\chi}\otimes s\in B_{\hbar\sharp}(f)\otimes V_\sharp$ in 
order to stress that also $B_{\hbar\sharp}(f)$ is given by a tensor product \eqref{eqn:Bhbarf}.
Whenever necessary, we shall further write $\hat{\chi}=(\hat{t}_1)_{v^\prime_1}\cdots (\hat{t}_n)_{v^\prime_n}$
in order to indicate in which factors of the tensor product of symmetric algebras \eqref{eqn:Bhbarf}
the element is living. The graded $\O(\Con_G(U^\prime))[[\hbar]]$-module structure
is then given on elements by
\begin{flalign}\label{eqn:AAAfobject2}
a_{e^\prime}\cdot\big(\hat{D}\otimes \hat{\chi}\otimes s\big)\,=\,\begin{cases}
\hat{D}\otimes\hat{\chi}\otimes (a_{e}\cdot s) &~~\text{for } e^\prime = f_{\E}(e) \in f_{\E}(\E)~~,\\
(\hat{a}_{e^\prime} \,\hat{D})\otimes\hat{\chi}\otimes s&~~\text{else}~~,
\end{cases}
\end{flalign}
where the element $e\in \E$ in the first line is unique because $f_{\E}:\E\to\E^\prime$
is injective by Definition \ref{def:digraph} 
and the second line uses the multiplication of differential operators.
The chain differential is obtained in a similar way as the one in 
\eqref{eqn:Bhbardifferential} and it reads as
\begin{flalign}
\nn &\partial\Big(\hat{D} \otimes (\hat{t}_1)_{v^\prime_1}\cdots (\hat{t}_n)_{v^\prime_n}\otimes s\Big)
\,=\, (-1)^n\,\hat{D} \otimes (\hat{t}_1)_{v^\prime_1}\cdots (\hat{t}_n)_{v^\prime_n}\otimes \partial(s)\\
\nn &\qquad\qquad\qquad + \sum_{i=1}^n (-1)^{i-1} \hat{D}\, \widehat{\mu^\ast_{U^\prime}((t_{i})_{v^\prime_i})}
 \otimes  (\hat{t}_1)_{v^\prime_1}\cdots (\check{t_i})_{v^\prime_i} \cdots (\hat{t}_n)_{v^\prime_n} \otimes s\\
 &\qquad\qquad\qquad + \sum_{i=1}^n (-1)^{i-1} \hat{D}\otimes \big[(\hat{t}_1)_{v^\prime_1}\cdots (\hat{t}_{i-1})_{v^\prime_{i-1}},(\hat{\theta}^{b})_{v_i^\prime}\big]\,\widehat{\rho_\g^{}(t_b)(t_i)}_{v^\prime_i}\cdots (\hat{t}_n)_{v^\prime_n}\otimes s\quad, \label{eqn:AAAfobject3}
\end{flalign}
where $\check{-}$ denotes omission of the corresponding factor and $[\,\cdot\,,\,\cdot\,]$ is 
the graded commutator of differential operators.
The flat connection and the $\Psi$-map on the resulting $\O(\Con_G(U^\prime))[[\hbar]]$-dg-module
$\AAAA(f)\big(V_\bullet,\nabla,\Psi\big)$ are given on elements by
\begin{flalign}\label{eqn:AAAfobject4}
\nabla_{t_{e^\prime}}\big(\hat{D}\otimes\hat{\chi}\otimes s\big)\,=\, \begin{cases}
\hat{D}\otimes \hat{\chi} \otimes\nabla_{t_e}(s)&~~\text{for }e^\prime = f_{\E}(e) \in f_{\E}(\E)~~,\\
(\hat{t}_{e^\prime} \,\hat{D})\otimes\hat{\chi}\otimes s&~~\text{else}~~,
\end{cases}
\end{flalign}
and
\begin{flalign}\label{eqn:AAAfobject5}
\Psi_{t_{v^\prime}}\big(\hat{D}\otimes\hat{\chi}\otimes s\big)\,=\, \begin{cases}
(-1)^{\vert \chi\vert} ~\hat{D}\otimes \hat{\chi}\otimes \Psi_{t_{v}}(s)&~~\text{for }v^\prime = 
f_{\V}(v) \in f_{\V}(\V)~~,\\
\hat{D}\otimes(\hat{t}_{v^\prime}\,\hat{\chi})\otimes s&~~\text{else}~~,
\end{cases}
\end{flalign}
where again the element $v\in \V$ is unique because $f_{\V}:\V\to\V^\prime$
is injective by Definition \ref{def:digraph}.
The action of the dg-functor $\AAAA(f) : \AAAA(U)\to \AAAA(U^\prime)$ on morphism complexes
is simply given by taking tensor products $\id\otimes_{\bbK[[\hbar]]} (-)$ to let
morphisms act on the second tensor factor in \eqref{eqn:AAAAfobject}.
\sk

Summing up, we obtain
\begin{propo}\label{prop:AAAApseudofunctor}
The construction above defines a pseudo-functor
$\AAAA : \DD\to \DGCat_{\bbK[[\hbar]]}$ to the category of 
dg-categories over $\bbK[[\hbar]]$. The coherences for compositions
and identities are given by the canonical coherence isomorphisms for tensor products.
\end{propo}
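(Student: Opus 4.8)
The plan is to recognize each dg-functor $\AAAA(f)$ as the quantum counterpart of the induced-module functor from Remark \ref{rem:perfunctoriality}, now built from the local quantizations of Subsection \ref{subsec:localquantization} specialized to the gauge-theoretic data, and then to extract the coherence isomorphisms from the canonical associativity and symmetry isomorphisms of tensor products. First I would verify that $\AAAA(f)$ is a well-defined dg-functor. On objects, the module structure \eqref{eqn:AAAfobject2}, the chain differential \eqref{eqn:AAAfobject3}, the connection \eqref{eqn:AAAfobject4} and the $\Psi$-map \eqref{eqn:AAAfobject5} are assembled by combining the left-multiplication and graded-commutator rules of the quantized algebras on the factors of $B_{\hbar\sharp}(f)$ indexed by $\E^\prime\setminus f_{\E}(\E)$ and $\V^\prime\setminus f_{\V}(\V)$ with the given data on $V_\bullet$; checking that conditions ($\text{i}_{\hbar}$), ($\text{ii}_{\hbar}$) and the $\O(\Gau(U^\prime))$-equivariance hold then reduces to the commutation relations \eqref{eqn:idealn} together with property (ii) of Definition \ref{def:digraph}, which ensures that $\mu^\ast_{U^\prime}$ at a vertex $f_{\V}(v)$ agrees with the push-forward of $\mu^\ast_U$ at $v$. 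The completeness and projectivity conditions of Definition \ref{def:quantizeddgCat} are preserved because $B_{\hbar\sharp}(f)$ is a tensor product of the free-type factors $\mathrm{DiffOp}_\hbar(H[[\hbar]])$ and $(\Sym\,\g_{[-1]})_\sharp$. On morphisms, $\AAAA(f)$ acts by $\id\otimes_{\bbK[[\hbar]]}(-)$, which strictly preserves composition, identities and the compatibility conditions of Definition \ref{def:quantizeddgCat}.

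Next I would construct the coherence data. For composable $\DD$-morphisms $f : U\to U^\prime$ and $g : U^\prime\to U^{\prime\prime}$, injectivity of $g_{\E}$ and $g_{\V}$ yields canonical bijections $\E^{\prime\prime}\setminus (g\circ f)_{\E}(\E)\cong \big(\E^{\prime\prime}\setminus g_{\E}(\E^\prime)\big)\sqcup g_{\E}\big(\E^\prime\setminus f_{\E}(\E)\big)$, and likewise for vertices. These induce a canonical isomorphism of graded $\bbK[[\hbar]]$-modules $B_{\hbar\sharp}(g\circ f)\cong B_{\hbar\sharp}(g)\otimes B_{\hbar\sharp}(f)$, and tensoring with $V_\sharp$ and applying the associator of $\otimes_{\bbK[[\hbar]]}$ gives a natural isomorphism
\[ \AAAA(g)\big(\AAAA(f)(V_\bullet,\nabla,\Psi)\big)\,\cong\,\AAAA(g\circ f)(V_\bullet,\nabla,\Psi) \]
of the underlying graded comodules, while the unitor provides $\AAAA(\id_U)\cong\id_{\AAAA(U)}$.

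The remaining and most delicate step is to check that these reindexing isomorphisms intertwine all the structure maps, after which the pentagon and triangle axioms follow formally. Compatibility with \eqref{eqn:AAAfobject2}, \eqref{eqn:AAAfobject4}, \eqref{eqn:AAAfobject5} and the coactions is immediate, since the ``in the image / else'' case distinctions decompose exactly along the disjoint unions above. The hard part will be compatibility with the chain differential \eqref{eqn:AAAfobject3}: one must show that the $\widehat{\mu^\ast_{U^{\prime\prime}}}$-terms and the $\hat{\theta}$-commutator terms, distributed over the nested factors $B_{\hbar\sharp}(g)\otimes B_{\hbar\sharp}(f)\otimes V_\sharp$, reassemble into the single differential on $B_{\hbar\sharp}(g\circ f)\otimes V_\sharp$. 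This is exactly where locality of the gauge symmetries is used: by \eqref{eqn:gaugemoment} the Gauss constraint at a vertex $v^{\prime\prime}$ of $U^{\prime\prime}$ only involves the edges incident to $v^{\prime\prime}$, and property (ii) of Definition \ref{def:digraph} guarantees that such incident edges are never split across the three blocks, so each moment-map contribution is supported on a single block and matches on both sides. Granting this, the pentagon and triangle coherences for $\AAAA$ reduce to Mac Lane coherence for the symmetric monoidal category of $\bbK[[\hbar]]$-modules, exactly as for the pseudo-functoriality of induced-module functors recorded in Remark \ref{rem:perfunctoriality}.
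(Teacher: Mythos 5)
Your proposal is correct and follows essentially the same route as the paper: the paper states this proposition as a summary of the preceding explicit construction, with the coherences being exactly the canonical tensor-product isomorphisms you describe, and leaves the verification implicit. Your filled-in details---in particular the block decomposition $\E''\setminus(g\circ f)_{\E}(\E)\cong\big(\E''\setminus g_{\E}(\E')\big)\sqcup g_{\E}\big(\E'\setminus f_{\E}(\E)\big)$ inducing $B_{\hbar\sharp}(g\circ f)\cong B_{\hbar\sharp}(g)\otimes B_{\hbar\sharp}(f)$, and the observation that property (ii) of Definition \ref{def:digraph} confines each moment-map term of the differential \eqref{eqn:AAAfobject3} to a single block so that the differentials match under the reindexing isomorphism---are precisely the checks the paper leaves to the reader.
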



\subsection{\label{subsec:factorizationproducts}Prefactorization algebra structure}
The pseudo-functorial structure from Proposition \ref{prop:AAAApseudofunctor}
can be upgraded to a (weak) prefactorization algebra structure on the collection
of dg-categories $\AAAA(U)$ in \eqref{eqn:AAAAU}. As explained in \cite{2AQFT}, prefactorization algebras
with values in (dg-)categories provide a categorification of the concept of
algebraic quantum field theories, which are traditionally assumed to take values 
in algebras instead of in categories. Category-valued prefactorization algebras
also appeared before in the works of Ben-Zvi, Brochier and Jordan \cite{BZBJ,BZBJ2}.
\sk

Before we can describe our prefactorization algebra structure, we have to introduce
the relevant prefactorization operad over the category $\DD$ of directed graphs 
from Definition \ref{def:digraph}. This is a special case of the family of prefactorization operads
that are assigned to so-called {\em orthogonal categories} \cite{2AQFT}, which vastly generalize the 
original prefactorization operad on manifolds introduced by Costello and Gwilliam 
\cite{CostelloGwilliam,CostelloGwilliam2}. 
\begin{defi}
We say that a pair of $\DD$-morphisms
$f_{1} : U_1\to U^\prime \leftarrow U_2 : f_2$ to a common target is {\em orthogonal},
written as $f_1\perp f_2$, if and only if
$f_{1\E}(\E_1)\cap f_{2\E}(\E_2)\,=\,\emptyset$ and 
$f_{1\V}(\V_1)\cap f_{2\V}(\V_2) = \emptyset$.
The {\em prefactorization operad} $\P_{\ovr{\DD}}$ associated with the orthogonal
category $\ovr{\DD} := (\DD,\perp)$ is then defined as follows:
Its objects are all directed graphs $U\in \DD$ and the set of operations
from the tuple $\und{U} := (U_1,\dots,U_n)$ to $U^\prime$ is given by
\begin{flalign}
\P_{\ovr{\DD}}\big(\substack{U^\prime \\ \und{U}}\big)\,:=\,
\Big\{\und{f} := (f_1,\dots,f_n) \in \prod_{i=1}^n \Hom_{\DD}^{}(U_i,U^\prime)\,:\, f_i\perp f_j\text{ for all } i\neq j \Big\}\quad.
\end{flalign}
For the empty tuple $\und{U}=\emptyset$, we set $\P_{\ovr{\DD}}\big(\substack{U^\prime\\ \emptyset}\big) := 
\{\ast\}$ to be a singleton set.
The operadic composition is defined by compositions in $\DD$, the operadic units 
are the identity morphisms in $\DD$ and the permutation groups act via permutations
on the tuples $\und{f} = (f_1,\dots,f_n)$. See also \cite[Definition 2.5]{2AQFT} for a more detailed and 
fully explicit description of $\P_{\ovr{\DD}}$.
\end{defi}

Our desired prefactorization algebra structure on the family of dg-categories
$\AAAA(U)$ is given by a (weak) $\P_{\ovr{\DD}}$-algebra in the symmetric monoidal category
$\DGCat_{\bbK[[\hbar]]}$ of dg-categories over $\bbK[[\hbar]]$. The
symmetric monoidal structure on $\DGCat_{\bbK[[\hbar]]}$ is spelled out
explicitly in e.g.\ \cite[Section 2.3]{Keller}. In short, the tensor product
$\mathcal{C}\otimes \mathcal{D}$ of two dg-categories
$\mathcal{C}$ and $\mathcal{D}$ is the dg-category whose
objects are pairs $(c,d)$ of objects $c\in\mathcal{C}$ and $d\in\mathcal{D}$
and whose morphism complexes 
\begin{flalign}
\hom_{\mathcal{C}\otimes\mathcal{D}}^{}\big((c,d), (c^\prime,d^\prime)\big)\,:=\,
\hom_{\mathcal{C}}^{}(c,c^\prime)\otimes \hom_{\mathcal{D}}^{}(d,d^\prime)
\end{flalign}
are given by tensor products of cochain complexes. The monoidal unit is the dg-category
with a single object and morphism complex $\bbK[[\hbar]]$.
So our task is to construct, for each operation 
$\und{f}\in \P_{\ovr{\DD}}\big(\substack{U^\prime \\ \und{U}}\big)$
in the prefactorization operad, a dg-functor
\begin{flalign}\label{eqn:AAAAundf}
\AAAA(\und{f})\,:\,\bigotimes_{i=1}^n \AAAA(U_i)~\longrightarrow~\AAAA(U^\prime)\quad,
\end{flalign}
which in the special case of an empty tuple $\und{U}=\emptyset$ amounts
to specifying an object in $\AAAA(\ast)\in \AAAA(U^\prime)$. 
The construction of the dg-functors $\AAAA(\und{f})$ (and also of the objects 
$\AAAA(\ast)\in \AAAA(U^\prime)$) is given by a relatively straightforward generalization 
of our dg-functors $\AAAA(f)$ constructed at the end of Subsection \ref{subsec:quantization}.
Similarly to the graded $\bbK[[\hbar]]$-module in \eqref{eqn:Bhbarf}, let us define
\begin{flalign}
B_{\hbar\sharp}(\und{f})\, :=\, \!\!\!\bigotimes_{e^\prime\in\E^\prime\setminus \bigsqcup_{i=1}^n f_{\E}(\E_i)} \!\!\!\mathrm{DiffOp}_\hbar(H[[\hbar]])\, \otimes \!\!\!\bigotimes_{v^\prime\in \V^{\prime}\setminus \bigsqcup_{i=1}^n f_{\V}(\V_i)} \!\!\! \big(\Sym\,\g_{[-1]}\big)_\sharp \quad.
\end{flalign}
For the special case of an empty tuple, we define $B_{\hbar\sharp}(\ast)$ 
to be tensor product over all edges and vertices, in analogy to the object constructed in
Example \ref{ex:rank1module}. The action of the dg-functor \eqref{eqn:AAAAundf} on an object
$\{(V_{i \bullet},\nabla_i,\Psi_i)\}_{i=1}^n\in \bigotimes_{i=1}^n \AAAA(U_i)$ is then
defined as in \eqref{eqn:AAAAfobject} by
\begin{flalign}\label{eqn:AAAAundfobject}
\AAAA(\und{f})\big(\{(V_{i \bullet},\nabla_i,\Psi_i)\}_{i=1}^n\big)_\sharp\,:=\,
 B_{\hbar\sharp}(\und{f})\otimes_{\bbK[[\hbar]]} V_{1\sharp}\otimes_{\bbK[[\hbar]]} \cdots\otimes_{\bbK[[\hbar]]}  V_{n\sharp}\quad,
\end{flalign}
which we endow with the obvious tensor product $\O(\Gau(U^\prime))$-coaction.
For an empty tuple, this should be read as $\AAAA(\ast)_\sharp := B_{\hbar\sharp}(\ast)$,
i.e.\ there are no tensor products with modules $V_{i\sharp}$ in this case.
The $\O(\Con_{G}(U^\prime))[[\hbar]]$-dg-module structure is given by the obvious generalization
of \eqref{eqn:AAAfobject2} and \eqref{eqn:AAAfobject3}. The flat connection
and $\Psi$-map on $\AAAA(\und{f})\big(\{(V_{i \bullet},\nabla_i,\Psi_i)\}_{i=1}^n\big)$ 
are defined in complete analogy to \eqref{eqn:AAAfobject4} and \eqref{eqn:AAAfobject5}.
Note that for the empty tuple this yields precisely the distinguished object constructed
in Examples \ref{ex:globalrank1module} and  \ref{ex:rank1module}.
The action of the dg-functor $\AAAA(\und{f}) : \Motimes_{i=1}^n\AAAA(U_i)\to \AAAA(U^\prime)$ on morphism complexes
is again simply given by taking tensor products $\id\otimes_{\bbK[[\hbar]]} (-)$ to let
morphisms act on the tensor factors corresponding to the $V_{i\sharp}$'s in \eqref{eqn:AAAAundfobject}.
\sk

Using the canonical coherence isomorphisms associated with tensor products,
one easily confirms that this construction yields a weak $\P_{\ovr{\DD}}$-algebra
(or, in other words, a pseudo-multifunctor) with values in $\DGCat_{\bbK[[\hbar]]}$.
The relevant axioms are analogous to the ones that are spelled out in \cite[Definition 3.3 and Remark 3.4]{2AQFT}.
Summing up, we have shown
\begin{propo}
The construction above defines a weak prefactorization algebra (i.e.\ a pseudo-multifunctor)
$\AAAA \in \Alg^{\mathrm{weak}}_{\P_{\ovr{\DD}}}\big(\DGCat_{\bbK[[\hbar]]}\big)$ 
on the orthogonal category of directed graphs $\ovr{\DD}=(\DD,\perp)$
with values in the symmetric monoidal category of dg-categories over $\bbK[[\hbar]]$. 
\end{propo}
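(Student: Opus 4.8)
The plan is to check that the assignment $U\mapsto\AAAA(U)$, together with the dg-functors $\AAAA(\und{f})$ and the distinguished objects $\AAAA(\ast)$ built above, satisfies the axioms of a weak $\P_{\ovr{\DD}}$-algebra (pseudo-multifunctor) as formulated in \cite[Definition 3.3 and Remark 3.4]{2AQFT}. Beyond the data already in hand, this amounts to exhibiting coherence $2$-isomorphisms compatible with operadic composition, with the operadic units and with the symmetric group actions, and then verifying the associativity, unit and equivariance coherence axioms. First I would record the one structural fact that drives everything: by orthogonality $f_i\perp f_j$, the images $f_{i\E}(\E_i)\subseteq\E^\prime$ and $f_{i\V}(\V_i)\subseteq\V^\prime$ are pairwise disjoint, so each edge and vertex of the target $U^\prime$ receives data from at most one input graph. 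Hence the underlying graded $\O(\Gau(U^\prime))$-comodule \eqref{eqn:AAAAundfobject} is a tensor product indexed by $\E^\prime\sqcup\V^\prime$ whose factors are determined \emph{locally}, and the module structure \eqref{eqn:AAAfobject2}, chain differential \eqref{eqn:AAAfobject3}, connection \eqref{eqn:AAAfobject4}, map $\Psi$ \eqref{eqn:AAAfobject5} and coaction are all specified factor-by-factor over single edges and vertices.

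Granting this locality, I would construct the coherence isomorphisms directly from the symmetric monoidal coherences of $(\DGCat_{\bbK[[\hbar]]},\otimes)$. For a composite $\und{f}\circ(\und{g}_1,\dots,\und{g}_n)$ in $\P_{\ovr{\DD}}$, both $\AAAA\big(\und{f}\circ(\und{g}_1,\dots,\und{g}_n)\big)$ and $\AAAA(\und{f})\circ\big(\Motimes_{i=1}^{n}\AAAA(\und{g}_i)\big)$ evaluate to the same tensor product over $\E^\prime\sqcup\V^\prime$, differing only in the bracketing and ordering of tensor factors and in how the complementary ``vacuum'' factors $\mathrm{DiffOp}_\hbar(H[[\hbar]])$ and $(\Sym\,\g_{[-1]})_\sharp$ of \eqref{eqn:Bhbarf} are split between the inner and outer stages; the associativity and symmetry isomorphisms of $\otimes$ then provide the comparison. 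Operadic units are trivial, since for $\mathrm{id}_U$ the complement is empty and $B_{\hbar\sharp}(\mathrm{id}_U)$ is the monoidal unit, giving $\AAAA(\mathrm{id}_U)\cong\mathrm{id}_{\AAAA(U)}$; equivariance under a permutation of $\und{f}=(f_1,\dots,f_n)$ is realized by the symmetric braiding permuting the corresponding tensor factors.

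The step I expect to be the main obstacle is purely a matter of careful bookkeeping rather than of any new idea: verifying that these canonical reassociation and reordering maps are genuine morphisms of quantized objects, i.e.\ that they intertwine not only the comodule structures but also the chain differential \eqref{eqn:AAAfobject3} (whose terms involve the moment map $\mu^{\ast}_{U^\prime}$ and graded commutators with the $\hat{\theta}$'s), the flat connection $\nabla$ and the map $\Psi$. Because each of these operations is defined factor-wise over a single edge or vertex, the check reduces, edge by edge and vertex by vertex, to associativity of the multiplication of differential operators in $\mathrm{DiffOp}_\hbar(H[[\hbar]])$ and of the symmetric algebras $(\Sym\,\g_{[-1]})_\sharp$, with the Koszul signs tracked by the braiding. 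Once this is in place, the remaining pentagon, unit and equivariance coherence axioms hold because every coherence isomorphism of $\AAAA$ is, by construction, the image of a symmetric monoidal coherence isomorphism of $(\DGCat_{\bbK[[\hbar]]},\otimes)$, so Mac Lane coherence for the latter yields the axioms for the former. This would complete the proof that $\AAAA\in\Alg^{\mathrm{weak}}_{\P_{\ovr{\DD}}}\big(\DGCat_{\bbK[[\hbar]]}\big)$.
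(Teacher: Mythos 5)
Your proposal is correct and follows essentially the same route as the paper: the paper's (very brief) argument likewise takes all coherence $2$-isomorphisms to be the canonical coherence isomorphisms for tensor products in $\DGCat_{\bbK[[\hbar]]}$ and then verifies the weak $\P_{\ovr{\DD}}$-algebra axioms of \cite[Definition 3.3 and Remark 3.4]{2AQFT} directly. Your supplementary observations---that orthogonality together with property (ii) of Definition \ref{def:digraph} makes the construction local over $\E^\prime\sqcup\V^\prime$ (so the moment-map and commutator terms in the differential never mix factors coming from different inputs), and that the intertwining check reduces factor-wise to associativity of multiplication of differential operators---just make explicit what the paper compresses into ``one easily confirms.''
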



\section*{Acknowledgments}
The work of M.B.\ is fostered by 
the National Group of Mathematical Physics (GNFM-INdAM (IT)). 
A.S.\ gratefully acknowledges the financial support of 
the Royal Society (UK) through a Royal Society University 
Research Fellowship (URF\textbackslash R\textbackslash 211015)
and the Enhancement Awards (RGF\textbackslash EA\textbackslash 180270, 
RGF\textbackslash EA\textbackslash 201051 and RF\textbackslash ERE\textbackslash 210053).


\end{document}